\documentclass[11pt, a4paper]{article}
\usepackage{amsmath, amssymb, amsthm, amsfonts, mathrsfs}
\usepackage{geometry}
\usepackage{float}
\usepackage{hyperref}
\usepackage{authblk}
\usepackage{graphicx}
\usepackage{array}
\usepackage{titlesec}
\usepackage{needspace}
\usepackage[utf8]{inputenc}
\usepackage[T1]{fontenc}
\usepackage{newtxtext, newtxmath} 
\usepackage{enumitem}
\usepackage{braket}
\usepackage{tikz}
\usetikzlibrary{cd, positioning, arrows.meta}
\usepackage{mathtools}
\usepackage{xcolor}
\usepackage{comment}
\usepackage[backend=biber,style=numeric,sorting=none,sortcites=true,giveninits=true,maxbibnames=10,minbibnames=10,doi=true,isbn=false,url=false]{biblatex}
\DeclareFieldFormat[article]{volume}{\mkbibbold{#1}}

\hypersetup{
    colorlinks=true,
    linkcolor=blue,
    filecolor=magenta,      
    urlcolor=cyan,
    citecolor=red,
}

\theoremstyle{definition}
\newtheorem{theorem}{Theorem}[section]
\newtheorem{lemma}[theorem]{Lemma}
\newtheorem{proposition}[theorem]{Proposition}
\newtheorem{corollary}[theorem]{Corollary}
\newtheorem{definition}[theorem]{Definition}
\newtheorem{example}[theorem]{Example}
\newtheorem{remark}[theorem]{Remark}

\newcommand{\mf}[1]{\mathfrak{#1}}

\DeclareMathOperator{\Tr}{Tr}
\DeclareMathOperator{\ad}{ad}
\DeclareMathOperator{\Ad}{Ad}

\DeclareMathOperator{\Span}{span}
\DeclareMathOperator{\Ker}{Ker}

\DeclareMathOperator{\cone}{cone}
\DeclareMathOperator{\aff}{\mathfrak{aff}}
\DeclareMathOperator{\Lie}{Lie}

\newcommand{\Lcal}{\mathcal{L}}
\newcommand{\Hcal}{\mathcal{H}}
\newcommand{\Bcal}{\mathcal{B}}

\newcommand{\Ecal}{\mathcal{E}}

\newcommand{\Dcal}{\mathcal{D}}

\newcommand{\g}{\mathfrak{g}}
\newcommand{\s}{\mathfrak{s}}
\newcommand{\rr}{\mathfrak{r}}

\newcommand{\gl}{\mathfrak{gl}}
\newcommand{\sll}{\mathfrak{sl}}

\newcommand{\su}{\mathfrak{su}}
\newcommand{\so}{\mathfrak{so}}
\DeclareMathOperator{\rad}{rad}

\newcommand{\C}{\mathbb{C}}
\newcommand{\R}{\mathbb{R}}

\newcommand{\WGKLS}{\mathcal{W}_{\text{GKLS}}}

\newcommand{\Wrad}{W_{\rr}}
\newcommand{\Cquot}{C_{\text{quot}}}

\newcommand{\levi}{\ltimes} 
\newcommand{\gLK}{\mf{g}_{LK}}
\newcommand{\Htr}{\Hcal_{0}} 

\begin{document}

\title{\textbf{Lie-Wedge Stratification and Pure-State Stabilizability in Single-Channel Qubit Control}}

\author{Yibin Wang}
\affil[1]{Graduate School of Mathematics, Nagoya University, Nagoya, 464-8601, Aichi, Japan.}
\affil[ ]{E-mail: yibinw0210@gmail.com}

\date{}

\maketitle

\begin{abstract}
We determine when qubit controls admit an effective generator that drives every initial state to the same pure state. Lie wedges enforce nonnegative weights for dissipative generators and thereby preserve the distinction between reversible control and irreversible dissipation lost under Lie-algebra closure. For drift-free qubits with Hamiltonian controls available with either sign and at most one independently actuated dissipator generated by a traceless jump operator, we classify all nontrivial systems into thirteen structural strata. Explicit tests using the control algebra and channel data assign each system to a stratum. Necessary and sufficient conditions determine whether its Lie wedge contains a stabilizing generator and, when it does, the minimum number of control-rotated dissipative orientations required. These conditions distinguish stabilizable and non-stabilizable systems even when they share the same generated Lie algebra and stratum. In an exceptional stratum, one dissipative orientation is insufficient even with the available Hamiltonian controls, whereas averaging two orientations of the same channel produces a stabilizing generator.
\end{abstract}

\section{Introduction}
\label{sec:introduction}

We study when a controlled dissipative qubit admits an effective generator with a unique globally attractive pure state. In closed quantum control, the dynamical Lie algebra records the directions generated by the available Hamiltonians and their commutators. For the bidirectional Hamiltonian controls considered here, this closure retains the reversible coherent directions. Markovian dissipation changes the geometry: a valid Gorini--Kossakowski--Lindblad--Sudarshan (GKLS) generator need not remain valid after its sign is reversed. Passing directly to the generated Lie algebra therefore erases the distinction between reversible and irreversible directions. Two control specifications may generate the same Lie algebra while supporting different dissipative directions and different pure-state stabilization capabilities.

To retain the one-sided constraint, we attach two objects to a specification \(G\) that records independently actuable generators and their allowed signs. The system algebra \(\g(G)\) records bracket generation. The minimal closed Lie wedge \(W(G)\) is the smallest closed convex cone in \(\g(G)\) containing the nonnegative multiples of the declared generators and invariant under conjugation by its reversible part, the edge. A Levi decomposition separates the solvable radical from a semisimple factor and organizes the wedge data across these components. The radical is the largest ideal whose iterated commutator algebras eventually vanish. The semisimple factor has no nonzero solvable ideals. This poses a structural question that the Lie algebra alone cannot answer: which Levi data can occur for a given physical family, and how much control information do those data retain?

Lie-semigroup and Lie-wedge methods provide the geometric framework for controlled Markovian dynamics \cite{dirr2009lie,o2025lie}. O'Meara, Dirr, and Schulte-Herbr\"uggen computed explicit global wedges for standard coherently controlled unital qubit channels \cite{OMeara2011IllustratingTG}. Complementary work uses fast unitary reduction to study time-optimal cooling, heating, and stabilizable states of a Markovian qubit \cite{malvetti2024optimal}. The Levi-type decomposition and reconstruction used here are established in Ref.~\cite[Sec.~4]{wangthakkar2026geometry}. We apply those tools to the generator-labelled wedge of the full state dynamics, retaining the independent actuation of coherent and dissipative rays.

Pure-state stabilization supplies a control task for the classification. O'Meara's purely dissipative criteria distinguish pure fixed points from unique fixed points \cite[Cor.~3.2.2 and Prop.~3.2.2]{o2025lie}. Ticozzi, Lucchese, Cappellaro, and Viola treat time-independent Hamiltonian stabilization of fixed Lindblad noise under general and restricted control resources \cite{ticozzi2012hamiltonian}. Under fast full unitary control, Malvetti, vom Ende, Dirr, and Schulte-Herbr\"uggen show that a unique attractive pure fixed point exists exactly when, for each choice of Lindblad terms, the terms have a common right eigenvector that is not a common left eigenvector \cite[Thm.~4.7]{malvetti2024reachability}. Our restricted single-channel qubit analysis gives a fixed-axis scalar criterion, locates positive and negative cases within the thirteen strata, and determines the minimum number of control-rotated dissipative directions required.

\Needspace{5\baselineskip}
\paragraph{Main results.}
The classification gives a finite map of the possible control structures. For the drift-free qubit family with bidirectional Hamiltonian controls and at most one independently actuated dissipator ray from a traceless jump, the coherent-control algebra and four channel discriminants assign every nontrivial specification to one of thirteen Levi-data strata, realizing all four Levi types (Theorem~\ref{thm:qubit_classification}). In every stratum, the reversible directions are exactly the coherent-control algebra. Each classified wedge is also the tangent wedge of its generated semigroup, connecting the infinitesimal geometry to the closed reachable dynamics. Corollary~\ref{cor:qubit_globality_semialgebra} identifies four Lie-semialgebra strata: locally, composing flows from these wedges still gives a logarithmic generator in the same wedge.

The stabilization atlas turns this structural map into a state-preparation test (Corollary~\ref{cor:qubit_pure_gas_atlas}). It asks whether the wedge contains a generator that drives every initial state to the same pure state. Five strata contain both stabilizable and non-stabilizable specifications, distinguished by additional channel data. For the interior fixed-axis branches, Theorem~\ref{thm:axial_pure_gas} resolves this dependence through a single scalar condition and determines the minimum number of control-rotated dissipative directions needed.

Changing the relative control axis can change pure-state stabilizability even when the generated algebra and stratum remain the same (Example~\ref{ex:q9_control_contrast}). Averaging two dissipative orientations in the wedge can also yield a pure-state attractor when no single orientation suffices (Example~\ref{ex:q6_half_turn}).

Section~\ref{sec:control_consequences} introduces the qubit model through these examples. Section~\ref{sec:structural_recap} defines the structural data, and Section~\ref{sec:qubit_classification} gives the thirteen-stratum classification, its orbit geometry, and globality. Section~\ref{sec:pure_stabilization} then derives the stabilization criterion and atlas. Section~\ref{sec:physical_consequences} explains what the strata determine and which questions require finer data. The appendices retain the general wedge construction, normalization and structural proofs, Levi reconstruction, and parameter-by-parameter classification calculations.

\Needspace{8\baselineskip}
\section{Qubit Model and Motivating Control Examples}
\label{sec:control_consequences}

Can two control specifications with the same generated Lie algebra differ in pure-state stabilizability? Can averaging dissipative orientations stabilize a pure state that no single orientation can stabilize? The examples below answer these questions for a drift-free qubit with bidirectional Hamiltonian control about a unit axis $n$ and an independently actuated dissipator $\Dcal_L$ from one traceless jump. Each declared ray can be switched off and scaled freely. The Hamiltonian amplitude ranges over all real values, and no additional drift generator is imposed.

Let $\sigma=(\sigma_x,\sigma_y,\sigma_z)$ be the Pauli vector and write the state as $\rho=(I+r\cdot\sigma)/2$. The Bloch ball is $|r|\leq1$, with pure states on $|r|=1$. For a jump $L$, the dissipator is
\[
\Dcal_L(\rho)=L\rho L^\dagger-\tfrac12\{L^\dagger L,\rho\},
\]
where $\{A,B\}=AB+BA$. Its Bloch equation is $\dot r=Qr+b$. We identify a generator with this affine pair $(Q,b)$, so a coherent rotation $J_n$ denotes $(J_n,0)$. In a right-handed orthonormal principal frame $(\hat\alpha,\hat\beta,\hat\nu)$, positive channel scaling lets us write
\[
L=\sigma_{\hat\alpha}+iq\,\sigma_{\hat\beta},
\qquad 0\leq q\leq1.
\]
Here $\sigma_{\hat\alpha}=\hat\alpha\cdot\sigma$, and similarly for $\hat\beta$. At $q=0$, choose any orthonormal completion of $\hat\alpha$. Corollary~\ref{cor:single_jump_principal_form} establishes this principal form. The Hamiltonian $H=(h/2)n\cdot\sigma$ gives the Bloch rotation $hJ_n$, where $J_n r=n\times r$. The dissipator has Bloch data
\[
Q=\operatorname{diag}\bigl(-2q^2,-2,-2(1+q^2)\bigr),
\qquad b=4q\,\hat\nu.
\]
For a control rotation $R$ about $n$, the conjugate dissipator $\operatorname{Ad}_R\Dcal_L$ has Bloch data $(RQR^T,Rb)$. Its nonnegative multiples form one dissipative orbit ray.
Write $x=|n\cdot\hat\alpha|^2$, $y=|n\cdot\hat\beta|^2$, and $\zeta=|n\cdot\hat\nu|^2$. For $0<q<1$, Theorem~\ref{thm:axial_pure_gas} shows that $W(G)$ contains a generator with pure-state global asymptotic stability (GAS), meaning convergence of every initial density operator to the same pure stationary state in trace norm, exactly when
\begin{equation}
\label{eq:control_task_discriminant}
\Delta:=q^2x-(1-q^2)\zeta=0.
\end{equation}

\begin{example}[The same Lie algebra with different stabilization capabilities]
\label{ex:q9_control_contrast}
Fix $q=1/2$ and choose the two control axes
\[
n_+=\left(\sqrt{\frac3{10}},\sqrt{\frac35},\sqrt{\frac1{10}}\right),
\qquad
n_-=\left(\frac1{\sqrt2},0,\frac1{\sqrt2}\right).
\]
Both specifications generate the same algebra $\gl(3,\R)\ltimes\R^3$ and share the same channel. Table~\ref{tab:qubit_strata} assigns both to stratum $\mathsf{Q9}$. Their relative control axes give the following data.
\begin{center}
\begin{tabular}{c c c c}
\hline
axis & $(x,y,\zeta)$ & $\Delta$ & pure-state GAS in $W(G)$\\
\hline
$n_+$ & $(3/10,3/5,1/10)$ & $0$ & one orbit ray suffices\\
$n_-$ & $(1/2,0,1/2)$ & $-1/4$ & absent\\
\hline
\end{tabular}
\end{center}
For the positive case, the generator $\Dcal_L+(\sqrt5/2)J_{n_+}$ fixes the pure Bloch vector $r_+=(\sqrt3/2,0,1/2)$ and attracts every initial state. For the negative case, the scalar criterion excludes every pure-state GAS generator in the wedge. The common Lie algebra and Levi-data stratum therefore leave the task-relevant control-axis orientation unresolved.
\end{example}

\begin{example}[A pure-state attractor from two orientations]
\label{ex:q6_half_turn}
Take $q=1/\sqrt2$ and $n=(\hat\alpha+\hat\nu)/\sqrt2$. Table~\ref{tab:qubit_strata} assigns this specification to stratum $\mathsf{Q6}$. In this frame,
\[
Q=\operatorname{diag}(-1,-2,-3),\qquad b=2\sqrt2\,\hat\nu.
\]
At the pure vector $r_\ast=n$, the dissipative force is $f_1=Qn+b=(-\hat\alpha+\hat\nu)/\sqrt2$. The half-turn $R_\pi=2nn^T-I$ fixes $n$ and sends this force to $f_2=-f_1$. Thus
\begin{equation}
\label{eq:q6_explicit_average}
X_\ast=\frac12\left(\Dcal_L+\Ad_{R_\pi}\Dcal_L\right),
\qquad
\bar Q=-2I,\qquad \bar b=2n.
\end{equation}
Its trajectories satisfy
\[
\dot r=-2(r-n),\qquad
r(\tau)=n+e^{-2\tau}\bigl(r(0)-n\bigr).
\]
The average has a unique globally attractive pure state. A single orbit ray cannot stabilize it: the axial Hamiltonian force vanishes at $n$, while each individual dissipative force is nonzero. Theorem~\ref{thm:axial_pure_gas} proves that no other pure target admits a one-ray stabilizer in this stratum.
\end{example}

\begin{figure}[H]
\centering
\begin{tikzpicture}[>=Stealth, font=\small]
\begin{scope}[scale=1.65]
\draw[gray] (0,0) circle (1);
\draw[->,gray] (-1.18,0)--(1.24,0) node[right] {$\hat\alpha$};
\draw[->,gray] (0,-1.18)--(0,1.24) node[above] {$\hat\nu$};
\coordinate (targetA) at (0.70710678,0.70710678);
\draw[dotted] (0,0)--(targetA);
\fill (targetA) circle (1.8pt) node[above right] {$n$};
\draw[->,very thick,blue!70!black] (targetA)--(0.40710678,1.00710678)
  node[above left] {$f_1$};
\draw[->,very thick,orange!85!black,dashed] (targetA)--(1.00710678,0.40710678)
  node[below right] {$f_2=-f_1$};
\node at (0,-1.55) {(a) Forces at the common pure target};
\end{scope}
\begin{scope}[xshift=7.0cm,scale=1.65]
\draw[gray] (0,0) circle (1);
\draw[->,gray] (-1.18,0)--(1.24,0) node[right] {$\hat\alpha$};
\draw[->,gray] (0,-1.18)--(0,1.24) node[above] {$\hat\nu$};
\coordinate (targetB) at (0.70710678,0.70710678);
\fill (targetB) circle (1.8pt) node[above right] {$n$};
\foreach \px/\py in {0/0,-0.55/0.12,0.4/-0.5} {
  \fill[gray] (\px,\py) circle (1.2pt);
  \coordinate (nearB) at ({0.2*\px+0.8*0.70710678},{0.2*\py+0.8*0.70710678});
  \draw[blue!70!black,thick] (nearB)--(targetB);
  \draw[->,blue!70!black,thick] (\px,\py)--(nearB);
}
\node at (0,-1.55) {(b) Convergence under the average};
\end{scope}
\end{tikzpicture}
\caption{The $\hat\alpha$--$\hat\nu$ section of the Bloch ball for $\mathsf{Q6}$. The two control-conjugate dissipators exert opposite tangential forces at $n$. Their equal-weight average gives the effective flow $\dot r=-2(r-n)$. Arrows in (b) show the trajectory direction. Convergence to $n$ is asymptotic.}
\label{fig:q6_force_average}
\end{figure}

\begin{remark}[Physical realization of orbit averages]
\label{rem:pure_gas_implementation}
An orbit-ray statement concerns a control-conjugate of the dissipator inside
the Lie wedge. A positive sum of several orbit rays is an effective generator
in the fast-switching or Trotter closure. Corollary~\ref{cor:qubit_globality_semialgebra}
places both constructions in the tangent wedge of the closed reachable
semigroup. Direct implementation as a time-independent laboratory Liouvillian
additionally requires the effective dissipative directions to be independently
addressable.
\end{remark}

\Needspace{8\baselineskip}
\section{Structural Framework}
\label{sec:structural_recap}

The examples distinguish stabilizability within a common Lie algebra and the number of dissipative orientations needed for stabilization. We now define the Lie-wedge and Levi data used to classify these control specifications. The construction of the minimal wedge and the general dissipation identities are deferred to Appendix~\ref{app:wedge_construction}. The definitions below suffice to state the qubit classification.

\subsection{Lie-Wedge Conventions}

We fix the convex-geometric notation used below. A cone $W$ in a real vector space $V$ is closed under multiplication by non-negative scalars. It is convex when it is also closed under addition. A wedge is a closed convex cone. Its edge $E(W)=W\cap(-W)$ is its largest vector subspace. The wedge is pointed when $E(W)=\{0\}$ and generating when $W-W=V$.
For matrix generators, the Lie bracket is $[X,Y]=XY-YX$. The infinitesimal adjoint action is $\ad(X)Y=[X,Y]$, while group conjugation is $\Ad_gX=gXg^{-1}$. These actions satisfy $\Ad_{e^X}=e^{\ad X}$.

\begin{definition}[Lie Wedge]
A closed convex cone $W$ in a Lie algebra $\g$ is a Lie wedge if it satisfies the invariance condition:
\begin{equation}
\label{eq:lie_wedge_condition}
e^{\ad(X)}(W) \subseteq W \quad \text{for all } X \in E(W).
\end{equation}
\end{definition}

\begin{definition}[Lie semialgebra]
\label{def:lie_semialgebra}
A Lie wedge $W\subseteq\g$ is a Lie semialgebra if there is an open neighborhood $B$ of $0$ on which the Baker--Campbell--Hausdorff product $X\star Y=\log(e^Xe^Y)$ is defined and
\[
(W\cap B)\star(W\cap B)\subseteq W.
\]
\end{definition}
This is local closure under logarithms of composed flows, stronger than edge invariance alone \cite[Sec.~1.1, Eq.~(1.5)]{o2025lie}.

\subsection{Ambient Affine Algebra}
Fix a finite-dimensional Hilbert space $\Hcal$ of dimension $d$. Write $\Bcal(\Hcal)$ for its linear operators and $\Htr$ for the real vector space of traceless Hermitian operators. A Hermiticity-preserving map sends Hermitian operators to Hermitian operators. A trace-annihilating generator satisfies $\Tr\Lcal(A)=0$ for every $A$. Its exponential is trace preserving.
\begin{definition}[Lindblad Kossakowski Lie Algebra $\g_{LK}$]
The Lie algebra $\g_{LK}$ is the real Lie algebra consisting of all Hermiticity-preserving and trace-annihilating superoperators on $\Bcal(\Hcal)$.
\end{definition}
\begin{theorem}[Structure of $\g_{LK}$ (affine isomorphism due to O'Meara \cite{o2025lie})]
\label{thm:structure_glk}
The Lindblad Kossakowski Lie algebra $\g_{LK}$ is isomorphic to the Lie algebra of the affine group acting on the space of traceless Hermitian operators,
\begin{equation}
\g_{LK} \cong \aff(\Htr) = \gl(\Htr) \ltimes \Htr,
\end{equation}
where the translational part $\Htr$ is an abelian ideal \cite{o2025lie}.
\end{theorem}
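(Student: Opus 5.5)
The plan is to use the Bloch-type coordinates already introduced for the qubit and show that every element of $\g_{LK}$ is exactly an affine map on $\Htr$. Fix the orthogonal decomposition $\Bcal(\Hcal)_{\mathrm{herm}}=\R I\oplus\Htr$ with respect to the Hilbert--Schmidt inner product. A Hermiticity-preserving superoperator is determined by its restriction to Hermitian operators, because $\Bcal(\Hcal)=\mathrm{Herm}\oplus i\,\mathrm{Herm}$ and the map is complex linear. So it suffices to describe real-linear maps on $\mathrm{Herm}\cong\R\oplus\Htr$ that annihilate the trace.

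First, I will write such a map $\Lcal$ in block form relative to $\R I\oplus\Htr$:
\[
\Lcal(cI/d+A)=\bigl(\alpha c+\phi(A)\bigr)I/d+\bigl(c\,b+Q A\bigr),
\]
with $\alpha\in\R$, $\phi\in\Htr^*$, $b\in\Htr$ and $Q\in\gl(\Htr)$. The trace-annihilation condition $\Tr\Lcal(X)=0$ for all $X$ forces $\alpha=0$ and $\phi=0$. Hence $\Lcal\mapsto(Q,b)$ is a linear bijection from $\g_{LK}$ onto $\gl(\Htr)\oplus\Htr$. Conversely, every pair $(Q,b)$ defines a Hermiticity-preserving, trace-annihilating map by the same formula, extended complex-linearly.

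Next, I will check the bracket. Restricting to the affine slice of trace-one operators $\rho=I/d+A$, we get $\Lcal(\rho)=QA+b$, which is the affine vector field $A\mapsto QA+b$. Composing block matrices with zero top row gives $[(Q_1,b_1),(Q_2,b_2)]=([Q_1,Q_2],\,Q_1b_2-Q_2b_1)$, which is exactly the bracket of $\aff(\Htr)=\gl(\Htr)\ltimes\Htr$. Equivalently, I can embed into $\gl(\R\oplus\Htr)$ by the matrices $\begin{pmatrix}0&0\\ b&Q\end{pmatrix}$; this is the standard faithful representation of the affine algebra, and commutators compute directly. Finally, pairs $(0,b)$ commute with each other and $[(Q,0),(0,b)]=(0,Qb)$, so $\Htr$ is an abelian ideal.

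The only mildly delicate step is the reduction from complex superoperators on $\Bcal(\Hcal)$ to real maps on Hermitian operators. I would handle it by noting that Hermiticity preservation means $\Lcal(X^\dagger)=\Lcal(X)^\dagger$, so $\Lcal$ is the complexification of its Hermitian restriction. This correspondence is bijective and bracket-preserving, and the rest is routine block computation.
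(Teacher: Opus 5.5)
Your proof is correct and follows the route the paper indicates. The paper defers the proof to O'Meara and records only the generalized Bloch picture $\rho=I/d+x$, $\dot x=A_\Lcal x+\tfrac1d\Lcal(I)$, together with the affine matrix embedding and the bracket $([A,A'],Ab'-A'b)$; this is your block decomposition of $\mathrm{Herm}=\R I\oplus\Htr$, with the blocks listed in the opposite order, and your complexification step and the observation that trace annihilation kills the $\R I$ row supply the details the paper leaves to the citation.
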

The affine picture is the generalized Bloch representation. Writing a trace-one state as $\rho=I/d+x$, with $x\in\Htr$, gives
\[
\dot x=A_\Lcal x+\frac{1}{d}b_\Lcal,
\qquad b_\Lcal=\Lcal(I).
\]
Thus $b_\Lcal$ is the non-unital translation term and is nonzero exactly when the maximally mixed state is not stationary under the generator.
\begin{remark}[Non-unitality]
Non-unital drift alone does not imply cooling or purification. Those properties also depend on the state, dynamics, and, for cooling, the energy ordering.
\end{remark}
\subsection{The GKLS Wedge}
\label{sec:structure_gkls}

A quantum dynamical semigroup (QDS) is a strongly continuous one-parameter semigroup of completely positive, trace-preserving maps on $\Bcal(\Hcal)$. Complete positivity requires positivity also after tensoring with the identity map on any finite-dimensional ancillary system. The GKLS theorem characterizes its generators \cite{gorini1976completely, lindblad1976generators}. The Kossakowski form separates the Hamiltonian part from the positive dissipative coefficients.

\begin{theorem}[GKLS Theorem (Kossakowski Matrix Form)]
A linear map $\Lcal \in \g_{LK}$ is the generator of a QDS if and only if, in a Hilbert--Schmidt orthonormal basis $\{F_j\}_{j=1}^{d^2-1}$ for $\Htr$ with $\Tr(F_i^\dagger F_j)=\delta_{ij}$, it has the form
\begin{equation}
\label{eq:gkls_kossakowski}
\Lcal(\rho) = -i[H, \rho] + \sum_{j,k=1}^{d^2-1} K_{jk} \left( F_j \rho F_k^\dagger - \frac{1}{2} \{F_k^\dagger F_j, \rho\} \right),
\end{equation}
where $H$ is a Hermitian operator (Hamiltonian), and the Kossakowski matrix $K$ is positive semidefinite ($K \ge 0$).
\end{theorem}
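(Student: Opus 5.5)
The plan is the standard proof of the GKLS theorem in Kossakowski form, organized as two implications. Throughout, $\{F_j\}_{j=1}^{d^2-1}$ is the given orthonormal basis of $\Htr$, completed to a Hilbert--Schmidt orthonormal basis of $\Bcal(\Hcal)$ by $F_0=I/\sqrt d$.

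\emph{Sufficiency.} Let $\Lcal$ have the form \eqref{eq:gkls_kossakowski} with $K\ge0$. Diagonalize $K=U\,\mathrm{diag}(\gamma_m)\,U^\dagger$ with $\gamma_m\ge0$, and set $L_m=\sqrt{\gamma_m}\sum_j U_{jm}F_j$. Then
\[
\Lcal(\rho)=-i[H,\rho]+\sum_m\Bigl(L_m\rho L_m^\dagger-\tfrac12\{L_m^\dagger L_m,\rho\}\Bigr).
\]
Write this as $\Lcal=\Phi+G\rho+\rho G^\dagger$, where $\Phi(\rho)=\sum_m L_m\rho L_m^\dagger$ is completely positive and $G=-iH-\tfrac12\sum_m L_m^\dagger L_m$. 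The map $\rho\mapsto G\rho+\rho G^\dagger$ generates $\rho\mapsto e^{tG}\rho\, e^{tG^\dagger}$, which is completely positive. By the Lie--Trotter product formula, $e^{t\Lcal}$ is a limit of compositions of completely positive maps, hence completely positive. Trace preservation follows from $\Tr\Lcal(\rho)=0$, which holds because the anticommutator term cancels the trace of $\Phi$. Thus $e^{t\Lcal}$ is a QDS, and $\Lcal\in\g_{LK}$.

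\emph{Necessity.} Let $T_t=e^{t\Lcal}$ be completely positive and trace preserving.
\begin{enumerate}
\item Expand each $T_t$ in the matrix-unit basis, $T_t(\rho)=\sum_{\mu,\nu=0}^{d^2-1}c_{\mu\nu}(t)\,F_\mu\rho F_\nu^\dagger$. Complete positivity of $T_t$ is equivalent to positive semidefiniteness of the coefficient matrix $c(t)$ (Choi's theorem).
\item Differentiate at $t=0$, where $c(0)=d\,e_{00}$. Entries with $\mu,\nu\ge1$ vanish at $t=0$, so $K_{jk}:=\dot c_{jk}(0)=\lim_{t\to0^+}c_{jk}(t)/t$ is a limit of PSD blocks and is therefore PSD.
\item Collect all terms of $\Lcal$ involving $F_0$ into $G\rho+\rho G^\dagger$ with $G=\dot c_{00}(0)/(2d)\cdot I+\sum_j\dot c_{j0}(0)F_j/\sqrt d$. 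This uses Hermiticity preservation, which gives $\dot c_{0j}=\overline{\dot c_{j0}}$.
\item Split $G=-iH-\tfrac12 M$ with $H$ and $M$ Hermitian. Trace annihilation forces $M=\sum_{jk}K_{jk}F_k^\dagger F_j$, which produces \eqref{eq:gkls_kossakowski}.
\end{enumerate}

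\emph{Main obstacle.} The delicate step is the bookkeeping in the necessity direction. We must confirm that tracelessness of the $F_j$ makes the decomposition into Hamiltonian part, $G$-part and $K$-part unique, and that the PSD property passes to the derivative of the lower block. The latter is the positivity of a one-sided limit at a point where that block vanishes, and that argument is exactly what I would check first.
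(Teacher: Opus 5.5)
Your proposal is correct and is the standard finite-dimensional Gorini--Kossakowski--Sudarshan proof: Choi-type expansion in the basis $F_0=I/\sqrt d,F_1,\dots,F_{d^2-1}$, positive semidefiniteness of $\lim_{t\to0^+}c_{jk}(t)/t$ on the traceless block, and Lie--Trotter for sufficiency. The paper does not prove this theorem but cites exactly this classical result. Two harmless slips: calling $\{F_\mu\}$ the ``matrix-unit basis'' in Step~1 is a misnomer, though the coefficient matrix in any Hilbert--Schmidt orthonormal basis is unitarily equivalent to the Choi matrix; and the uniqueness you flag as an obstacle is not needed for the stated equivalence.
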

We denote the set of all such generators by $\WGKLS\subseteq\g_{LK}$.
\begin{lemma}[Edge of the GKLS wedge (cf. \cite{dirr2009lie})]
\label{lem:gkls_edge}
The edge of $\WGKLS$ consists precisely of the Hamiltonian generators,
\begin{equation}
E(\WGKLS) = \{\Lcal_H \mid \Lcal_H(\rho) = -i[H, \rho],\ H=H^\dagger,\ H \in \Htr\} \cong \su(d).
\end{equation}
In particular, reversible dissipation is impossible: a generator with $\pm\Lcal \in \WGKLS$ has vanishing Kossakowski matrix.
\end{lemma}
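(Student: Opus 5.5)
The plan is to prove the two inclusions separately, working entirely in the Kossakowski form \eqref{eq:gkls_kossakowski}. Here $\WGKLS$ is parametrized by a pair $(H,K)$ with $H$ Hermitian and traceless and $K\ge 0$. The first step is to show that this parametrization is \emph{linear and injective} on $\Htr\times\{\text{Hermitian }K\}$. In other words, the map $(H,K)\mapsto\Lcal_{H,K}$ into $\g_{LK}$ is an injective real-linear map, once we extend $K$ to arbitrary Hermitian matrices, so that the map covers all of $\g_{LK}$. Injectivity is the standard uniqueness of the GKLS decomposition once $H$ is traceless and the $F_j$ are traceless. Concretely, one reads off $K$ from the component of $\Lcal$ in the $F_j\otimes\bar F_k$ basis of the Choi-type representation. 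One then reads off $H$ from the remaining anti-Hermitian part acting as $\rho\mapsto G\rho+\rho G^\dagger$ with $\Tr G$ fixed. I would cite this uniqueness as part of the GKLS theorem quoted above, as in \cite{gorini1976completely}.

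\textbf{Edge $\subseteq$ Hamiltonians.} Suppose $\pm\Lcal\in\WGKLS$. Write $\Lcal=\Lcal_{H,K}$ and $-\Lcal=\Lcal_{H',K'}$ with $K,K'\ge0$. By linearity, $0=\Lcal_{H+H',K+K'}$. Injectivity then gives $K+K'=0$. Since both matrices are positive semidefinite, $K=K'=0$, so $\Lcal=\Lcal_H$ is Hamiltonian. This also gives the stated ``in particular'': a generator with $\pm\Lcal\in\WGKLS$ has vanishing Kossakowski matrix.

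\textbf{Hamiltonians $\subseteq$ edge.} For any traceless Hermitian $H$, both $\Lcal_H$ and $-\Lcal_H=\Lcal_{-H}$ have $K=0\ge0$. Both therefore lie in $\WGKLS$, and so $\Lcal_H$ lies in the edge. Finally, the map $H\mapsto\Lcal_H$ from traceless Hermitian matrices, identified with $i\,\su(d)$, is injective. Indeed, $[H,\cdot]=0$ forces $H$ to be scalar, and hence $H=0$ since $H$ is traceless. Because $[\Lcal_{H_1},\Lcal_{H_2}]=\Lcal_{-i[H_1,H_2]}$, the map is a Lie algebra isomorphism onto its image, which gives $E(\WGKLS)\cong\su(d)$.

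The main obstacle is the uniqueness step, i.e.\ justifying that $K$ is determined by $\Lcal$ when $K$ is allowed to be an arbitrary Hermitian matrix rather than a positive semidefinite one. If a direct citation is unavailable, I would argue as follows. The superoperators $\rho\mapsto F_j\rho F_k^\dagger$, together with $\rho\mapsto I\rho F_k^\dagger$, $\rho\mapsto F_j\rho I$ and the identity, are linearly independent, because the $\{I,F_j\}$ form a basis of $\Bcal(\Hcal)$ and $A\rho B^\dagger\leftrightarrow A\otimes\bar B$ is injective. In the expansion of $\Lcal_{H,K}$, the coefficient of $F_j(\cdot)F_k^\dagger$ is exactly $K_{jk}$, since the anticommutator and Hamiltonian terms involve $I$ on one side. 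Hence $\Lcal=0$ forces $K=0$, and then $H=0$ by the argument above.
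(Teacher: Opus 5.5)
Your proof is correct, but it takes a different route from the one the paper's framework supplies. The paper does not prove this lemma itself; it cites Dirr et al. For every edge identification of this kind, it uses the dissipation functional $\varphi=-\tfrac1d\Tr(\cdot)$ of Lemma~\ref{lem:dissipation_functional}, with $\Tr$ the superoperator trace; see the proofs of Lemmas~\ref{lem:qubit_orbit_average} and~\ref{lem:qubit_edge_rigidity}.

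\textbf{The paper's route.} Since $\Tr(\rho\mapsto A\rho B^\dagger)=\Tr A\,\overline{\Tr B}$, a direct computation gives $\Tr\Lcal_{H,K}=-d\Tr K$ for \emph{every} representation $(H,K)$ of $\Lcal$. Moreover, $\varphi$ is linear on $\g_{LK}$ by definition. Hence if $\Lcal=\Lcal_{H,K}$ and $-\Lcal=\Lcal_{H',K'}$ with $K,K'\succeq0$, then $\Tr K+\Tr K'=\varphi(\Lcal)+\varphi(-\Lcal)=0$, so $K=K'=0$. This bypasses the uniqueness step you flagged as the main obstacle. It also produces a unitarily invariant measure of dissipation strength, which the paper reuses in the edge-rigidity and orbit-average arguments.

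\textbf{What your route buys.} It proves more. Your linear-independence argument via $A\rho B^\dagger\leftrightarrow\bar B\otimes A$ is sound, and it shows $(H,K)\mapsto\Lcal_{H,K}$ is injective. That makes ``the'' Kossakowski matrix of $\Lcal$ a well-defined linear function on $\g_{LK}$. Add the dimension count $(d^2-1)^2+(d^2-1)=d^4-d^2=\dim\g_{LK}$; this settles the surjectivity you assert without proof, though you never need it. Then $\WGKLS$ is identified with $\Htr\times\{K\succeq0\}$ under a linear isomorphism, and the edge is read off directly from pointedness of the positive semidefinite cone.
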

\begin{theorem}[$\WGKLS$ is a Lie wedge \cite{dirr2009lie}]
\label{thm:gkls_lie_wedge}
The set $\WGKLS$ is a Lie wedge in the Lindblad Kossakowski Lie algebra $\g_{LK}$.
\end{theorem}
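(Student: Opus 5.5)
Three properties must be checked: $\WGKLS$ is a closed subset of $\g_{LK}$, it is a convex cone, and it satisfies the invariance condition \eqref{eq:lie_wedge_condition} for every $X\in E(\WGKLS)$. The plan is to parametrize $\WGKLS$ linearly by pairs $(H,K)$ with $H\in\Htr$ and $K\ge0$ Hermitian of size $(d^2-1)\times(d^2-1)$, through the Kossakowski form \eqref{eq:gkls_kossakowski}. Then reduce each property to a statement about this parametrization.

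\emph{Convex cone.} The map $\Phi:(H,K)\mapsto\Lcal_{H,K}$ given by \eqref{eq:gkls_kossakowski} is real-linear. The set $\Htr\times\{K\ge0\}$ is a convex cone, so its image $\WGKLS$ is one as well. Concretely, $\lambda\Lcal_{H,K}=\Lcal_{\lambda H,\lambda K}$ for $\lambda\ge0$, and $\Lcal_{H,K}+\Lcal_{H',K'}=\Lcal_{H+H',K+K'}$ with $K+K'\ge0$.

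\emph{Closedness.} This is the step needing care, since the image of a closed cone under a linear map need not be closed. I will show that $\Phi$ is injective on $\Htr\times\mathrm{Herm}$, so that it is a linear isomorphism onto its image. The image is a finite-dimensional subspace, hence closed, and $\Phi$ is then a homeomorphism onto it; the closed cone $\Htr\times\{K\ge0\}$ therefore maps to a closed set. Injectivity is the standard uniqueness of the GKLS decomposition once $H$ and the $F_j$ are taken traceless. One way to see it: if $\Lcal_{H,K}=0$, first compute the Choi-type expression $\sum_{j,k}K_{jk}F_j\otimes\bar F_k$ from $\Lcal$ by projecting its Choi matrix away from the $I\otimes\cdot$ and $\cdot\otimes I$ components. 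Tracelessness of the $F_j$ makes this projection recover $K$ exactly, so $K=0$. Then $[H,\cdot]=0$ forces $H\propto I$, and hence $H=0$ because $H$ is traceless. Alternatively, one can bypass injectivity: $\WGKLS$ equals the set of $\Lcal\in\g_{LK}$ that are Hermiticity preserving and conditionally completely positive, and those conditions are closed.

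\emph{Invariance.} By Lemma~\ref{lem:gkls_edge}, every edge element is $X=\Lcal_{H_0}=-i[H_0,\cdot]$. Let $U_t=e^{-itH_0}$. Then $e^{t\ad X}$ acts on superoperators by $\Lcal\mapsto \Ad_{U_t}\circ\Lcal\circ\Ad_{U_t}^{-1}$, where $\Ad_{U}(\rho)=U\rho U^\dagger$; this uses $\Ad_{e^X}=e^{\ad X}$ with $e^{X}=\Ad_{U_t}$. Conjugating \eqref{eq:gkls_kossakowski} gives Hamiltonian $UHU^\dagger$ and jump operators $UF_jU^\dagger$. The family $\{UF_jU^\dagger\}$ is again a Hilbert--Schmidt orthonormal basis of $\Htr$, related to $\{F_j\}$ by a unitary matrix $V$. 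Re-expanding in the original basis replaces $K$ by $VKV^\dagger\ge0$. The conjugated generator is therefore again of GKLS form, and $e^{\ad X}(\WGKLS)\subseteq\WGKLS$.

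The main obstacle is the closedness step, which rests on uniqueness of the Kossakowski matrix; convexity and invariance are essentially bookkeeping.
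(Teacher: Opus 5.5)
Your proof is correct. The paper gives no argument of its own for this statement and simply imports it from Dirr et al., so your proposal supplies a self-contained version of the standard proof:
\begin{itemize}
\item the linear parametrization $(H,K)\mapsto\Lcal_{H,K}$ on $\Htr\times\{K\succeq0\}$ gives the convex cone;
\item uniqueness of the traceless Kossakowski data gives closedness;
\item unitary covariance $K\mapsto VKV^\dagger$ gives invariance under the Hamiltonian edge. Here $V$ is in fact real orthogonal, since both bases are orthonormal in the real space $\Htr$.
\end{itemize}

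The expression $\sum_{j,k}K_{jk}F_j\otimes\bar F_k$, together with an $I\otimes\cdot$ and $\cdot\otimes I$ remainder, is the vectorized (Liouville) matrix of $\Lcal$, i.e.\ the realigned Choi matrix, not the Choi matrix itself. On the Choi matrix proper, the same step is compression to the orthogonal complement of $|\Omega\rangle$. That works because $\langle\Omega|(I\otimes F_j)|\Omega\rangle=\Tr F_j=0$, and the vectors $(I\otimes F_j)|\Omega\rangle$ are orthonormal. Either version recovers $K$.

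Your injectivity step also makes Lemma~\ref{lem:gkls_edge} unnecessary as an input. If $\pm\Lcal_{H,K}\in\WGKLS$, uniqueness forces the Kossakowski matrix of $-\Lcal_{H,K}$ to be $-K\succeq0$, so $K=0$.

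The same uniqueness makes the dissipation functional $\varphi(\Lcal)=\Tr K$ of Appendix~\ref{app:wedge_construction} a well-defined linear functional on $\g_{LK}$. In fact $\Phi$ is a linear isomorphism from $\Htr\times\mathrm{Herm}_{d^2-1}$ onto $\g_{LK}$, since both have real dimension $d^4-d^2$. So your argument also underwrites the edge and rigidity statements the paper relies on later.
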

The dissipation functional and its consequences for commutators are collected in Appendix~\ref{app:wedge_construction}.

\subsection{Generator-Labelled Specifications and the DLWP}
\label{sec:dlwp_definition}
\begin{definition}[Dynamical System Specification]
A generator-labelled Markovian control specification is a finite family of independently actuable positive rays $\R_{\geq0}\Lcal_i$ with $\Lcal_i\in\WGKLS$. We write $G=\{\Lcal_1,\ldots,\Lcal_m\}$ for chosen representatives of those rays. Replacing a representative by a positive multiple does not change the specification. Bidirectional coherent control is stated by including both $\Lcal_H$ and $-\Lcal_H$ in $G$.
\end{definition}

\begin{definition}[Dynamical Lie Wedge Pair (DLWP)]
\label{def:DLWP}
Given a specification $G$, its \emph{Dynamical Lie Wedge Pair} is
\[
S(G)=(\g(G),W(G)).
\]
Here $\g(G)=\Lie(G)$ is the smallest real Lie subalgebra of $\g_{LK}$ containing $G$, and $W(G)$ is the smallest closed Lie wedge in $\g(G)$ containing $G$.
\end{definition}

\begin{remark}[Algebraic and reachability roles]
The algebra $\g(G)$ is the open-system counterpart of the dynamical Lie algebra of a control family \cite{jurdjevic1997geometric, d2021introduction, Schirmer2000Complete}. Proposition~\ref{prop:construction_W(G)_saturation} proves that $W(G)$ exists and gives a finite construction. The wedge is an inner approximation of the tangent wedge in Definition~\ref{def:W(G)_semigroup}. Equality holds exactly when $W(G)$ is global.
\end{remark}

\begin{remark}[What is being classified]
\label{rem:scope_control_systems}
The input is a generator-labelled specification of independently actuable rays, which determines the pair $(\g(G),W(G))$. If the only available dynamics is a fixed autonomous generator $\Lcal$, then $G=\{\Lcal\}$, the algebra $\g(G)=\R\Lcal$ is one-dimensional and abelian, the minimal local wedge is $\R_{\geq0}\Lcal$, and the global object is its semigroup. A Hamiltonian--jump decomposition may be used as $G$ only when the resulting rays are independently actuable, even if its components are physically distinguishable. The GKLS representation admits gauge transformations, including inhomogeneous shifts $L_k\mapsto L_k+c_k\mathbb I$ with a compensating change of $H$ and unitary recombinations of jump operators. Such changes do not create additional control rays.

The multi-generator case describes piecewise-controlled Markovian dynamics, in the same sense in which a dynamical Lie algebra describes a coherent control family.
\end{remark}

\paragraph{The Dynamical Lie Wedge $W(G)$.}
The wedge $W(G)$ is the minimal closed Lie-wedge local model determined by $G$, and by construction $W(G)\subseteq\g(G)$. Its inclusion in the GKLS cone follows from minimality. Indeed, $\WGKLS\cap\g(G)$ is a closed convex cone containing $G$, with edge $E(\WGKLS)\cap\g(G)$. For every $X$ in that edge, $e^{\ad X}$ preserves $\WGKLS$ by Theorem~\ref{thm:gkls_lie_wedge} and preserves $\g(G)$ because $\g(G)$ is a subalgebra. Hence $\WGKLS\cap\g(G)$ is a closed Lie wedge in $\g(G)$ containing $G$, and minimality gives
\begin{equation}
W(G) \subseteq \g(G) \cap \WGKLS.
\end{equation}
This conclusion is specific to the minimal wedge: an arbitrary Lie wedge containing $G$ need not lie in $\WGKLS$. For example, $\g(G)$ itself does so only when all its elements are Hamiltonian.
The geometry of $W(G)$ shows how positivity restricts the algebraic possibilities left open by $\g(G)$.

\paragraph{Local Wedge and Global Tangent Wedge.}
The global object associated with a switching family is its generated Lie semigroup and tangent wedge.
We use the invertible affine identity component $\mathcal A=\mathrm{GL}^{+}(\Htr)\ltimes\Htr$, where $\mathrm{GL}^{+}$ consists of matrices with positive determinant, with its finite-dimensional matrix topology. Closures below are taken in $\mathcal A$, not in the space containing singular infinite-time limits. An affine generator $(A,b)$ acts as $\dot x=Ax+b$ (thus $b=b_\Lcal/d$ in the representation above), with
\[
(A,b)\longleftrightarrow\begin{pmatrix}A&b\\0&0\end{pmatrix},
\qquad [(A,b),(A',b')]=([A,A'],Ab'-A'b).
\]
\begin{definition}[Lie Semigroup and Tangent Wedge]
\label{def:W(G)_semigroup}
The Lie semigroup $\Sigma(G)$ generated by $G$ is the closure in $\mathcal A$ of all finite products $e^{t_1 \Lcal_1}\cdots e^{t_k \Lcal_k}$ with $\Lcal_i\in G$ and $t_i\geq0$. Its tangent wedge is
\[
L(\Sigma(G))=\{X\in\g_{LK}:e^{tX}\in\Sigma(G)\text{ for every }t\geq0\}.
\]
\end{definition}
This is the exponential characterization of the tangent wedge \cite[Sec.~1.1, Eq.~(1.2)]{o2025lie}.

\begin{remark}[Switching interpretation]
The semigroup $\Sigma(G)$ is the closed reachable set for piecewise-constant switching among the declared rays. The tangent wedge $L(\Sigma(G))$ is a closed Lie wedge \cite{hilgert1989lie}.
\end{remark}

The tangent wedge contains $G$ but need not lie inside $\g(G)$, so the relevant comparison inside $\g(G)$ is $L(\Sigma(G)) \cap \g(G)$. It is a closed Lie wedge in $\g(G)$ containing $G$, by the argument already used for $\WGKLS \cap \g(G)$ in Section~\ref{sec:dlwp_definition}, with $L(\Sigma(G))$ in place of $\WGKLS$. Minimality of $W(G)$ therefore gives the inclusion
\begin{equation}\label{eq:saturate_vs_tangent}
W(G) \;\subseteq\; L(\Sigma(G)).
\end{equation}
The inclusion can be strict. If $G = \{\Lcal_H\}$ consists of a single Hamiltonian generator $\Lcal_H = -i[H,\cdot\,]$ with quasi-periodic flow, then $W(G) = \R_{\geq 0}\,\Lcal_H$ is already a closed Lie wedge with trivial edge, while the closure of $\{e^{t\Lcal_H} : t \geq 0\}$ is a compact group, so $L(\Sigma(G))$ contains $-\Lcal_H$ as well. Equality in \eqref{eq:saturate_vs_tangent} means that $W(G)$ is \emph{global} in the chosen ambient group $\mathcal A$ \cite{hilgert1989lie}. Theorem~\ref{thm:levi_type_main} in Appendix~\ref{app:levi_gauge_details} applies to arbitrary Lie wedges. Here we apply it to the minimal closed wedge used in the classification. Corollary~\ref{cor:qubit_globality_semialgebra} subsequently establishes globality for the classified family.
\subsection{Fibered Levi Data}
\label{sec:levi_type_theorem}
We consider a DLWP $(\g, W)$. The dimensions of the span and edge of the radical slice enter the classification tuple, while the quotient cone and fibers describe the wedge relative to a Levi splitting. Let $\g = \mf{s} \ltimes \mf{r}$ be the Levi decomposition of $\g$, where $\mf{r}$ is the solvable radical and $\mf{s}$ is the semisimple Levi factor. Let $\pi_{\s}: \g \to \mf{s} \cong \g/\mf{r}$ be the canonical projection homomorphism, with $\Ker(\pi_{\s}) = \mf{r}$. We identify $\g$ with the vector space $\mf{s} \oplus \mf{r}$ via the fixed Levi section. We adapt the toolkit of the mathematical theory of semigroups \cite{hilgert1989lie} to the physical constraints of GKLS generators.

\begin{definition}[Components of the decomposition]
\label{def:fiber_cone}
Let $(\g, W)$ be a DLWP as above.
\begin{enumerate}
    \item The Radical Wedge is the intersection of $W$ with the solvable radical,
    \begin{equation}
    W_{\rr} = W \cap \mf{r}.
    \end{equation}
    \item The Quotient Cone is the canonical projection of $W$ onto the Levi factor,
    \begin{equation}
    C_{\text{quot}} = \pi_{\s}(W).
    \end{equation}
    \item For $S \in C_{\text{quot}}$, the fiber $W(S)$ is the nonempty closed convex set of radical components that complete $S$ to an element of the wedge,
    \begin{equation}
    W(S) = \{R \in \mf{r} \mid S+R \in W\}.
    \end{equation}
    The set-valued assignment $S \mapsto W(S)$ is the Fiber Map.
\end{enumerate}
\end{definition}

Geometrically, $S + W(S)$ is the slice of $W$ over the base point $S$, the intersection of $W$ with the affine subspace $S+\mf{r}$. The wedge is recovered from its slices by the identity
\begin{equation}
\label{eq:fiber_union}
W = \bigcup_{S \in C_{\text{quot}}} (S + W(S)).
\end{equation}
Throughout, the fiber data are relative to a fixed Levi splitting, and the quotient cone need not be closed. Appendix~\ref{app:levi_gauge_details} collects the component properties, their compatibility condition, and the reconstruction theorem, including transport under a change of Levi section.

Write $\rad(\g)=\rr$ for the solvable radical and $Z(\g)=\{X\in\g:[X,Y]=0\text{ for every }Y\in\g\}$ for the center.
\begin{definition}[Four Levi types]
\label{def:four_levi_types}
A DLWP $(\g,W)$ with Levi decomposition $\g=\mf{s}\ltimes\mf{r}$ has \emph{solvable type} if $\mf{s}=\{0\}$, \emph{semisimple type} if $\mf{r}=\{0\}$, \emph{mixed reductive type} if both factors are nonzero and $\rad(\g)=Z(\g)$, and \emph{mixed non-reductive type} if both factors are nonzero and $\rad(\g)\neq Z(\g)$.
\end{definition}

\Needspace{8\baselineskip}
\section{Classification of Single-Channel Qubit Controls}
\label{sec:qubit_classification}

For the family $\mathcal{F}$, the coherent-control algebra and channel discriminants determine the Levi-data stratum. We first define these data and the equivalences used in the classification. Three exceptional strata arise on lower-dimensional channel loci.

\subsection{Normalization, the Data Tuple, and the Symmetry Group}

For $d = 2$ we work in the Bloch parameterization. Writing $\rho = \tfrac{1}{2}(\mathbb{I} + v \cdot \sigma)$ with $v \in \R^3$, every element of $\gLK(2) \cong \gl(3,\R) \levi \R^3$ (Theorem~\ref{thm:structure_glk}) acts as an affine vector field $\dot v = Qv + b$ with $Q \in \gl(3,\R)$ and $b \in \R^3$. A Hamiltonian generator with $H = \tfrac{1}{2}\, n \cdot \sigma$ acts as the rotation generator $J_n := (n \times \cdot\,)$, with no translation part. For axial control we now choose Bloch coordinates with the control axis along $\hat z$ and write $J_z := J_{\hat z}$. The principal frame of Section~\ref{sec:control_consequences} is instead tied to the channel.

Kossakowski matrices use the default Hilbert--Schmidt orthonormal basis $F_j=\sigma_j/\sqrt{2}$. Thus a jump $L=a\cdot\sigma$ has $K^{(F)}=2aa^\dagger$ and $\Tr K^{(F)}=2|a|^2$. In the raw Pauli basis the same generator has $K^{(\sigma)}=aa^\dagger$.

\begin{definition}[Single-channel qubit family]
\label{def:qubit_family}
Let $\mathcal{F}$ denote the set of specifications $G \subset \WGKLS(2)$ of the following form. The coherent part consists of control pairs $\pm \Lcal_{H_1}, \dots, \pm \Lcal_{H_k}$ ($k \geq 0$), and we write $\mf{e}_0 := \langle \Lcal_{H_1}, \dots, \Lcal_{H_k}\rangle_{\mathrm{Lie}}$ for the control algebra. The dissipative part is at most one jump dissipator $\Dcal_L(\rho) = L\rho L^\dagger - \tfrac{1}{2}\{L^\dagger L, \rho\}$ with $L = a \cdot \sigma$ traceless, $a \in \C^3 \setminus \{0\}$. We require at least one nonzero generator in $G$ and exclude drift terms.
\end{definition}

\begin{remark}[Admissible control algebras]
A two-dimensional span of rotation generators brackets onto a third independent axis. Hence the control algebra $\mf{e}_0$ in Definition~\ref{def:qubit_family} is one of $\{0\}$, $\so(2)_n$, or $\so(3)$.
\end{remark}

Within $\mathcal F$, we identify coherent lists that generate the same control algebra $\mf e_0$ and channels on the same positive dissipative ray. Lemma~\ref{lem:qubit_normalization}(iii) proves that these identifications preserve $(\g(G),W(G))$. Arbitrary independently specified generator lists retain the actuation distinctions described in Remark~\ref{rem:scope_control_systems}.

\begin{definition}[Levi-type data tuple]
For $G \in \mathcal{F}$ with pair $(\g, W) = (\g(G), W(G))$, Levi decomposition $\g = \s \levi \rr$, and edge $E = E(W)$, the \emph{Levi-type data tuple} is
\begin{equation}
T(G) \;=\; \bigl(\dim \g,\; \dim \rr,\; [\s],\; \mathrm{red},\; \mathrm{ab};\; \dim E,\; \dim (E \cap \rr),\; \dim \operatorname{span} \Wrad,\; \dim E(\Wrad),\; n_{\mathrm{tr}}\bigr),
\end{equation}
where $[\s]$ is the isomorphism class of the Levi factor, $\mathrm{red}$ and $\mathrm{ab}$ flag reductivity and commutativity of $\g$, $\Wrad = W \cap \rr$ is the radical wedge of Theorem~\ref{thm:levi_type_main}, and $n_{\mathrm{tr}} = \dim(\g \cap \R^3)$ counts the pure translations contained in the system algebra.
\end{definition}

Each entry of $T(G)$ is determined by the pair $(\g(G),W(G))$ inside the ambient pair $(\gLK,\WGKLS)$. This observation underlies the following symmetry and normalization lemma.

\begin{lemma}[Allowed equivalences preserve the Levi data]
\label{lem:qubit_normalization}
Let $G \in \mathcal{F}$.
\begin{enumerate}
    \item \textup{(Unitary covariance.)} For $u \in \mathrm{SU}(2)$ let $u \cdot G$ be the specification with Hamiltonians $uH_ju^\dagger$ and jump $uLu^\dagger$. Then $T(u \cdot G) = T(G)$. On Bloch data the action is conjugation by the rotation $R_u \in \mathrm{SO}(3)$.
    \item \textup{(Transpose extension.)} Conjugating every generator by the transpose map $\Theta(\rho) = \rho^{T}$ sends $\Dcal_L$ to $\Dcal_{\bar L}$ and $\Lcal_H$ to $\Lcal_{-\bar H}$, hence maps $\mathcal{F}$ to itself, preserves $\WGKLS$, and acts on Bloch data by the reflection $R_y = \mathrm{diag}(1,-1,1)$. The tuple is invariant, so the effective symmetry group extends from $\mathrm{SO}(3)$ to $\mathrm{O}(3)$.
    \item \textup{(Equivalent control lists and positive channel scaling.)} Replacing a bidirectional coherent generating list by another list with the same generated control algebra $\mf e_0$, or multiplying the single dissipative generator by a positive scalar, leaves $\g(G)$, $W(G)$, and $T(G)$ unchanged.
\end{enumerate}
\end{lemma}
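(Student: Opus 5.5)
The plan is to prove one general transport principle and then check that each of the three operations fits it. \emph{Principle:} let \(\Phi\) be a Lie algebra automorphism of \(\gLK\) with \(\Phi(\WGKLS)=\WGKLS\). Then \(\Phi\) carries subalgebras to subalgebras and closed Lie wedges to closed Lie wedges. This holds because \(\Phi\) is a linear homeomorphism, so it preserves closed convex cones. It also maps edges onto edges, \(E(\Phi W)=\Phi E(W)\), and it satisfies \(\Phi e^{\ad X}\Phi^{-1}=e^{\ad \Phi X}\).

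It follows from minimality that \(\g(\Phi G)=\Phi\,\g(G)\) and \(W(\Phi G)=\Phi\,W(G)\). Since \(\Phi\) restricts to an isomorphism \(\g(G)\to\g(\Phi G)\), it carries the solvable radical to the solvable radical and the center to the center. It maps any Levi factor to a Levi factor, so the class \([\s]\) is preserved, and it preserves reductivity and commutativity. It also preserves \(\dim E\), \(\dim(E\cap\rr)\), and \(\dim\operatorname{span}\Wrad\). Because \(\Wrad=W\cap\rr\) is sent onto the corresponding radical wedge, \(\dim E(\Wrad)\) is preserved too. For \(n_{\mathrm{tr}}\) we also need \(\Phi(\R^3)=\R^3\), where \(\R^3\) is the translation ideal. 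This holds in both cases below, because \(\Phi\) acts on affine data \((Q,b)\) by \((RQR^{T},Rb)\) with \(R\in\mathrm O(3)\). Altogether this gives \(T(\Phi G)=T(G)\).

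For (i), take \(\Phi=\Ad_{\mathcal U}\) with \(\mathcal U(\rho)=u\rho u^\dagger\). This is conjugation by a completely positive trace-preserving automorphism, so it preserves the GKLS form. One computes \(\mathcal U\Dcal_L\mathcal U^{-1}=\Dcal_{uLu^\dagger}\) and \(\mathcal U\Lcal_H\mathcal U^{-1}=\Lcal_{uHu^\dagger}\), so \(u\cdot G\in\mathcal F\). In Bloch coordinates \(\mathcal U\) is the rotation \(R_u\), and conjugating an affine pair by \((R_u,0)\) gives \((R_uQR_u^T,R_ub)\). For (ii), \(\Theta\) is linear, Hermiticity preserving, and trace preserving, but it is not completely positive. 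I will therefore verify preservation of \(\WGKLS\) directly. Using \((XY)^T=Y^TX^T\) and \(\rho^T=\bar\rho\) for Hermitian \(\rho\),
\[
\Theta\Dcal_L\Theta(\rho)=(L\rho^TL^\dagger)^T-\tfrac12\{(L^\dagger L)^T,\rho\}=\bar L\rho \bar L^{\dagger}-\tfrac12\{\bar L^\dagger\bar L,\rho\},
\]
and similarly \(\Theta\Lcal_H\Theta=\Lcal_{-\bar H}\). Hence a general GKLS generator with Hamiltonian \(H\) and Kossakowski matrix \(K\) maps to one with Hamiltonian \(-\bar H\) and jumps \(\bar L_k\), which is again GKLS. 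Since \(\bar L=\bar a\cdot\bar\sigma\) and \(\bar\sigma=(\sigma_x,-\sigma_y,\sigma_z)\), the new jump is traceless, and \(\mathcal F\) is preserved. In Bloch coordinates the transpose flips the \(y\) component, so \(\Phi\) is conjugation by \(R_y\). Combining (i) and (ii) yields the \(\mathrm O(3)\) action.

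For (iii), \(\Phi\) is the identity and we compare two specifications \(G,G'\) directly. The coherent part of either list spans, under Lie closure, the same \(\mf e_0\), and \(\g(G)\) is the Lie algebra generated by \(\mf e_0\) together with the dissipator. This gives \(\g(G)=\g(G')\), and scaling the dissipator clearly does not change it. For the wedges, the key step is that \(\mf e_0\subseteq E(W(G))\). The subspace \(\mathrm{span}\{\Lcal_{H_j}\}\) lies in the edge because \(\pm\Lcal_{H_j}\in G\). The edge of a Lie wedge is a Lie subalgebra, since it is \(e^{\ad}\)-invariant under its own elements, and differentiating gives closure under brackets. So the edge contains the generated algebra \(\mf e_0\). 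Therefore \(W(G)\) is a closed Lie wedge containing \(G'\), and minimality gives \(W(G')\subseteq W(G)\). Symmetry gives equality, and positive scaling of a ray is immediate from the definition. With \(\g\) and \(W\) unchanged, \(T\) is unchanged. I expect the main obstacle to be (ii): \(\Theta\) is not completely positive, so GKLS-preservation must come from the explicit conjugation formula rather than a general principle, and the sign and conjugation bookkeeping for \(-\bar H\) must be checked carefully.
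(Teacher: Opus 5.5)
Your proposal is correct and follows essentially the paper's route: you transport $(\g(G),W(G))$ under an ambient automorphism that preserves the GKLS wedge and the translation ideal for (i) and (ii), and you use the fact that the edge is a subalgebra together with minimality for (iii). The only differences are cosmetic: you obtain $W(\Phi G)=\Phi W(G)$ directly from minimality rather than by commuting $\Phi$ with each step of the iterative wedge construction, and you check GKLS-preservation under $\Theta$ through the conjugated jumps $\bar L_k$ rather than through $K\mapsto S\bar K S$.
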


The proof is given in Appendix~\ref{app:normalization_proof}.

Appendix~\ref{app:normalization_proof} also explains how inhomogeneous jump-gauge shifts can change a separated control specification and why the orthogonal group is the full group of linear Bloch symmetries of the GKLS wedge.

\subsection{The Channel Normal Form}
\label{sec:channel_normal_form}

The channel normal form separates the symmetric Bloch matrix from the translation. Resolving these data relative to the control axis produces the discriminants used in the classification.

\begin{lemma}[Bloch data of a traceless jump]
\label{lem:single_jump_normal_form}
Write $L = a \cdot \sigma$ with $a = \alpha + i\beta$, $\alpha, \beta \in \R^3$. Then $\Dcal_L$ acts as $\dot v = Qv + b$ with
\begin{equation}
\label{eq:jump_normal_form}
Q \;=\; 2\bigl(\alpha\alpha^{T} + \beta\beta^{T} - |a|^2\,\mathbb{I}\bigr),
\qquad
b \;=\; 4\,(\alpha \times \beta).
\end{equation}
In particular $Q$ is symmetric and negative semidefinite, and a single traceless channel carries no intrinsic rotation. Every Hamiltonian direction in $\g(G)$ beyond the controls is generated algebraically by brackets.
\end{lemma}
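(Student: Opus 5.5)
The plan is a direct Pauli-algebra computation. Decompose $\Dcal_L$ along the basis $\{\mathbb I,\sigma_1,\sigma_2,\sigma_3\}$ and read off the affine Bloch data. Write $\rho=\tfrac12(\mathbb I+v\cdot\sigma)$. By linearity,
\[
\dot v\cdot\sigma=\Dcal_L(\mathbb I)+\sum_k v_k\,\Dcal_L(\sigma_k).
\]
So $b$ is the vector part of $\Dcal_L(\mathbb I)$, and $Qe_k$ is the vector part of $\Dcal_L(\sigma_k)$. The second claim relies on the scalar parts of $\Dcal_L(\sigma_k)$ vanishing, which is trace annihilation. Throughout I use $(u\cdot\sigma)(w\cdot\sigma)=(u\cdot w)\mathbb I+i(u\times w)\cdot\sigma$, valid for complex vectors with the bilinear dot and cross products.

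\emph{Step 1 (translation).} Since $L^\dagger=\bar a\cdot\sigma$,
\[
L^\dagger L=|a|^2\mathbb I+i(\bar a\times a)\cdot\sigma,\qquad LL^\dagger=|a|^2\mathbb I+i(a\times\bar a)\cdot\sigma .
\]
Expanding $a=\alpha+i\beta$ gives $\bar a\times a=2i\,\alpha\times\beta$. Hence $L^\dagger L=|a|^2\mathbb I-2(\alpha\times\beta)\cdot\sigma$ and $LL^\dagger=|a|^2\mathbb I+2(\alpha\times\beta)\cdot\sigma$. Then $\Dcal_L(\mathbb I)=LL^\dagger-L^\dagger L=4(\alpha\times\beta)\cdot\sigma$, which gives $b=4\,\alpha\times\beta$.

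\emph{Step 2 (linear part).} For the sandwich term $L\sigma_kL^\dagger$, I use the triple-product identity: the vector part of $(a\cdot\sigma)(u\cdot\sigma)(c\cdot\sigma)$ is $(a\cdot u)c+(u\cdot c)a-(a\cdot c)u$. Taking $u=e_k$ and $c=\bar a$ gives
\[
a_k\bar a+\bar a_k a-|a|^2e_k=2(\alpha_k\alpha+\beta_k\beta)-|a|^2e_k .
\]
The anticommutator term is
\[
\tfrac12\{L^\dagger L,\sigma_k\}=|a|^2\sigma_k-2(\alpha\times\beta)_k\mathbb I .
\]
Its vector part is $|a|^2e_k$, and its scalar part cancels the scalar part of $L\sigma_kL^\dagger$ by trace preservation. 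Subtracting gives
\[
Qe_k=2(\alpha\alpha^T+\beta\beta^T)e_k-2|a|^2e_k ,
\]
which is \eqref{eq:jump_normal_form}.

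\emph{Step 3 (consequences).} $Q$ is manifestly symmetric. For a unit vector $w$, Cauchy--Schwarz gives
\[
w^T(\alpha\alpha^T+\beta\beta^T)w=(\alpha\cdot w)^2+(\beta\cdot w)^2\le|\alpha|^2+|\beta|^2=|a|^2,
\]
so $Q\le0$. In $\gl(3,\R)$ the rotation generators $J_n$ are exactly the antisymmetric matrices. A symmetric $Q$ therefore has zero $\so(3)$-component, so $\Dcal_L$ contributes no rotation. Consequently any Hamiltonian direction in $\g(G)$ that is not in $\mf e_0$ must come from brackets, for instance $[J_n,(Q,b)]$ or $[(Q,b),(Q',b')]$, whose linear parts $[J_n,Q]$ or $[Q,Q']$ can acquire antisymmetric components.

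The main obstacle is bookkeeping rather than ideas. The delicate points are getting the sign and factor of $\bar a\times a=2i\,\alpha\times\beta$ right, and applying the triple-product identity correctly to complex vectors, since the dot products there are bilinear, not Hermitian. I would cross-check both on the amplitude-damping case $L=\sigma_x+i\sigma_y$. There $\alpha=\hat x$ and $\beta=\hat y$, so the formulas predict $b=4\hat z$ and $Q=\operatorname{diag}(-2,-2,-4)$. This agrees with the principal-frame data of Section~\ref{sec:control_consequences} at $q=1$.
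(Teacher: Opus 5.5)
Your proposal is correct and follows essentially the same route as the paper: a direct Pauli-basis expansion in which the real part $a_k\bar a+\bar a_k a=2(\alpha_k\alpha+\beta_k\beta)$ builds the symmetric matrix $Q$, while the imaginary part $\bar a\times a=2i\,\alpha\times\beta$ reappears only in the translation $b$ through $LL^\dagger-L^\dagger L$. The differences are bookkeeping. You use complex bilinear vector identities where the paper indexes through the Kossakowski matrix and its $\epsilon$-contractions. You justify the vanishing scalar parts by trace annihilation rather than $\epsilon$-cyclicity. You also make explicit the constants and the Cauchy--Schwarz proof of $Q\preceq0$, which the paper leaves implicit.
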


\begin{proof}
Expand $\Dcal_L$ in the Pauli basis with Kossakowski matrix $K = 2a\bar a^{T}$. The action on $\sigma_l$ contributes no identity component, since the two $\epsilon$-terms produced by $\sigma_j \sigma_l \sigma_k$ and $\{\sigma_k\sigma_j, \sigma_l\}$ cancel by cyclicity. The surviving matrix part is built from $\operatorname{Re} K$ alone. The imaginary part of $K$ drops out of $Q$ because the matrix part symmetrizes $K$ over $j \leftrightarrow k$. It reappears as the translation $b$ through $\Dcal_L(\mathbb{I}) = \tfrac{1}{2} K_{jk}[\sigma_j, \sigma_k]$, with $\epsilon_{jkm} \operatorname{Im} K_{jk} \propto (\alpha \times \beta)_m$. Collecting constants in the normalization $\rho = \tfrac{1}{2}(\mathbb{I} + v\cdot\sigma)$ yields \eqref{eq:jump_normal_form}.
\end{proof}

\begin{corollary}[Principal form and stabilizer]
\label{cor:single_jump_principal_form}
After a phase rotation of $a$, one may take $\alpha\perp\beta$ with $s:=|\alpha|\geq t:=|\beta|$ and $\hat\alpha=\alpha/s$. For $t>0$, set $\hat\beta=\beta/t$ and $\hat\nu=\hat\alpha\times\hat\beta$. For $t=0$, choose any right-handed orthonormal completion of $\hat\alpha$. In this principal frame, the eigenvalues of $Q$ are $-2t^2$, $-2s^2$, and $-2(s^2+t^2)$, and $b=4st\,\hat\nu$. The channel is unital exactly when $t=0$. Its Bloch data have a continuous rotation axis exactly when $t=0$ or $t=s$. For $0<t<s$, the only nontrivial proper rotation fixing them is the half-turn about $\hat\nu$.
\end{corollary}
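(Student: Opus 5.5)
The plan is to work directly from Lemma~\ref{lem:single_jump_normal_form}. Both the symmetric matrix $Q$ and the vector $b$ depend on $a$ only through its real and imaginary parts, so the first task is to put $\alpha$ and $\beta$ into a normal form using the phase freedom $a\mapsto e^{i\theta}a$. This phase does not change $\Dcal_L$: the Kossakowski matrix $K=2a\bar a^T$ is invariant. Under the phase, $\alpha_\theta=\cos\theta\,\alpha-\sin\theta\,\beta$ and $\beta_\theta=\sin\theta\,\alpha+\cos\theta\,\beta$. Hence
\[
\alpha_\theta\cdot\beta_\theta=\tfrac12\sin2\theta\,(|\alpha|^2-|\beta|^2)+\cos2\theta\,(\alpha\cdot\beta).
\]
This is a sinusoid in $2\theta$, so it has a zero, and at such a zero $\alpha_\theta\perp\beta_\theta$. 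If the resulting $|\alpha_\theta|<|\beta_\theta|$, shift $\theta$ by $\pi/2$, which replaces $(\alpha,\beta)$ by $(-\beta,\alpha)$; this swaps the two norms and keeps orthogonality. This gives $s\geq t$ with $\alpha\perp\beta$. One should also check that $\alpha\times\beta$ is phase invariant, which is consistent with $b$ being intrinsic.

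Next, compute in the frame $(\hat\alpha,\hat\beta,\hat\nu)$. Here $|a|^2=s^2+t^2$, $\alpha\alpha^T=s^2\hat\alpha\hat\alpha^T$ and $\beta\beta^T=t^2\hat\beta\hat\beta^T$. Therefore $Q$ is diagonal with entries
\[
2\bigl(s^2-s^2-t^2\bigr)=-2t^2,\qquad 2\bigl(t^2-s^2-t^2\bigr)=-2s^2,\qquad -2(s^2+t^2),
\]
and $b=4\,(s\hat\alpha\times t\hat\beta)=4st\,\hat\nu$. When $t=0$ these formulas hold for any orthonormal completion, since $\beta=0$. Unitality means $b=0$, that is $st=0$. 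Because $a\neq0$ forces $s>0$, this is equivalent to $t=0$.

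For the symmetry claims, a rotation $R$ fixes the data exactly when $RQR^T=Q$ and $Rb=b$. We split by the eigenvalue pattern of $Q$.
\begin{itemize}
\item \emph{Case $t=0$.} The eigenvalues are $0,-2s^2,-2s^2$, and $b=0$. Every rotation about $\hat\alpha$ preserves the data, which gives a continuous axis.
\item \emph{Case $t=s$.} The eigenvalues are $-2s^2,-2s^2,-4s^2$, with the degenerate eigenspace spanned by $\hat\alpha,\hat\beta$, and $b\parallel\hat\nu$. Rotations about $\hat\nu$ preserve both $Q$ and $b$.
\item \emph{Case $0<t<s$.} The three eigenvalues $-2t^2>-2s^2>-2(s^2+t^2)$ are distinct. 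So $R$ must preserve each eigenline, which makes $R$ diagonal with entries $\pm1$ in the principal frame. The condition $Rb=b$ with $b\neq0$ forces the $\hat\nu$ entry to be $+1$. Together with $\det R=1$, only the identity and $\mathrm{diag}(-1,-1,1)$ remain, and the latter is the half-turn about $\hat\nu$.
\end{itemize}
The same eigenline argument shows that no continuous symmetry exists for $0<t<s$, which completes the "exactly when" statement. Note that at $t=s$ the unital case is excluded, since $s>0$.

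The main obstacle I expect is bookkeeping rather than mathematics. One point is choosing the orientation so the frame is right-handed with $b=4st\hat\nu$ positive along $\hat\nu$. The other is confirming that the phase rotation and the swap by a quarter period are genuinely gauge-trivial for $\Dcal_L$.
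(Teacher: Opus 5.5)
Your proposal is correct and takes the same route as the paper. You use a phase rotation $a\mapsto e^{i\theta}a$, which leaves $\Dcal_L$ unchanged, to make $\alpha\perp\beta$ with $|\alpha|\geq|\beta|$; you substitute directly into the normal form of Lemma~\ref{lem:single_jump_normal_form}; and you inspect the eigenframe of $Q$ together with $b$ to get the stabilizer, with your sinusoid-zero, quarter-period-swap, and diagonal-$\pm1$ arguments filling in details the paper leaves implicit (and the orientation worry is moot, since $\hat\nu=\hat\alpha\times\hat\beta$ makes the frame right-handed and gives $b=4st\,\hat\nu$ automatically).
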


\begin{proof}
The phase rotation $a \mapsto e^{i\varphi} a$ leaves $\Dcal_L$ fixed and rotates the pair $(\alpha, \beta)$ inside its plane, so principal axes exist. Substitution in \eqref{eq:jump_normal_form} gives the eigenvalues and translation. The stabilizer under proper Bloch rotations in $\mathrm{SO}(3)$ follows by inspecting the eigenframe of $Q$ together with $b$: nondegeneracy for $0<t<s$ leaves only the half-turn about $\hat\nu$, while a full rotation axis occurs in the two stated cases.
\end{proof}

The normalized channel parameter in Section~\ref{sec:control_consequences} is $q=t/s$. Positive channel scaling sets $s=1$ there. We retain $s$ and $t$ below to express Hamiltonian strengths and convergence rates.

To identify the axial branches, decompose the matrix $M := \alpha\alpha^{T} + \beta\beta^{T}$, equivalently $Q$, relative to the control axis. Under the rotations $e^{tJ_z}$ the symmetric matrices split into weight sectors, and we write $m_1(M) := (M_{xz}, M_{yz})$ for the weight-one component, $m_2(M)$ for the traceless block of the restriction to the $xy$-plane (weight two), and
\begin{equation}
\label{eq:dev_def}
\operatorname{dev}(M) := M_{zz} - \tfrac{1}{3}\operatorname{tr} M
\end{equation}
for the deviatoric part of the weight-zero component. The vector $b$ splits into $b_z$ and the planar part $b_\perp$. The four channel data $(m_1,m_2,\operatorname{dev},b)$ provide the classification discriminants. The condition $b=0$ is the unitality test.
The component values are distinct from their matrix sectors. With $E_{ij}$ the matrix units, set
\[
\mathcal S_1=\Span\{E_{13}+E_{31},E_{23}+E_{32}\},\quad
\mathcal S_2=\Span\{E_{11}-E_{22},E_{12}+E_{21}\},\quad
\mathcal S_0=\R\operatorname{diag}(1,1,-2).
\]
The pair $m_1(M)$ embeds as $M_{xz}(E_{13}+E_{31})+M_{yz}(E_{23}+E_{32})$ in $\mathcal S_1$. The block $m_2(M)$ embeds in the upper-left $2\times2$ block of a $3\times3$ matrix, with all other entries zero. Writing $d_0=\operatorname{dev}(M)$, the weight-zero traceless projection of $M$ is $\operatorname{diag}(-d_0/2,-d_0/2,d_0)\in\mathcal S_0$. These embeddings give
\[
\operatorname{Sym}(3,\R)=\R I\oplus\mathcal S_0\oplus\mathcal S_1\oplus\mathcal S_2.
\]
The weight labels describe rotation frequency about $\hat z$: the $\mathcal S_1$ and $\mathcal S_2$ components rotate through angles $\theta$ and $2\theta$, respectively, while $\mathcal S_0$ is fixed.

\Needspace{8\baselineskip}
\subsection{Edge Rigidity}

For any finite gauge-normalized single-qubit specification, the dissipative generators leave the reversible part of the wedge unchanged. The following lemma derives this constraint from the facial geometry of the GKLS cone, including specifications with several channels or one-sided drift terms.

Here gauge normalization refers to the traceless-jump GKLS representation. Controls are the declared bidirectional Hamiltonian directions, while Hamiltonian drift directions are available only with one sign.

\begin{lemma}[Edge rigidity]
\label{lem:qubit_edge_rigidity}
Let $G$ be any finite gauge-normalized single-qubit specification, possibly with several channels and one-sided drift terms.
\begin{enumerate}
    \item Every element of $E(W(G))$ is a Hamiltonian generator, and $\dim E(W(G)) \in \{0, 1, 3\}$.
    \item Let $G_H \subseteq G$ be the subset of purely Hamiltonian generators. Then $E(W(G)) = E(W(G_H))$. Generators with a nonvanishing dissipative part are edge-inert.
    \item Let $C$ be the closed conic hull of the Hamiltonian drift directions, and let $\mf{e}$ be the Lie algebra generated by the controls together with the lineality space $C\cap(-C)$ of $C$. If $\mf{e} = \R J_n$ for some axis $n$ and some Hamiltonian drift axis is not parallel to $n$, then $E(W(G)) = \so(3)$. In every other case $E(W(G)) = \mf{e}$. For $G \in \mathcal{F}$ this reduces to $E(W(G)) = \mf{e}_0$.
\end{enumerate}
\end{lemma}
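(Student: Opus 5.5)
Part (i) comes first. The plan is to use the inclusion $W(G)\subseteq \g(G)\cap\WGKLS$, established after Definition~\ref{def:DLWP}. It gives $E(W(G))\subseteq E(\WGKLS)$. By Lemma~\ref{lem:gkls_edge}, every edge element is then a Hamiltonian generator $\Lcal_H$. The edge is a Lie subalgebra of the Hamiltonian directions $\su(2)\cong\so(3)$: it is a linear subspace, and for $X,Y\in E(W)$ one has $e^{s\ad X}Y\in E(W)$ by edge invariance, so differentiating at $s=0$ gives $[X,Y]\in E(W)$. The subalgebras of $\so(3)$ have dimension $0$, $1$ or $3$, since any two independent rotation generators bracket onto a third. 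This proves (i).

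For (ii), I would run a facial argument in $\WGKLS$. Write each generator as $\Lcal=\Lcal_{H}+\Dcal_K$ with $K\ge0$. The dissipation functional $\Lcal\mapsto\Tr K$ is a linear functional on $\g_{LK}$. It vanishes on Hamiltonians, is nonnegative on $\WGKLS$, and is invariant under $\Ad$ by unitary channels. Suppose an edge element $X$ is written as a limit of nonnegative combinations of $\Ad$-conjugates (under the edge group) of elements of $G$. Then both $X$ and $-X$ lie in $W$, so $\Tr K$ of $X$ must be zero. This forces the total weight on dissipative rays to tend to zero in the limit, because $\Tr K$ is strictly positive and conjugation-invariant on each dissipative ray. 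Only the Hamiltonian rays of $G_H$ then survive. I would make this precise through the saturation construction of Proposition~\ref{prop:construction_W(G)_saturation}. The idea is to show that $W(G)$ is the closure of the cone generated by $W(G_H)$ and the $e^{\ad E}$-orbits of the dissipative rays, with $E=E(W(G_H))$. This set is a closed Lie wedge because $E$-conjugation preserves $\Tr K$ and hence keeps dissipative contributions away from the edge. Its edge therefore lies in the kernel of $\Tr K$, where it equals the edge of the Hamiltonian part. One must also check that this candidate still has edge exactly $E$, so that the construction is self-consistent. This gives $E(W(G))=E(W(G_H))$.

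Part (iii) then reduces to Hamiltonian wedges on $\so(3)$, using the compactness of $\mathrm{SO}(3)$. The generators are the controls, which supply the subspace $\mf e_0'$, together with the drift cone $C$. The edge contains the lineality space of $C$ and the controls, hence contains $\mf e$, since edges are subalgebras. If $\mf e=\so(3)$ we are done. If $\mf e=\{0\}$, the cone consists of the drifts in $C$ only, with no conjugation. Being pointed, it has trivial edge, so the edge is $\mf e$. This is the main obstacle. The wedge is only required to be closed and invariant under its own edge, with no group compactness in the definition, so pointedness must be read off directly rather than from any compactness argument. The remaining case is $\mf e=\R J_n$. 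Here the wedge is generated by $\R J_n$ and the $e^{\theta\ad J_n}$-orbits of the drift directions. If some drift $d$ is not parallel to $n$, averaging its orbit over $\theta\in[0,2\pi]$ gives a vector along $n$. The orbit circle also contains $d$ and its image under the half-turn about $n$. Their sum lies along $n$, which is already in the edge, so subtracting shows that $d_\perp$ and $-d_\perp$ both lie in $W$. Hence $d_\perp$ lies in the edge. The edge then contains $J_n$ and a perpendicular direction, and so it is $\so(3)$ by (i). If every drift is parallel to $n$, the wedge is $\R J_n+C$. Its edge is $\R J_n$ plus the lineality of $C$, which equals $\mf e$. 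For $G\in\mathcal F$ there are no drifts, so $E(W(G))=\mf e_0$.
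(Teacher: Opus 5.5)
Your proposal is correct. Parts (i) and (iii) match the paper. In (i) the edge lies in $E(\WGKLS)\cong\so(3)$ and is a subalgebra. The misaligned-drift case in (iii) uses the same half-turn trick, and your orbit-averaging remark there is not needed. In the all-parallel case, your observation that $G_H\subseteq\R J_n$, so that $\R J_n$ is itself a closed Lie wedge containing $G_H$, is what the paper's comparison cone of GKLS generators with matrix part in $\R J_n$ reduces to. The case $\mf e=0$ is not an obstacle: a pointed closed cone is vacuously a Lie wedge.

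The genuine difference is in (ii). The paper runs a simultaneous induction along the saturation tower of Proposition~\ref{prop:construction_W(G)_saturation}. It shows that the towers for $G$ and $G_H$ have the same slice $W_k\cap\ker\varphi$, with $\varphi=\Tr K$, and hence the same edge and conjugating group at every stage. You instead name the fixed point directly:
\[
W'=\overline{\cone}\bigl(W(G_H)\cup\Ad_{\mathcal K}\{\text{dissipative generators}\}\bigr),
\]
where $\mathcal K$ is the group generated by $\exp E(W(G_H))$. You then check that $W'\cap\ker\varphi=W(G_H)$, so that $E(W')=E(W(G_H))$ and $W'$ is a closed Lie wedge in $\g(G)$, and conclude by minimality. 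This is shorter and avoids the tower bookkeeping, including the paper's identity $W_k(G)=\overline{\cone}\,\Ad_{H_{k-1}}(G)$. It also gives the stronger statement $W(G)=W'$, a multi-channel analogue of Lemma~\ref{lem:qubit_orbit_average}. The paper's route stays inside the iterative construction and never has to guess the answer.

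Two steps you only gesture at should be written out. First, going from ``total dissipative mass $\to0$'' to ``dissipative contributions $\to0$'' needs bounded $\mathcal K$-orbits. This holds because $\mathcal K$ acts by unitary conjugation, which is the paper's ``the conjugating groups are compact.'' Second, the reverse inclusion $E(W(G_H))\subseteq E(W(G))$ needs $W(G_H)\subseteq W(G)$. Since $W(G_H)$ is defined relative to $\g(G_H)$, this follows from the fact that $W(G)\cap\g(G_H)$ is a closed Lie wedge in $\g(G_H)$ containing $G_H$.
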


The proof is given in Appendix~\ref{app:edge_proof}.

The proof in Appendix~\ref{app:edge_proof} also identifies the half-turn mechanism that completes a misaligned Hamiltonian drift into the reversible edge. For the drift-free family considered here, the edge remains the declared coherent-control algebra.

\Needspace{8\baselineskip}
\subsection{Possible Levi Factors}

\begin{proposition}[Semisimple-type menu]
For every single-qubit specification $G$, of any size and with or without drift, the Levi factor of $\g(G)$ is isomorphic to one of
\begin{equation}
\{0\}, \qquad \sll(2,\R), \qquad \so(3), \qquad \sll(3,\R).
\end{equation}
\end{proposition}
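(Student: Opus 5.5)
The plan is to reduce to semisimple subalgebras of $\gl(3,\R)$. By Theorem~\ref{thm:structure_glk}, $\g(G)\subseteq\gLK(2)\cong\gl(3,\R)\ltimes\R^3$. A Levi factor $\s$ of $\g(G)$ is a semisimple subalgebra of this affine algebra. The translation ideal $\R^3$ is abelian, so $\s\cap\R^3=\{0\}$, because a semisimple algebra has no nonzero abelian ideal and $\s\cap\R^3$ is an abelian ideal of $\s$. Hence the projection $p:\gl(3,\R)\ltimes\R^3\to\gl(3,\R)$ is injective on $\s$. Its image $p(\s)$ is semisimple and therefore lies in $[\gl(3,\R),\gl(3,\R)]=\sll(3,\R)$. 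It is thus enough to list, up to isomorphism, the semisimple subalgebras of $\sll(3,\R)$, and the answer does not depend on the size of $G$ or on the presence of drift.

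Next I classify the semisimple subalgebras $\mf h\subseteq\sll(3,\R)$ by dimension. Semisimple real Lie algebras have dimension $3$, $6$, $8$, or more. Since $\dim\sll(3,\R)=8$, the dimension-$8$ case gives $\mf h=\sll(3,\R)$, and dimensions above $8$ are impossible.

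For dimension $6$, the candidates are $\sll(2,\R)^2$, $\sll(2,\R)\oplus\so(3)$, $\so(3)^2\cong\so(4)$, and $\sll(2,\C)_\R$. I rule these out through the representation on $\R^3$. After complexifying, each candidate gives a rank-$2$ semisimple algebra acting faithfully on $\C^3$. Its complexification is $\sll(2,\C)\oplus\sll(2,\C)$, and a faithful representation of that algebra needs both factors to act nontrivially. The smallest such representation has dimension $2+2=4>3$ (a sum of nontrivial modules), while tensor products have dimension at least $4$. So no $6$-dimensional semisimple subalgebra exists. An alternative argument is that a rank-$2$ subalgebra of the rank-$2$ algebra $\sll(3,\C)$ would be a regular subalgebra of type $A_1\times A_1$, and none exists in $A_2$.

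For dimension $3$, the algebra $\mf h$ is a real form of $\sll(2,\C)$, namely $\so(3)$ or $\sll(2,\R)$. Both occur. Examples are $\so(3)$ itself and $\sll(2,\R)$ embedded block-diagonally, or as $\so(2,1)$. Dimension $0$ gives $\{0\}$. This yields exactly the listed menu.

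The main obstacle is the dimension-$6$ exclusion, where the representation-theoretic argument must be stated carefully. The step I would check first is the claim that a faithful complex representation of $\sll(2,\C)\oplus\sll(2,\C)$ has dimension at least $4$. The statement only asserts membership in the menu, so no realizability argument is needed here. Realizability of each entry, if wanted, follows from the later classification examples.
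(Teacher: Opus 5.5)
Your proof is correct and follows essentially the same route as the paper. You embed the Levi factor injectively into $\sll(3,\R)$ via the matrix projection and the derived algebra, where the paper instead quotients by the ambient radical $\R\mathbb{I}\oplus\R^3$, and you then exclude the six-dimensional cases (those whose complexification is $\sll(2,\C)\oplus\sll(2,\C)$) because such an algebra has no faithful three-dimensional module. The only real difference is bookkeeping: you enumerate real semisimple algebras by dimension $0,3,6,8$, while the paper enumerates simple complex algebras with a faithful module of dimension at most three and then takes real forms.
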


The proof is given in Appendix~\ref{app:levi_menu_proof}.

All four types are realized already inside $\mathcal{F}$, and $\sll(2,\R)$ is realized in two ways inequivalent at the level of the pair $(\g, \s)$, reductively in stratum $\mathsf{Q4}$ and non-reductively in $\mathsf{Q5}$ and $\mathsf{Q6}$.

\subsection{The Classification Theorem}
\label{sec:qubit_strata}

\begin{theorem}[Exhaustive thirteen-value Levi-data stratification for single-channel qubits]
\label{thm:qubit_classification}
Let $G$ range over the family $\mathcal{F}$ of Definition~\ref{def:qubit_family}, taken up to the equivalence of Lemma~\ref{lem:qubit_normalization}.
\begin{enumerate}
    \item \textup{(Edge.)} The edge satisfies $E(W(G)) = \mf{e}_0$. It has dimension $0$, $1$, or $3$ and consists of Hamiltonian generators.
    \item \textup{(Tuple-value and discriminant completeness.)} The tuple $T(G)$ takes exactly the thirteen values listed in Table~\ref{tab:qubit_strata}. The control algebra together with the channel discriminants $\bigl(m_1,m_2,\operatorname{dev},b\bigr)$ of Section~\ref{sec:channel_normal_form} assigns every specification to exactly one stratum.
    \item \textup{(Realization.)} Every row is realized.
\end{enumerate}
\end{theorem}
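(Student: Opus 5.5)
Part (i) is immediate from Lemma~\ref{lem:qubit_edge_rigidity}(iii): for $G\in\mathcal F$ there are no drift terms, so the lineality space of the drift hull is trivial, $\mf e=\mf e_0$, and $E(W(G))=\mf e_0$. By the remark on admissible control algebras, $\mf e_0\in\{0,\so(2)_n,\so(3)\}$, which gives dimensions $0,1,3$.

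For (ii) and (iii) I will split first by the control algebra and then by the channel discriminants. The case $\mf e_0=0$ with a single dissipator gives $\g=\R\Dcal_L$ and $W=\R_{\ge0}\Dcal_L$, an abelian solvable stratum, with $n_{\mathrm{tr}}$ depending only on whether $Q=0$ (impossible, since $|a|\neq0$). The purely coherent cases give $\g=\so(2)$ or $\so(3)$ with $W=\g$. When the dissipator is present I will compute $\g(G)$ by ad-closing $\Dcal_L=(Q,b)$ under $\mf e_0$, using the bracket rule $[(A,b),(A',b')]=([A,A'],Ab'-A'b)$.

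\emph{Case $\mf e_0=\so(3)$.} The $\mathrm{SO}(3)$-orbit of the traceless part of $Q$ spans all of $\operatorname{Sym}_0(3)$, because that space is irreducible and $Q\not\propto I$ when $0\le t\le s$ with $s>0$. Brackets of $\so(3)$ with $\operatorname{Sym}_0$ already generate $\sll(3)$, and $\R I$ is present exactly via $\operatorname{tr}Q$. The $b$-orbit then gives the translations $\R^3$ when $b\neq0$. The resulting algebras are $\gl(3)\ltimes\R^3$ when $b\ne0$ and $\gl(3)$ when $b=0$.

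\emph{Case $\mf e_0=\R J_z$.} I decompose $Q$ into weight sectors $\R I\oplus\mathcal S_0\oplus\mathcal S_1\oplus\mathcal S_2$ and $b=b_z+b_\perp$. Iterated $\ad J_z$ recovers each nonzero weight component separately. Brackets among the nonzero sectors then fill in the rest: $[\mathcal S_1,\mathcal S_1]$ yields rotations about the $x$ and $y$ axes, and $\mathcal S_1$ together with $J_z$ generates $\sll(3)$-type pieces. From this I obtain the pattern of which of $m_1,m_2,\operatorname{dev},b_\perp,b_z$ vanish, with the following expected outcomes. If $m_1=m_2=0$, the algebra stays abelian, spanned by $J_z$, $I$, $\operatorname{diag}$ and $b_z$ (solvable or reductive). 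If $m_1=0$ and $m_2\neq0$, the $xy$-block gives $\gl(2)$, hence an $\sll(2,\R)$ Levi factor; this is reductive when $b_\perp=0$ (Q4) and non-reductive with translations when $b_\perp\neq0$ (Q5/Q6). If $m_1\neq0$, the algebra becomes $\gl(3)$ or $\gl(3)\ltimes\R^3$ with the $\sll(3)$ factor (e.g.\ Q9). The semisimple-type menu proposition confirms that no other Levi factors arise.

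Within each stratum I then compute the remaining tuple entries. $\dim E\cap\rr$ follows from (i), using that the radical is the center, $\R I$, or $\R I\ltimes$ translations. $\Wrad$ and its edge come from the minimal wedge. I will construct the wedge with Proposition~\ref{prop:construction_W(G)_saturation}, as the closed convex hull of $\mf e_0$ and the $\Ad$-orbit of $\R_{\ge0}\Dcal_L$, and then intersect with $\rr$ using the trace/translation coordinates. One constraint helps here: $\Wrad\subseteq\WGKLS$ forces the $I$-coefficient to be nonpositive. Since $\operatorname{tr}Q<0$ on nonzero dissipative rays, $\Wrad$ is pointed in the $I$-direction.

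Realization is shown by explicit parameter choices $(n,q)$ for each row. Exactness of the assignment follows because the discriminant pattern determines $\g$ (and hence the stratum) uniquely.

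The main obstacle I expect is the exceptional lower-dimensional loci, such as $q=1$ (where $t=s$) and the special angle loci where some $m_i$ or $\operatorname{dev}$ vanishes while others do not. On these loci, degenerate cancellations can change $\dim\operatorname{span}\Wrad$ without changing $\g$. For these cases I would compute the averaged orbit cone explicitly, using the half-turn averaging of Example~\ref{ex:q6_half_turn} as the model.
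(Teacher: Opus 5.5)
Part (i) and the full-control branch are essentially right, but there is a genuine gap in the axial branch, and it sits exactly on the exceptional strata.

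\textbf{The axial bracket structure.} Two of your bracket claims are false. First, $[\mathcal S_1,\mathcal S_1]\subseteq\R J_z$; for example, $[S_{13},S_{23}]=E_{12}-E_{21}=-J_z$, so this bracket produces no transverse rotation. Second, $\R J_z\oplus\mathcal S_1$ is already closed under brackets: it is a copy of $\sll(2,\R)$. So $\mathcal S_1$ together with $J_z$ never generates $\sll(3,\R)$. The rotations $J_x,J_y$ arise only from the cross-weight brackets $[\mathcal S_1,\mathcal S_2\oplus\mathcal S_0]\subseteq\mathcal R_1$. These are nonzero exactly when the weight-two-plus-deviatoric component $Z$ of $Q$ is nonzero.

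Consequently your outcome ``$m_1\neq0$ gives $\gl(3)$ or $\gl(3)\ltimes\R^3$'' fails on the locus $m_1\neq0$, $m_2=\operatorname{dev}=0$. For the witness $a=(1,i,1)$ one has $Q=-4\mathbb I+2S_{13}$. The matrix projection is then $\sll(2,\R)\oplus\R\mathbb I$, acting irreducibly on $\R^3$, and $\g$ is seven-dimensional. This is $\mathsf{Q6}$, which you instead placed in the $m_1=0$, $b_\perp\neq0$ branch, where only $\mathsf{Q5}$ occurs.

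To handle this locus you need two further ingredients. You need a rank argument ($\operatorname{rank}M\le2$) showing that $m_1\neq0$, $m_2=0$ forces non-unitality. You also need a way to obtain the full translation kernel without a $\gl(3)$ projection. The paper uses the affine-kernel dichotomy: with $\mathbb I$ in an irreducible projection, the kernel is $0$ or $\R^3$. It is $\R^3$ here because the channel's fixed point $v_\ast\propto\hat\nu$ is not on the control axis.

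More generally, your plan never determines the translation kernel on the axial branches. In $\mathsf{Q4}$ a common fixed point makes $\g$ a graph with trivial kernel. In $\mathsf{Q5}$ the half-turn difference $(Q,b)-\Ad_{R_\pi}(Q,b)=(0,2b)$ yields $\hat z^\perp$.

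\textbf{The stratum is not determined by $\g$.} Your closing step, ``the discriminant pattern determines $\g$ and hence the stratum,'' is false. $\mathsf{Q7}$ and $\mathsf{Q8}$ share $\g=\gl(3,\R)$, and $\mathsf{Q9}$ and $\mathsf{Q10}$ share $\g=\gl(3,\R)\ltimes\R^3$. Each pair differs only in $\dim\operatorname{span}\Wrad\in\{0,1\}$. Pointedness from the trace does not decide whether $\Wrad$ is zero.

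The missing idea is the orbit-average description. $W(G)$ is the closed conic hull of $\mf e_0$ and the $\mathcal K$-orbit of $\Dcal_L$. When $\rr$ has no rotation component, $\Wrad$ is exactly the set of averages $A_\mu$ with $\mu\ge0$ that lie in $\rr$; compactness of the control group handles the closure. Since $\operatorname{tr}Q$ and $Q_{zz}$ are constant along the $e^{\theta J_z}$-orbit, an average of positive mass can lie in $\rr=\R\mathbb I\oplus T$ only if $Q_{zz}=\tfrac13\operatorname{tr}Q$, that is, only if $\operatorname{dev}(M)=0$. The uniform average gives sufficiency. Vanishing of the nonzero Fourier moments shows the intersection is a single ray, which also gives the ray in $\mathsf{Q6}$. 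This magic-locus mechanism is the correct way to handle what you called degenerate angle loci; half-turn averaging does not serve this purpose.

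\textbf{Remaining span entries and $\mathsf{Q3}$.} The span of $\Wrad$ in $\mathsf{Q5}$ and $\mathsf{Q13}$ needs its own moment computations: the disk bound $|m|\le\lambda/\sqrt2$ in $\mathsf{Q5}$, and the dipole sweep in $\mathsf{Q13}$. In $\mathsf{Q3}$, $\g=\R J_z\oplus\R\Dcal_L$ is two-dimensional; $\mathbb I$, the diagonal part, and $b_z$ are not separately in $\g$. You also need $\operatorname{rank}M\le2$ to force $b\parallel\hat z$, which is what makes the two generators commute.
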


\begin{remark}[Completeness scope]
\label{rem:qubit_completeness_scope}
The tuple is complete for stratum assignment. Finer equivalence questions also depend on $C_{\mathrm{quot}}$, the full fiber map, and the in-stratum channel moduli, so one tuple may represent several control specifications or DLWP isomorphism classes. In the moduli display, $\mathsf{Q3}$ has two isolated channel orbits. The strata $\mathsf{Q6}$, $\mathsf{Q8}$, and $\mathsf{Q12}$ each have one. The strata $\mathsf{Q2}$ and $\mathsf{Q11}$ are the single jump-free coherent-data orbits.
\end{remark}

\paragraph{Structure of the classification proof.}
The proof divides according to the coherent-control algebra. Without controls, a single channel generates a ray. Jump-free specifications give the coherent algebra itself. With one control axis, the weight decomposition separates the axisymmetric and weight-two branches from those with a nonzero weight-one component. For the branches with $m_1\neq0$, cross-weight brackets regenerate the full rotation algebra except on the $\mathsf{Q6}$ degeneration. Full coherent control generates the rotation orbit directly.

The affine lift is resolved by its pure-translation kernel $T=\{b:(0,b)\in\g(G)\}$. If all fields share a fixed point $v_\ast$, their translations satisfy $b=-Av_\ast$, so the algebra is a graph over its matrix projection. In the irreducible non-unital branches whose matrix projection contains the identity, absence of such a point gives all three translations by the affine-kernel dichotomy. The matrix projection and translation kernel determine the radical, while orbit averages and the magic condition decide which dissipative directions lie in it. Appendix~\ref{app:qubit_classification_proofs} gives the full parameter-domain calculations, together with the realization and residual-moduli arguments.

Table~\ref{tab:qubit_strata} gives the thirteen tuple values and the tests that select them. In its witnesses, $\sigma_-=(\sigma_x-i\sigma_y)/2=|1\rangle\langle0|$ in the computational basis.

\begin{table}[H]
\centering
\footnotesize
\setlength{\tabcolsep}{3pt}
\begin{tabular}{c c l c c c c}
\hline
stratum & $\mf{e}_0$ & channel class (discriminants) & $(\dim\g, \dim\rr, [\s])$ & type & $(E, E\cap\rr)$ & $(\mathrm{sp}\,\Wrad, E(\Wrad), n_{\mathrm{tr}})$ \\
\hline
$\mathsf{Q1}$ & $0$ & any single channel & $(1,1,0)$ & sv, ab & $(0,0)$ & $(1,0,0)$ \\
$\mathsf{Q2}$ & $\so(2)$ & none & $(1,1,0)$ & sv, ab & $(1,1)$ & $(1,1,0)$ \\
$\mathsf{Q3}$ & $\so(2)$ & axisymmetric, $m_1 = m_2 = 0$ & $(2,2,0)$ & sv, ab & $(1,1)$ & $(2,1,0)$ \\
$\mathsf{Q4}$ & $\so(2)$ & $m_1 = 0$, $m_2 \neq 0$, $b \in \R\hat z$ & $(4,1,\sll(2,\R))$ & red & $(1,0)$ & $(1,0,0)$ \\
$\mathsf{Q5}$ & $\so(2)$ & $m_1 = 0$, $m_2 \neq 0$, $b_\perp \neq 0$ & $(6,3,\sll(2,\R))$ & n-red & $(1,0)$ & $(3,0,2)$ \\
$\mathsf{Q6}$ & $\so(2)$ & $m_1 \neq 0$, $m_2 = 0$, $\operatorname{dev} = 0$ & $(7,4,\sll(2,\R))$ & n-red & $(1,0)$ & $(1,0,3)$ \\
$\mathsf{Q7}$ & $\so(2)$ & unital, $m_1 \neq 0$, $\operatorname{dev} \neq 0$ & $(9,1,\sll(3,\R))$ & red & $(1,0)$ & $(0,0,0)$ \\
$\mathsf{Q8}$ & $\so(2)$ & unital, $m_1 \neq 0$, $\operatorname{dev} = 0$ & $(9,1,\sll(3,\R))$ & red & $(1,0)$ & $(1,0,0)$ \\
$\mathsf{Q9}$ & $\so(2)$ & non-unital, $m_1 \neq 0$, $\operatorname{dev} \neq 0$ & $(12,4,\sll(3,\R))$ & n-red & $(1,0)$ & $(0,0,3)$ \\
$\mathsf{Q10}$ & $\so(2)$ & non-unital, $m_1 \neq 0$, $m_2 \neq 0$, $\operatorname{dev} = 0$ & $(12,4,\sll(3,\R))$ & n-red & $(1,0)$ & $(1,0,3)$ \\
$\mathsf{Q11}$ & $\so(3)$ & none & $(3,0,\so(3))$ & ss & $(3,0)$ & $(0,0,0)$ \\
$\mathsf{Q12}$ & $\so(3)$ & unital & $(9,1,\sll(3,\R))$ & red & $(3,0)$ & $(1,0,0)$ \\
$\mathsf{Q13}$ & $\so(3)$ & non-unital & $(12,4,\sll(3,\R))$ & n-red & $(3,0)$ & $(4,0,3)$ \\
\hline
\end{tabular}
\caption{The thirteen single-channel qubit strata. Abbreviations: sv solvable, ab abelian, red reductive, n-red non-reductive, ss semisimple. The notation $\mathrm{sp}$ denotes the linear span, and the two rightmost column groups list dimensions. Witnesses are dephasing $L = \sigma_m/\sqrt{2}$ along the axis the discriminants dictate, extremal damping $L = \sigma_-$ conjugated to the dictated axis, and for $\mathsf{Q6}$ the channel $a = (1, i, 1)$. The magic locus $\operatorname{dev}(M) = 0$ is the condition $|a_z|^2 = |a|^2/3$ of Lemma~\ref{lem:magic_locus}. Lemma~\ref{lem:qubit_parametric_affine_closure} and Corollaries~\ref{cor:qubit_parametric_radicals}--\ref{cor:qubit_parametric_radical_wedges} prove the affine tuple on every branch.}
\label{tab:qubit_strata}
\end{table}

The residual data below describe the channel orbits that remain within each stratum after the stated equivalences. Write $q=t/s\in[0,1]$ in the principal form of Corollary~\ref{cor:single_jump_principal_form}, after positive channel scaling. For $\so(2)_{\hat z}$ control, let $[\hat z]_{\mathrm{pr}}$ denote the orbit of the unoriented control-axis line in the principal frame under the orthogonal channel stabilizer $\{R\in\mathrm O(3):RQR^T=Q,\ Rb=b\}$. When useful, set
\[
x=|\hat z\cdot\hat\alpha|^2,\qquad
y=|\hat z\cdot\hat\beta|^2,\qquad
\zeta=|\hat z\cdot\hat\nu|^2,\qquad x+y+\zeta=1.
\]
The magic equation is
\[
x+q^2y=\frac{1+q^2}{3}.
\]

\begin{center}
\footnotesize
\begin{tabular}{c p{0.79\textwidth}}
\hline
stratum & residual data after the stated equivalences \\
\hline
$\mathsf{Q1}$ & $q\in[0,1]$. \\
$\mathsf{Q2}$ & one jump-free $\so(2)$ coherent-data orbit. \\
$\mathsf{Q3}$ & two isolated orbits: axial dephasing ($q=0$) and axial extremal damping ($q=1$). \\
$\mathsf{Q4}$ & $q\in[0,1)$. At $q=0$ the dephasing axis is transverse to $\hat z$, while for $0<q<1$ one has $\hat\nu\parallel\hat z$. \\
$\mathsf{Q5}$ & for $0<q<1$, two branches $\hat z=\hat\alpha$ and $\hat z=\hat\beta$. They merge at $q=1$. \\
$\mathsf{Q6}$ & one orbit, with $q=1/\sqrt2$ and $(x,y,\zeta)=(1/2,0,1/2)$ up to the stated equivalences. \\
$\mathsf{Q7}$ & the unital tilt $x\in(0,1)\setminus\{1/3\}$. \\
$\mathsf{Q8}$ & one unital magic-tilt orbit, $x=1/3$. \\
$\mathsf{Q9}$ & the continuous set of $(q,[\hat z]_{\mathrm{pr}})$ with $q\in(0,1]$, $m_1\neq0$, and $\operatorname{dev}\neq0$. \\
$\mathsf{Q10}$ & the continuous locus of $(q,[\hat z]_{\mathrm{pr}})$ with $q\in(0,1]$ satisfying the magic equation above, $m_1\neq0$, and $m_2\neq0$. The $\mathsf{Q6}$ degeneration is excluded. \\
$\mathsf{Q11}$ & one jump-free $\so(3)$ coherent-data orbit. \\
$\mathsf{Q12}$ & one full-control dephasing orbit, $q=0$. \\
$\mathsf{Q13}$ & $q\in(0,1]$. \\
\hline
\end{tabular}
\end{center}

The possible dimensions of $\g(G)$ are $\{1,2,3,4,6,7,9,12\}$. The exceptional strata $\mathsf{Q6}$, $\mathsf{Q8}$, and $\mathsf{Q10}$ lie on measure-zero loci selected by vanishing conditions on the isotypic discriminants.

\subsection{Orbit Geometry and Radical Wedges}
\label{sec:orbit_geometry}
The algebraic classification determines the radical, but computing its intersection with the wedge requires the nonnegative weights of dissipative directions. Compact-control orbit averages retain those weights and reveal a cone discriminant that Lie closure alone does not detect.

Let $G \in \mathcal{F}$ have control algebra $\mf{e}_0 \neq 0$ and a channel $\Dcal_L$ with Bloch data \eqref{eq:jump_normal_form}. The controls generate a compact connected group $\mathcal{K} \subseteq \mathrm{SO}(3)$ of Bloch rotations, either a circle group or all of $\mathrm{SO}(3)$. For $R \in \mathcal{K}$, the rotated channel $\operatorname{Ad}_R \Dcal_L$ has Bloch data $(RQR^{T}, Rb)$. We describe nonnegative combinations of these orientations by a finite nonnegative Borel measure $\mu$ on $\mathcal{K}$ and set
\begin{equation*}
A_\mu \;:=\; \int_{\mathcal{K}} \operatorname{Ad}_R(\Dcal_L)\; d\mu(R),
\qquad
\lambda(\mu) \;:=\; \mu(\mathcal{K}).
\end{equation*}

\begin{lemma}[Orbit-average representation]
\label{lem:qubit_orbit_average}
The wedge $W(G)$ is the closed conic hull of
\[
\mf{e}_0 \cup \{\operatorname{Ad}_R \Dcal_L : R \in \mathcal{K}\}.
\]
If no element of $\rr$ has a nonzero rotation component, as in every nonabelian stratum of Table~\ref{tab:qubit_strata}, then
\[
\Wrad=\{A_\mu:\mu\geq0,\ A_\mu\in\rr\},
\]
and this set is closed. The matrix part of $A_\mu$ has trace $-4|a|^2\lambda(\mu)$. Hence $\Wrad$ is pointed and $E(\Wrad)=0$.
\end{lemma}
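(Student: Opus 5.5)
The plan has three steps: identify a candidate cone, show it is a closed Lie wedge, and show minimality. Set
\[
W' := \overline{\cone}\Bigl(\mf e_0 \cup \{\Ad_R\Dcal_L : R\in\mathcal K\}\Bigr).
\]

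\emph{Containment $W'\subseteq W(G)$.} By Lemma~\ref{lem:qubit_edge_rigidity} and Theorem~\ref{thm:qubit_classification}(i), $E(W(G))=\mf e_0$. Lie-wedge invariance then gives $e^{\ad X}\Dcal_L\in W(G)$ for every $X\in\mf e_0$. Because $\mathcal K=\exp(\mf e_0)$ is compact and connected, these maps exhaust $\Ad_R\Dcal_L$ for $R\in\mathcal K$. Since $W(G)$ is a closed convex cone, it contains $W'$.

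\emph{Reverse containment.} First, $W'$ is a closed convex cone containing $G$. Its edge contains $\mf e_0$, and it is $\Ad_{\mathcal K}$-invariant because the generating set is. I must still check that $E(W')$ is no larger than $\mf e_0$. Here I use $W'\subseteq\WGKLS$: by Lemma~\ref{lem:gkls_edge}, edge elements are Hamiltonian. Write a Hamiltonian element of $W'$ as a limit of $X_k+\sum_i c_i\Ad_{R_i}\Dcal_L$ with $X_k\in\mf e_0$. The trace of the matrix part is $-4|a|^2\sum_i c_i$, which forces the total dissipative weight to tend to $0$. Compactness of $\mathcal K$ bounds the dissipative parts, so they tend to $0$ and the limit lies in $\mf e_0$. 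Hence $E(W')=\mf e_0$, and $W'$ is a Lie wedge in $\g(G)$ containing $G$. Minimality of $W(G)$ gives equality.

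\emph{The radical formula.} Every element of $W(G)$ is $X+A_\mu$ with $X\in\mf e_0$ and $\mu\ge0$ a finite measure. To see this, write the conic-hull elements this way and pass to closures: by the trace bound, total masses stay bounded, and weak-$*$ compactness of measures on compact $\mathcal K$ gives a limiting measure. The $X$ parts then converge as differences. If $X+A_\mu\in\rr$, the hypothesis that radical elements carry no rotation component forces $X$ to vanish. The reason is that the matrix part of $A_\mu$ is symmetric while $X$ is antisymmetric, so the rotation component of $X+A_\mu$ is exactly $X$. Thus $\Wrad=\{A_\mu\in\rr\}$. Closedness follows from the same weak-$*$ compactness, since $\rr$ is closed and $\mu\mapsto A_\mu$ is continuous and bounded by the trace. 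The trace of $A_\mu$ is $\lambda(\mu)\Tr Q=-4|a|^2\lambda(\mu)$, using $\Tr Q=2(|\alpha|^2+|\beta|^2-3|a|^2)=-4|a|^2$. So $A_\mu=0$ forces $\lambda=0$, and $\pm A_\mu\in\Wrad$ would require $\lambda\le0$, giving pointedness.

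The main obstacle is the edge computation for $W'$. It is needed to confirm that $W'$ really is a Lie wedge whose edge is only $\mf e_0$, and it is handled by the trace/compactness argument above. Closure under limits of unbounded control components also needs care: I split off the symmetric versus antisymmetric matrix parts so that $\mf e_0$ components converge separately.
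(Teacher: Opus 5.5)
Your proof is correct and follows essentially the paper's route. A trace functional (your Bloch trace $\operatorname{tr}Q$ is $-2$ times the paper's Kossakowski trace $\varphi$), together with compactness of $\mathcal K$, pins the edge of the closed conic hull to $\mf e_0$, and minimality then gives equality. The symmetric/antisymmetric split plus weak-$*$ compactness gives the radical formula, the trace identity, and pointedness.

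Two small points. For $W'\subseteq W(G)$ you only need $\mf e_0\subseteq E(W(G))$, which follows directly from the bidirectional controls and the fact that edges of Lie wedges are subalgebras, so edge rigidity is not needed. For the equality $\Wrad=\{A_\mu\in\rr\}$ you should also state the converse that every $A_\mu$ lies in $W(G)$: approximate $\mu$ weak-$*$ by finitely supported measures of the same mass and use closedness of $W(G)$, as the paper does.
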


\begin{proof}
Write $C$ for the stated conic hull and $\varphi$ for the Kossakowski-trace functional from the proof of Lemma~\ref{lem:qubit_edge_rigidity}. This functional is linear, nonnegative on $\WGKLS$, invariant under rotations, zero on $\mf{e}_0$, and equal to $2|a|^2\lambda(\mu)$ on $A_\mu$. Elements of $C$ are limits of sums $Y_n + A_{\mu_n}$ with $Y_n \in \mf{e}_0$ and $\mu_n$ finitely supported. If such a limit $X$ has $\varphi(X) = 0$, then $\lambda(\mu_n) \to 0$, the boundedness of the orbit $\{\operatorname{Ad}_R \Dcal_L\}$ gives $A_{\mu_n} \to 0$, and $X \in \mf{e}_0$. Edge elements come with both signs, so $\varphi$ vanishes on $E(C)$ and $E(C) = \mf{e}_0$. The cone $C$ is invariant under $e^{\operatorname{ad} \mf{e}_0}$ because its generating set is. So $C$ is a closed Lie wedge containing $\operatorname{cone}(G)$. Conversely, any closed Lie wedge containing $\operatorname{cone}(G)$ contains the controls with both signs, hence contains $\mf{e}_0$ in its edge, edges of Lie wedges being subalgebras. Invariance under $e^{\operatorname{ad} \mf{e}_0}$ follows from the wedge property, passes to the generated group $\mathcal{K}$ by closedness, and delivers the orbit and all of $C$. Minimality of $W(G)$ gives $W(G) = C$.

Now let $X \in W(G) \cap \rr$ and write $X = \lim\,(Y_n + A_{\mu_n})$. Rotated channels carry symmetric matrix parts and translations and no rotation component (Lemma~\ref{lem:single_jump_normal_form}), so the rotation component of the approximants is $Y_n$, and it converges to the rotation component of $X$, which vanishes by hypothesis. Hence $Y_n \to 0$ and $A_{\mu_n} \to X$. The masses converge because $\varphi(A_{\mu_n}) \to \varphi(X)$, so along a subsequence the $\mu_n$ converge weakly-$\ast$ to a finite $\mu \geq 0$, by compactness of bounded measure sets on the compact group $\mathcal{K}$, and $X = A_\mu$ by continuity of $\mu \mapsto A_\mu$. The same compactness shows the set of admissible averages is closed. Conversely, every $A_\mu$ lies in $W(G)$: finitely supported measures give conic combinations of orbit elements, and a general $\mu$ is a weak-$\ast$ limit of finitely supported measures of the same mass, so closedness of $W(G)$ finishes.

The trace of the matrix part is $\int \operatorname{tr}(RQR^{T})\, d\mu = \lambda(\mu) \operatorname{tr} Q = -4|a|^2 \lambda(\mu)$ by \eqref{eq:jump_normal_form}. If $A_\mu$ and $-A_\mu = A_{\mu'}$ both lie in $\Wrad$, then $A_{\mu+\mu'} = 0$ forces $\lambda(\mu) + \lambda(\mu') = 0$, so both measures are zero.
\end{proof}

For full coherent control, radical membership requires the averaged matrix part to be scalar. The next corollary expresses this as the vanishing of its traceless second moment, or quadrupole moment. The averaged translation is the corresponding dipole moment.

\begin{corollary}[Full-control radical wedge]
\label{cor:qubit_full_control_radical_wedge}
Suppose $\mf{e}_0=\so(3)$ and $\rr=\R\mathbb I\oplus T$, where $T=0$ in the unital case and $T=\R^3$ in the non-unital case. Then $A_\mu\in\rr$ exactly when
\[
\int RMR^{T}\,d\mu-\tfrac13\lambda(\mu)(\operatorname{tr}M)\mathbb I=0.
\]
For these quadrupole-free measures, the matrix part of $A_\mu$ equals $-\tfrac43|a|^2\lambda(\mu)\mathbb I$. Their dipole moments $\int Rb\,d\mu$ realize every vector in $\R^3$ when $b\neq0$. Hence $\operatorname{span}\Wrad=\R\mathbb I\oplus\R^3$ for a non-unital channel and $\operatorname{span}\Wrad=\R\mathbb I$ for a unital one.
\end{corollary}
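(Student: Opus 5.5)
The plan is to read off $\Wrad$ from the orbit-average representation of Lemma~\ref{lem:qubit_orbit_average}. Every nonabelian stratum satisfies its hypothesis, and here $\mf e_0=\so(3)$, so $\Wrad=\{A_\mu:\mu\ge0,\ A_\mu\in\rr\}$ with $\mathcal K=\mathrm{SO}(3)$. By \eqref{eq:jump_normal_form}, $Q=2M-2|a|^2\mathbb I$ with $M=\alpha\alpha^T+\beta\beta^T$. Hence $A_\mu$ has affine data
\[
\Bigl(2\textstyle\int RMR^T\,d\mu-2|a|^2\lambda(\mu)\mathbb I,\ \int Rb\,d\mu\Bigr).
\]

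\textbf{Membership criterion.} Since $\rr=\R\mathbb I\oplus T$, the element $A_\mu$ lies in $\rr$ exactly when two conditions hold:
\begin{enumerate}
\item its matrix part is a multiple of $\mathbb I$;
\item its translation lies in $T$.
\end{enumerate}
The second condition is automatic in both cases. In the non-unital case $T=\R^3$. In the unital case $b=0$, so the translation vanishes. The first condition holds exactly when the traceless part of $\int RMR^T\,d\mu$ vanishes. Since $\operatorname{tr}(RMR^T)=\operatorname{tr}M$, that traceless part is $\int RMR^T\,d\mu-\tfrac13\lambda(\mu)(\operatorname{tr}M)\mathbb I$, which gives the stated criterion.

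\textbf{Matrix part.} For a quadrupole-free measure the matrix part is scalar. Its trace is $-4|a|^2\lambda(\mu)$ by the trace identity already proved in Lemma~\ref{lem:qubit_orbit_average}. So the matrix part equals $-\tfrac43|a|^2\lambda(\mu)\mathbb I$.

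\textbf{Realizing every dipole.} This is the main obstacle: the measures must stay nonnegative and quadrupole-free while their dipoles cover $\R^3$. I would use harmonic analysis on $\mathrm{SO}(3)$. Let $\eta$ be normalized Haar measure and, for $v\in\R^3$, set
\[
d\mu_v=\bigl(1+\varepsilon\,v\cdot Rb\bigr)\,d\eta(R),\qquad \varepsilon|v||b|<1,
\]
which is a nonnegative measure.
\begin{itemize}
\item The entries of $RMR^T$ are matrix coefficients of the representations of weight $0$ and $2$ on $\operatorname{Sym}(3,\R)$. The function $v\cdot Rb$ is a matrix coefficient of the weight-$1$ representation. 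By Schur orthogonality, $\int RMR^T(v\cdot Rb)\,d\eta=0$.
\item Haar invariance gives $\int RMR^T\,d\eta=\tfrac13(\operatorname{tr}M)\mathbb I$. Together with the previous point, $\mu_v$ is quadrupole-free.
\item For the dipole, $\int Rb\,d\eta=0$ and $\int Rb\,(Rb)^T\,d\eta=\tfrac13|b|^2\mathbb I$. Hence $\int Rb\,d\mu_v=\tfrac{\varepsilon}{3}|b|^2 v$, which is nonzero when $b\neq0$ and $v\ne0$.
\end{itemize}
Positive scaling then realizes every vector of $\R^3$ as a dipole. In particular $\mu_v$ has mass $1$ and dipole $\tfrac\varepsilon3|b|^2v$.

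\textbf{Spans.} In the non-unital case, the elements $A_{\mu_v}$ have fixed scalar part $-\tfrac43|a|^2\mathbb I$ and translations filling a neighborhood of $0$ in $\R^3$. Differences of them together with $A_\eta$ therefore span $\R\mathbb I\oplus\R^3$. Since $\Wrad\subseteq\rr=\R\mathbb I\oplus\R^3$, the span is exactly this space. In the unital case every element of $\Wrad$ is a multiple of $\mathbb I$, and $A_\eta=-\tfrac43|a|^2\mathbb I\neq0$, so $\operatorname{span}\Wrad=\R\mathbb I$.
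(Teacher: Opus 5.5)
Your proposal is correct and follows the paper's proof essentially step for step. It uses the same orbit-average membership criterion, the same trace argument for the scalar coefficient $-\tfrac43|a|^2\lambda(\mu)$, and the same tilted Haar measures $(1+\varepsilon\, v\cdot Rb)\,d\eta$ to realize every dipole. The only difference is in phrasing: you kill the quadrupole correction by Schur orthogonality between spin-one and spin-zero/spin-two matrix coefficients, while the paper expresses the same fact as the Levi-Civita symbol (the only isotropic rank-three tensor) contracting to zero against the symmetric $M$.
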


\begin{proof}
The matrix part of $A_\mu$ is $2\int RMR^{T} d\mu - 2|a|^2 \lambda(\mu)\,\mathbb{I}$, and in both full-control strata every translation of $\g$ lies in $\rr$, so membership in $\rr$ amounts to the matrix part lying in $\R\mathbb{I}$, the stated quadrupole condition. The coefficient follows from the trace formula in Lemma~\ref{lem:qubit_orbit_average}. With $dR$ the Haar probability measure, the uniform average is quadrupole-free with zero dipole moment, since $\int RMR^{T} dR = \tfrac{1}{3}(\operatorname{tr} M)\,\mathbb{I}$ by Schur orthogonality. When $b\neq0$, set $\hat b=b/|b|$. To prescribe a dipole along a unit vector $u$, take $d\mu_\delta = (1 + \delta\, u \cdot R\hat b)\, dR$, which is nonnegative for $|\delta| \leq 1$ and has dipole moment $\delta \int (u \cdot R\hat b)\, Rb\, dR = \tfrac{1}{3}\delta |b|\, u$. Its quadrupole correction is a cubic moment of the rotation entries, and the only isotropic rank-three tensor is the Levi-Civita symbol, which contracts to zero against the symmetric matrix $M$. Positive scaling of these measures lets the dipoles sweep $\R^3$ when $b \neq 0$, while $b = 0$ leaves the ray of the Haar average.
\end{proof}

For axial control, define the unit-mass channel average
\begin{equation}
\label{eq:axial_unit_mass_average}
w_0=\frac{1}{2\pi}\int_0^{2\pi}\Ad_{e^{\theta J_z}}\Dcal_L\,d\theta.
\end{equation}
It is the weight-zero affine component, with strictly negative matrix trace. A measure of mass $\lambda$ contributes exactly $\lambda w_0$ in this sector.

\begin{corollary}[Axial circular radical wedge]
\label{cor:qubit_axial_radical_wedge}
Suppose $\mf{e}_0=\so(2)_{\hat z}$, $m_1=0$, $m_2\neq0$, and $b_\perp\neq0$, and assume $\rr=\R w_0\oplus\hat z^\perp$. Identify $\hat z^\perp$ with $\C$ so that $e^{\theta J_z}$ acts by multiplication by $e^{i\theta}$, and set $m=\int e^{i\theta}\,d\mu(\theta)$. Then $A_\mu\in\rr$ exactly when $\int e^{2i\theta}\,d\mu=0$, and
\begin{equation}
\label{eq:q5_cone}
\Wrad=\{\,\lambda w_0+m b_\perp:\lambda\geq0,\ m\in\C,\ |m|\leq\lambda/\sqrt2\,\}.
\end{equation}
The constant $1/\sqrt2$ is sharp, and $\operatorname{span}\Wrad=\rr$ is three-dimensional.
\end{corollary}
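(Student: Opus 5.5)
The plan is to compute everything through the orbit-average representation of Lemma~\ref{lem:qubit_orbit_average}. Under the stated hypothesis on $\rr$, no element of $\rr$ has a nonzero rotation component, so that lemma gives $\Wrad=\{A_\mu:\mu\geq0,\ A_\mu\in\rr\}$. Here $\mu$ is a finite nonnegative measure on the circle $\mathcal K=\{e^{\theta J_z}\}$.

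\textbf{Step 1: weight decomposition of $A_\mu$.} Decompose the Bloch data of $\Dcal_L$ by weight about $\hat z$. Since $m_1=0$, the matrix $Q$ is a weight-zero part plus the weight-two part coming from $m_2$. The translation is $b=b_z\hat z+b_\perp$. Under $e^{\theta J_z}$ the weight-two part is multiplied by $e^{2i\theta}$ in the $\mathcal S_2$ identification, and $b_\perp$ is multiplied by $e^{i\theta}$ in $\hat z^\perp\cong\C$. Integrating against $\mu$ then gives
\[
A_\mu=\lambda w_0+\Bigl(\int e^{2i\theta}d\mu\Bigr)\cdot m_2+m\,b_\perp .
\]
The weight-zero term is exactly $\lambda w_0$, as noted after \eqref{eq:axial_unit_mass_average}. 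The elements $w_0$ and $\hat z^\perp$ span $\rr$, and the $\mathcal S_2$ sector meets $\rr$ trivially. Since $m_2\neq0$, it follows that $A_\mu\in\rr$ exactly when $\int e^{2i\theta}d\mu=0$. In that case $A_\mu=\lambda w_0+m b_\perp$. So $\Wrad$ is the set of such vectors with $(\lambda,m)$ running over the first moments of nonnegative measures of mass $\lambda$ with vanishing second moment.

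\textbf{Step 2: the bound $|m|\leq\lambda/\sqrt2$.} This truncated trigonometric moment problem is the main step. I would use a nonnegative trigonometric polynomial with no first-harmonic freedom wasted. For any phase $\varphi$, set $x=\theta-\varphi$. Then
\[
0\leq\tfrac{1}{\sqrt2}\bigl(\cos x-\tfrac{1}{\sqrt2}\bigr)^2=\tfrac1{\sqrt2}-\cos x+\tfrac{1}{2\sqrt2}\cos 2x .
\]
Integrate against $\mu$. The $\cos2x$ term drops out because the second moment vanishes, which gives $\operatorname{Re}(e^{-i\varphi}m)\leq\lambda/\sqrt2$. Taking $\varphi=\arg m$ yields the bound.

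\textbf{Step 3: attainment of the whole disk, sharpness, and span.} Take the two-point measure with mass $\lambda/2$ at each of $\theta=\pm\pi/4$. Its second moment is $\tfrac\lambda2(i-i)=0$, and its first moment is $\lambda/\sqrt2$. This shows the constant is sharp.

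Translating $\mu$ by $\varphi$ multiplies $m$ by $e^{i\varphi}$ and preserves the vanishing of the second moment. Hence every point of the circle $|m|=\lambda/\sqrt2$ is attained.

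Convex combinations with the uniform measure of the same mass (for which $m=0$) then fill the disk, since the moment set is convex. Closedness is already supplied by Lemma~\ref{lem:qubit_orbit_average}, so this proves \eqref{eq:q5_cone}.

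Finally, $\lambda\geq0$ and $m\in\C$ range over a cone with nonempty interior in $\R\oplus\C$. Since $b_\perp\neq0$, the map $m\mapsto mb_\perp$ is onto $\hat z^\perp$. Therefore $\operatorname{span}\Wrad=\R w_0\oplus\hat z^\perp=\rr$, which is three-dimensional.
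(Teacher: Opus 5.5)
Your proposal is correct and follows the paper's proof essentially step for step. Both arguments use the weight decomposition of $A_\mu$ via Lemma~\ref{lem:qubit_orbit_average}, characterize membership in $\rr$ as the vanishing of the second Fourier moment, use the two-point measure at $\theta_0\pm\pi/4$ for sharpness, and fill the disk by rotation and convexity. The only variation is that you certify $|m|\leq\lambda/\sqrt2$ with the nonnegative trigonometric polynomial $\tfrac{1}{\sqrt2}\bigl(\cos x-\tfrac{1}{\sqrt2}\bigr)^2$, whereas the paper applies Cauchy--Schwarz with $\int\cos^2\theta\,d\mu=\lambda/2$; these are two forms of the same inequality.
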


\begin{proof}
Decompose the rotated channel against the weight sectors of Section~\ref{sec:channel_normal_form}. The weight-zero component of $A_\mu$ is $\lambda(\mu)\, w_0$. The weight-one matrix sector is absent because $m_1 = 0$. The planar translations rotate with $e^{i\theta}$ and contribute $m\, b_\perp$, inside $\rr$. The only sector transverse to $\rr$ is the weight-two block, which averages to the second Fourier moment acting on $m_2 \neq 0$. Membership is therefore the vanishing of that moment, and the average takes the displayed form. For the moment range, the constraint $\int \cos 2\theta\, d\mu = 0$ gives $\int \cos^2\theta\, d\mu = \lambda/2$, and the Cauchy--Schwarz inequality bounds $\int \cos\theta\, d\mu \leq \sqrt{\lambda}\,\sqrt{\lambda/2} = \lambda/\sqrt{2}$. Rotating $\theta$ moves the bound to every direction of $m$, so $|m| \leq \lambda/\sqrt{2}$. The two-point measure $\tfrac{\lambda}{2}\bigl(\delta_{\theta_0 - \pi/4} + \delta_{\theta_0 + \pi/4}\bigr)$ has vanishing second moment and attains $m = (\lambda/\sqrt{2})\, e^{i\theta_0}$, and convex combinations fill the disk. The span statement follows since $w_0 \neq 0$, its trace part being strictly negative, and $b_\perp \neq 0$.
\end{proof}

In the nine- and twelve-dimensional strata, the weight-zero sector can itself obstruct membership in the radical. The following criterion also applies to the exceptional seven-dimensional stratum. The next lemma identifies the channel condition that removes this obstruction and separates the corresponding rows of Table~\ref{tab:qubit_strata}.

\begin{lemma}[Magic locus]
\label{lem:magic_locus}
Let $G \in \mathcal{F}$ have control algebra $\so(2)_{\hat z}$ and a channel with $m_1(M) \neq 0$. Assume that the radical is $\rr = \R\mathbb{I} \oplus (\g(G) \cap \R^3)$. Recall from \eqref{eq:dev_def} that $\operatorname{dev}(M) = M_{zz} - \tfrac{1}{3}\operatorname{tr}M$ is the axial deviator. Then
\begin{equation}
\label{eq:magic_locus}
W(G) \cap \rr \neq \{0\}
\quad\Longleftrightarrow\quad
\operatorname{dev}(M) = 0
\quad\Longleftrightarrow\quad
|a_z|^2 = \tfrac{1}{3}|a|^2 ,
\end{equation}
and on the locus the intersection is the single ray generated by the orbit average of $\Dcal_L$. The ray is pointed. Its generator has strictly negative trace part, so $E(W \cap \rr) = 0$.
\end{lemma}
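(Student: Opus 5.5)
I would prove this through the orbit-average description of the radical wedge in Lemma~\ref{lem:qubit_orbit_average}. The hypothesis on $\rr$ says that no element of $\rr$ has a rotation component. Hence $\Wrad=\{A_\mu:\mu\geq0,\ A_\mu\in\rr\}$ with $\mu$ a finite measure on the circle $\mathcal K=\{e^{\theta J_z}\}$. The matrix part of $A_\mu$ is $2\int R_\theta M R_\theta^T\,d\mu-2|a|^2\lambda(\mu)\mathbb I$. The translation part $\int R_\theta b\,d\mu$ already lies in $\rr$, since every translation of $\g(G)$ is in $\g(G)\cap\R^3\subseteq\rr$. So membership $A_\mu\in\rr$ reduces to requiring that $\int R_\theta MR_\theta^T\,d\mu$ be a multiple of $\mathbb I$.

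Next I would decompose $M$ into the weight sectors $\R I\oplus\mathcal S_0\oplus\mathcal S_1\oplus\mathcal S_2$ of Section~\ref{sec:channel_normal_form}. The averaged matrix is then
\[
\tfrac13\lambda(\operatorname{tr}M)\,I+\lambda\,\operatorname{diag}(-\tfrac{d_0}{2},-\tfrac{d_0}{2},d_0)+\bigl(\text{first Fourier moment acting on }m_1\bigr)+\bigl(\text{second Fourier moment acting on }m_2\bigr).
\]
Scalarity forces each nonscalar sector to vanish. The weight-zero sector forces $\lambda(\mu)\,\operatorname{dev}(M)=0$. If $\operatorname{dev}(M)\neq0$, then $\lambda=0$, so $\mu=0$ and $\Wrad=\{0\}$. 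This gives the forward implication.

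For the converse, assume $\operatorname{dev}(M)=0$. The uniform measure $d\theta/2\pi$ kills all nonzero Fourier moments. Its average $w_0$ therefore has matrix part $\tfrac23(\operatorname{tr}M)I-2|a|^2 I=-\tfrac43|a|^2 I$ and translation $b_z\hat z$, and both lie in $\rr$. Since its trace is nonzero, $w_0\neq0$, so $\Wrad\neq\{0\}$. The equivalence $\operatorname{dev}(M)=0\iff|a_z|^2=\tfrac13|a|^2$ is immediate: $M_{zz}=\alpha_z^2+\beta_z^2=|a_z|^2$ and $\operatorname{tr}M=|a|^2$.

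The main obstacle is showing that on the locus the intersection is exactly the ray $\R_{\geq0}w_0$. I would use the hypothesis $m_1\neq0$. Vanishing of the weight-one sector forces the first moment $c_1=\int e^{i\theta}d\mu$ to vanish, because the map $c\mapsto c\cdot m_1$ into $\mathcal S_1\cong\C$ is injective when $m_1\neq0$. Once $c_1=0$, the planar translation average $\int R_\theta b_\perp\,d\mu$, which equals $c_1$ acting on $b_\perp$, vanishes. The $z$-translation contributes $\lambda b_z$. The remaining concern is the second moment $c_2$. If $m_2\neq0$, it must vanish, and then $A_\mu=\lambda w_0$. If $m_2=0$, then $c_2$ does not enter $A_\mu$ at all, and again $A_\mu=\lambda w_0$. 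In either case $A_\mu$ depends only on $\lambda$, so $\Wrad=\R_{\geq0}w_0$.

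Pointedness and $E(\Wrad)=0$ follow from the trace identity in Lemma~\ref{lem:qubit_orbit_average}: the trace of $A_\mu$ is $-4|a|^2\lambda<0$ for $\lambda>0$, so the ray contains no nonzero element together with its negative.
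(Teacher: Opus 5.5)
Your proposal is correct and follows essentially the same route as the paper: the orbit-average description of $\Wrad$, the weight-zero obstruction $\lambda(\mu)\operatorname{dev}(M)=0$ (the paper phrases this as constancy of $\operatorname{tr}Q$ and $Q_{zz}$ along the orbit), the uniform average for sufficiency, vanishing of the first Fourier moment (and of the second when $m_2\neq0$) to reduce the intersection to a single ray, and the trace functional for pointedness. Your translation claim needs one more phrase: $(0,\int R_\theta b\,d\mu)\in\g(G)$ follows by subtracting the scalar element $c(\mathbb I,0)\in\rr\subseteq\g(G)$ from $A_\mu\in\g(G)$ once the matrix part is known to be $c\,\mathbb I$.
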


\begin{proof}
By Lemma~\ref{lem:qubit_orbit_average}, $W(G) \cap \rr$ consists exactly of the channel averages $A_\mu$ that lie in $\rr$. Both $\operatorname{tr} Q$ and $Q_{zz}$ are constant along the orbit. An average of mass $\lambda > 0$ therefore has matrix part with trace $-4|a|^2\lambda$ and $zz$-entry $2(|a_z|^2 - |a|^2)\lambda$, and it lies in $\rr$ only if the traceless symmetric part vanishes, which forces the ratio condition $Q_{zz} = \tfrac{1}{3}\operatorname{tr}Q$, equivalent to $\operatorname{dev}(M) = 0$ and to $|a_z|^2 = |a|^2/3$. For sufficiency, average the orbit against the uniform weight. Every nonzero weight component integrates to zero, and what remains is $\lambda(c\,\mathbb{I} + b_z \hat z)$ with $c = -\tfrac{4}{3}|a|^2 < 0$, an element of $\rr$. Any admissible weight must annihilate every Fourier moment belonging to a nonzero orbit sector: the first moment in the $\mathsf{Q6}$ degeneration, and the first and second moments in the other magic branches. The resulting element is therefore fixed by the mass $\lambda$ alone, so the intersection is one ray. Pointedness follows from the trace argument in Lemma~\ref{lem:qubit_orbit_average}.
\end{proof}

We call $\{\operatorname{dev}(M) = 0\}$ the \emph{magic locus}. The tilt angle it selects for an axially symmetric channel satisfies $\sin^2\theta^\ast = 2/3$, the magic angle of magnetic resonance.

\subsection{Globality}

We now connect the minimal wedge to the closed reachable semigroup in the ambient group $\mathcal A$. The coherent-control groups in the classified family are the trivial group, $\mathrm{SO}(2)$, and $\mathrm{SO}(3)$, all of which are closed. The proof uses this closedness to identify the tangent wedge and then determines when the stronger Lie-semialgebra property holds.

\begin{corollary}[Globality and the semialgebra boundary]
\label{cor:qubit_globality_semialgebra}
For every $G\in\mathcal F$, the classified wedge is global:
\[
W(G)=L(\Sigma(G)).
\]
It is a Lie semialgebra exactly in strata $\mathsf{Q1}$, $\mathsf{Q2}$,
$\mathsf{Q3}$, and $\mathsf{Q11}$.
\end{corollary}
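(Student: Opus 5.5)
For globality, the inclusion $W(G)\subseteq L(\Sigma(G))$ is \eqref{eq:saturate_vs_tangent}, so I need the reverse inclusion $L(\Sigma(G))\subseteq W(G)$. By Lemma~\ref{lem:qubit_orbit_average}, $W(G)$ is the closed conic hull of $\mf e_0$ together with the orbit $\{\Ad_R\Dcal_L:R\in\mathcal K\}$, and $\mathcal K$ is a closed subgroup (trivial, $\mathrm{SO}(2)$ or $\mathrm{SO}(3)$).

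\textbf{Step 1: reduce $\Sigma(G)$ to a group times a semigroup.} Since $\mathcal K$ is closed and the controls act with both signs, the reachable products can be rewritten as $e^{t_1\Ad_{R_1}\Dcal_L}\cdots e^{t_k\Ad_{R_k}\Dcal_L}\,R$ with $R_i,R\in\mathcal K$ and $t_i\ge0$. So
\[
\Sigma(G)=\overline{\mathcal K\cdot S_0},
\]
where $S_0$ is the semigroup generated by the orbit rays.

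\textbf{Step 2: a wedge-valued dual certificate.} Take $X\notin W(G)$ inside the ambient algebra. By Hahn--Banach there is a linear functional $\psi$ with $\psi\ge0$ on $W(G)$ and $\psi(X)<0$. Such a functional alone need not define a semigroup invariant, so I would combine two monotones.

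\emph{Kossakowski-trace monotone.} The trace functional $\varphi$ from Lemma~\ref{lem:qubit_orbit_average} is realized multiplicatively: $\log\det$ of the matrix part of an affine map is additive along products. It is $\le0$ on every element of $\Sigma(G)$, equals $0$ exactly on the group part, and is strictly negative otherwise. This monotone handles the edge directions: if $e^{tX}\in\Sigma$ for all $t\ge0$ and $X$ is Hamiltonian, then $X\in\mf e_0$ because $\mathcal K$ is closed.

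\emph{Averaged generator.} For general $X$, write an element $g\in\Sigma(G)$ near the identity with small $\varphi$-budget. Using $\log\det$ as a time parameter, the first-order expansion $e^{tX}=\lim\prod$ gives $X$ as a limit of normalized sums of orbit elements plus $\mf e_0$ terms, by the standard argument: the tangent wedge of a semigroup generated by a compact family of rays modulo a compact group is the closed cone hull (Hilgert--Hofmann--Lawson, the case of a compact subgroup with a cone of generators invariant under it). I would cite this theorem, verifying its hypotheses: $\mathcal K$ is compact, the orbit is $\mathcal K$-invariant, and $W(G)\cap(-W(G))=\mf e_0=\Lie\mathcal K$, which holds by edge rigidity (Lemma~\ref{lem:qubit_edge_rigidity}).

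\textbf{Step 3: the semialgebra boundary.} This step reuses the characterization of Lie semialgebras, namely that $[X,T_XW]\subseteq T_XW$ at every point $X$.
\begin{itemize}[label={}]
\item Q1, Q2 and Q3 are abelian, so the BCH product is the sum and the semialgebra property is immediate.
\item Q11 is the vector space $\so(3)$, which is trivially a semialgebra.
\item For every other stratum I would exhibit a violation near the dissipative ray. At $X=\Dcal_L$, the tangent space of $W$ contains $\pm J_z$. The bracket $[J_z,\Dcal_L]$ then lies in $T_X W$ as well, while $[\Dcal_L,[J_z,\Dcal_L]]$ produces a direction leaving $W$. To show it leaves $W$, use the trace functional together with the Kossakowski positivity: the resulting generator has a Kossakowski matrix with a negative eigenvalue, or a non-semidefinite dependence. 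Concretely, I would check this on the witnesses of Table~\ref{tab:qubit_strata} and propagate it to each stratum by continuity of the failing second-order BCH term.
\end{itemize}

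\textbf{Main obstacle.} I expect the hard part to be the reverse inclusion in the non-compact, non-reductive strata (Q5, Q6, Q9, Q10, Q13), where the translation dynamics could a priori produce extra tangent directions in the limit. There I would use $\log\det$ together with the boundedness of the Bloch ball, since $\Sigma$ preserves it, to control the limits.
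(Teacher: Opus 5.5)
\textbf{Globality: the cited theorem is false as stated.} Step~1 is fine. Step~2, however, rests on a theorem whose listed hypotheses do not imply globality: compactness of $\mathcal K$, $\mathcal K$-invariance of the orbit, and $E(W)=\Lie\mathcal K$. The paper's own example after \eqref{eq:saturate_vs_tangent}, a single Hamiltonian ray with quasi-periodic flow, satisfies all three with trivial group, yet $-\Lcal_H\in L(\Sigma(G))$.

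The missing hypothesis is the $\log\det$ monotone, which you only used for edge directions. Set $f=-\log\det$ of the linear part. It is a homomorphism $\mathcal A\to\R$ with $df(g)(g\cdot X)=-\operatorname{tr}Q_X$. By the trace formula of Lemma~\ref{lem:qubit_orbit_average}, this is strictly positive on $W(G)\setminus\mf e_0$. Together with closedness of $\mathcal K=\langle\exp E(W(G))\rangle$, this is the Hilgert--Hofmann--Lawson globality criterion. Apply it to $W(G)$ as a Lie wedge in $\g_{LK}$ with group $\mathcal A$. Since $G\subseteq W(G)\subseteq L(\Sigma(G))$ gives $\overline{\langle\exp W(G)\rangle}=\Sigma(G)$, you get $W(G)=L(\Sigma(G))$ in every stratum at once, and the translation strata pose no extra difficulty.

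Repaired this way, your route genuinely differs from the paper's. The paper identifies $\Sigma(G)$ with O'Meara's drift-plus-control semigroup via Trotter and strong-pulse limits, checks stratum by stratum that the system groups are closed in $\mathcal A$, and then quotes O'Meara's closed-control theorem. Yours needs neither the identification nor the closedness check.

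\textbf{Semialgebra boundary: the negative direction fails.} The positive half of Step~3 is correct and more elementary than the paper's appeal to O'Meara's commutation criterion: $\mathsf{Q1}$--$\mathsf{Q3}$ are abelian, and $W=\so(3)$ in $\mathsf{Q11}$ is a subalgebra. The negative half does not work, for three reasons.

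First, your criterion compares $[X,T_XW]$ with $T_XW=\overline{W-\R_{\ge0}X}$, not with $W$. Every commutator has zero trace, so by Corollary~\ref{cor:hamiltonian_rigidity} it leaves $\WGKLS$ unless it is Hamiltonian. A negative Kossakowski eigenvalue therefore says nothing about $T_XW$.

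Second, the witness point fails. In $\mathsf{Q4}$ write $Q=C+S_2$, where $C=\tau\mathbb I+rH$ commutes with the $\sll(2,\R)$ and $S_2$ is the weight-two part. Then $\g$ is a graph over $\Span\{J_z,C,S_2,[J_z,S_2]\}$ by Lemma~\ref{lem:qubit_parametric_affine_closure}. This gives $[\Dcal_L,\g]\subseteq\Span\{J_z,[J_z,\Dcal_L]\}$, which lies in the edge of $T_{\Dcal_L}W$, so the test passes at $\Dcal_L$. Moreover $[\Dcal_L,[J_z,\Dcal_L]]=[S_2,[J_z,S_2]]$ is a nonzero multiple of $J_z$, so it lies in $W$ itself. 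The violation sits instead at $X=\Dcal_L+hJ_z$ with $h\neq0$. There the edge of $T_XW$ is still $\Span\{J_z,\Dcal_L,[J_z,\Dcal_L]\}$, but $[X,[J_z,\Dcal_L]]$ picks up $h[J_z,[J_z,S_2]]=-4hS_2$, which lies outside that edge.

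Third, continuity from one witness cannot cover the continuous moduli of $\mathsf{Q4}$, $\mathsf{Q5}$, $\mathsf{Q7}$, $\mathsf{Q9}$, $\mathsf{Q10}$, and $\mathsf{Q13}$.

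A uniform repair is available. From $(sX\star sY-sX-sY)/s^2\to\tfrac12[X,Y]$, a Lie semialgebra has a bracket-closed span, which must therefore equal $\g(G)$. Under axial control the orbit of $\Dcal_L$ spans at most a weight-zero line and two rotating planes. Hence $\dim\Span W\le6<\dim\g(G)$ in $\mathsf{Q6}$--$\mathsf{Q10}$, which rules these strata out. The generating strata $\mathsf{Q4}$, $\mathsf{Q5}$, $\mathsf{Q12}$, and $\mathsf{Q13}$ still need either the tangent test at points with nonzero Hamiltonian component, as above, or O'Meara's criterion as in the paper.
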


\begin{proof}
The jump-free strata are immediate: their bidirectional coherent rays generate
the closed group $K$, so $\Sigma(G)=K$ and
$L(\Sigma(G))=\mf e_0=W(G)$. Suppose now that the jump dissipator $D$ is
present. Let $\Sigma_{\mathrm{sep}}=\Sigma(G)$ be the closed semigroup generated
by the independently actuated rays, and let $\Sigma_{\mathrm{dc}}$ be the closed
reachable semigroup for the drift-plus-control model used by O'Meara. The
bidirectional coherent rays give $K\subseteq\Sigma_{\mathrm{sep}}$. Every
admissible constant-control arc also belongs to $\Sigma_{\mathrm{sep}}$,
because, for $H$ in the actuated coherent span,
\[
e^{t(D+H)}=\lim_{m\to\infty}
\left(e^{tD/m}e^{tH/m}\right)^m.
\]
Thus $\Sigma_{\mathrm{dc}}\subseteq\Sigma_{\mathrm{sep}}$. Conversely,
$e^{tD}$ is obtained by setting the coherent control to zero, while
the unbounded bidirectional controls give the strong-pulse limit
\[
e^{tH}=\lim_{M\to\infty}e^{(t/M)(D+MH)}
\]
for every directly actuated coherent direction $H$. Products of these
one-parameter subgroups and closure generate $K$, proving
$\Sigma_{\mathrm{sep}}\subseteq\Sigma_{\mathrm{dc}}$ and hence equality. The
control groups in the classified family are closed.

To apply the closed-control theorem in the chosen ambient group, we also check that system-relative and ambient closures agree. The connected system groups with Lie algebra $\g(G)$ are closed in $\mathcal A$ as well. To see this, use the projections and kernels in Lemma~\ref{lem:qubit_parametric_affine_closure}. In $\mathsf{Q1}$ and $\mathsf{Q3}$ the determinant fixes the dissipative parameter, since $\operatorname{tr}Q<0$, and the remaining rotation factor is compact. In $\mathsf{Q4}$ and $\mathsf{Q5}$ the linear group is the embedded $\mathrm{SL}(2,\R)$ times $\exp(\R C)$, with $C$ the commuting weight-zero matrix from that lemma. Its nonzero trace again fixes the real parameter. The common-fixed-point graph in $\mathsf{Q4}$ and the planar translation extension in $\mathsf{Q5}$ preserve closedness. In $\mathsf{Q6}$ the irreducible semisimple image is conjugate to $\mathrm{SO}_0(2,1)$, the identity component of the indefinite orthogonal group, together with scalar dilation and all translations. The other channel rows have linear group $\mathrm{GL}^{+}(3,\R)$ with zero or full translations. The jump-free groups are compact. Each factor is closed, and a convergent invertible sequence has a convergent dilation parameter determined by its determinant. Thus ambient closure agrees here with closure relative to the system group.

O'Meara's closed-control theorem
\cite[Thm.~13]{o2025lie} gives the orbit-cone formula of
Lemma~\ref{lem:qubit_orbit_average} and proves globality. The semialgebra
criterion \cite[Thm.~15]{o2025lie} is commutation of the control algebra with
the dissipative generator. This condition is vacuous in $\mathsf{Q1}$, holds
for the jump-free rows $\mathsf{Q2}$ and $\mathsf{Q11}$, and holds for the two
axisymmetric channels in $\mathsf{Q3}$. In rows $\mathsf{Q4}$--$\mathsf{Q10}$,
at least one nonzero weight component changes under the axial control. Under
full $\mathrm{SO}(3)$ invariance one would need $Q$ to be scalar and $b=0$,
equivalently $M$ to be scalar. Since a single-jump matrix $M$ has rank at most
two, the only scalar possibility is the zero channel, which excludes
$\mathsf{Q12}$ and $\mathsf{Q13}$.
\end{proof}

\section{Pure-State Stabilization}
\label{sec:pure_stabilization}

We now derive the fixed-axis criterion used in Section~\ref{sec:control_consequences} and use it to decide pure-state stabilizability in every stratum. Stationarity and attraction enter separately. For a finite set of active jumps, the pure-stationary-state condition combines a common jump eigenstate with the corresponding Hamiltonian compatibility condition \cite[Thm.~1]{kraus2008preparation}. For the orbit averages considered here, the purity-variance identity restricts the candidate states, while the symmetric Bloch matrix determines their attraction.

\begin{definition}[Pure-state global asymptotic stability]
\label{def:pure_gas}
A generator $X\in W(G)$ has \emph{pure-state GAS} if its quantum dynamical
semigroup has a pure stationary state $\rho_\ast$ and every initial density
operator converges to $\rho_\ast$ in trace norm.
\end{definition}

For axial control, set $R_\theta=e^{\theta J_n}$ and $U_\theta=e^{-i\theta n\cdot\sigma/2}$. Write $L_\theta=U_\theta LU_\theta^\dagger$ and $Q_\theta=R_\theta QR_\theta^T$ for the rotated jump and its Bloch matrix.

Every $X\in W(G)$ has a representation $X=hJ_n+A_\mu$ with $\mu\geq0$, by the limiting argument in Lemma~\ref{lem:qubit_orbit_average}. In its approximating sums, the Hamiltonian parts converge as the antisymmetric matrix parts, while the trace bounds the measures\textquotesingle{} masses. Compactness of the control orbit then supplies a finite limiting measure.

\begin{lemma}[Pure candidates and uniform contraction for one jump]
\label{lem:qubit_pure_candidates}
Let $0<t<s$ in the principal form of an axially controlled system from
Corollary~\ref{cor:single_jump_principal_form}. If
$X=hJ_n+A_\mu$ has positive dissipative mass and fixes a pure state, that
state is a common right eigenstate of the rotated jumps for
$\mu$-almost every orbit point. The unrotated principal-form jump has the two right-eigenstate Bloch vectors
\[
r_\varepsilon=(\varepsilon\sqrt{1-q^2},0,q),
\qquad \varepsilon\in\{-1,1\},\qquad q=t/s.
\]
At orbit angle $\theta$, the candidates are $R_\theta r_\varepsilon$. Every rotated matrix part satisfies $Q_\theta\preceq-2t^2\mathbb I$.
\end{lemma}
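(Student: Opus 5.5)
The plan splits into the three claims: the common-eigenstate property, the explicit eigenvectors, and the uniform bound on the rotated matrix parts.

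\emph{Common eigenstate.} Fix a pure state $\rho_\ast=|\psi\rangle\langle\psi|$ with $X(\rho_\ast)=0$. Let $P^\perp=\mathbb I-\rho_\ast$, and apply the functional $\rho\mapsto\Tr(P^\perp\rho)$ to $X(\rho_\ast)$. The Hamiltonian part $-i[hH_n,\rho_\ast]$ contributes $\Tr(P^\perp[hH_n,\rho_\ast])=0$, since $P^\perp\rho_\ast=0=\rho_\ast P^\perp$. For a single dissipator the anticommutator term also drops, leaving
\[
\Tr\bigl(P^\perp\Dcal_{L_\theta}(\rho_\ast)\bigr)=\|P^\perp L_\theta\psi\|^2 .
\]
Integrating against $\mu$ therefore gives $0=\int\|P^\perp L_\theta\psi\|^2\,d\mu(\theta)$, because $X(\rho_\ast)=0$. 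The integrand is continuous and nonnegative, so $L_\theta\psi\in\C\psi$ for $\mu$-almost every $\theta$, indeed on the whole support of $\mu$. This is the purity-variance identity mentioned before Definition~\ref{def:pure_gas}. Positive mass only ensures that the support is nonempty.

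\emph{Eigenvectors of the principal jump.} With $s=1$ after scaling, write $L=\sigma_{\hat\alpha}+iq\sigma_{\hat\beta}$ in the principal frame $(\hat\alpha,\hat\beta,\hat\nu)=(\hat x,\hat y,\hat z)$. Then $L=\begin{pmatrix}0&1+q\\1-q&0\end{pmatrix}$, which has eigenvalues $\pm\sqrt{1-q^2}$. The corresponding eigenvectors are $(\sqrt{1+q},\ \pm\sqrt{1-q})^T$ up to normalization. Their Bloch vectors are
\[
\bigl(2\operatorname{Re}(\bar c_0c_1),\ 2\operatorname{Im}(\bar c_0c_1),\ |c_0|^2-|c_1|^2\bigr)=(\pm\sqrt{1-q^2},0,q),
\]
which matches $r_\varepsilon$. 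Because $0<q<1$, the eigenvalues are distinct and nonzero, so there are exactly two eigenstates.

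\emph{Rotated candidates.} Covariance handles the rotation. Since $L_\theta=U_\theta LU_\theta^\dagger$, the right eigenvectors of $L_\theta$ are $U_\theta\psi_\varepsilon$, and their Bloch vectors are $R_\theta r_\varepsilon$. Combining this with the first step, the fixed pure state must equal some $R_\theta r_\varepsilon$ with $\theta$ in the support of $\mu$.

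\emph{Contraction bound.} Corollary~\ref{cor:single_jump_principal_form} gives the eigenvalues of $Q$ as $-2t^2$, $-2s^2$ and $-2(s^2+t^2)$. The largest is $-2t^2$, since $t<s$. So $Q\preceq-2t^2\mathbb I$, and conjugating by the orthogonal $R_\theta$ preserves the Loewner order, giving $Q_\theta\preceq-2t^2\mathbb I$.

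The main obstacle I expect is being precise about ``$\mu$-almost every'' when $\mu$ is not finitely supported. Continuity of $\theta\mapsto\|P^\perp L_\theta\psi\|^2$ settles it, since a nonnegative continuous function with zero integral vanishes on the support of $\mu$.
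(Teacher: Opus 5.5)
Your proposal is correct and follows the paper's proof. Your leakage functional satisfies $\Tr\bigl(P^\perp X(\rho_\ast)\bigr)=-\Tr\bigl(\rho_\ast X(\rho_\ast)\bigr)$ because $X$ is trace-annihilating, so it is the paper's purity-variance identity, and the remaining steps match the paper: vanishing on $\operatorname{supp}\mu$ by continuity, the eigenlines of $\begin{pmatrix}0&s+t\\ s-t&0\end{pmatrix}$, and the orthogonal-conjugation bound $Q_\theta\preceq-2t^2\mathbb I$.
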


\begin{proof}
Write a pure state as $\rho=|\psi\rangle\langle\psi|$. Stationarity and the
vanishing Hamiltonian contribution to the purity derivative give
\begin{equation}
\label{eq:pure_variance_identity}
0=\Tr\!\left[\rho X(\rho)\right]
=-\int\!\left(
\langle L_\theta^\dagger L_\theta\rangle_\rho
-|\langle L_\theta\rangle_\rho|^2
\right)d\mu(\theta).
\end{equation}
The integrand is nonnegative, so it vanishes almost everywhere. Thus
$L_\theta|\psi\rangle$ is proportional to $|\psi\rangle$ at every active
orbit point. The variance is a continuous function of the orbit parameter, so
it vanishes at every point of $\operatorname{supp}\mu$. In the principal frame,
\[
L=\begin{pmatrix}0&s+t\\ s-t&0\end{pmatrix},
\]
whose two eigenlines give the displayed Bloch vectors. The eigenvalues of the
symmetric Bloch matrix are $-2t^2$, $-2s^2$, and
$-2(s^2+t^2)$. Orthogonal conjugation preserves the stated upper bound.
\end{proof}

\begin{theorem}[Axial pure-state stabilization criterion]
\label{thm:axial_pure_gas}
Let $0<q<1$, let the control algebra be $\R J_n$ for a unit axis $n$, and write
\[
x=|n\!\cdot\!\hat\alpha|^2,\qquad
y=|n\!\cdot\!\hat\beta|^2,\qquad
\zeta=|n\!\cdot\!\hat\nu|^2.
\]
The wedge contains a pure-state GAS generator if and only if
\begin{equation}
\label{eq:pure_gas_discriminant}
q^2x=(1-q^2)\zeta.
\end{equation}
When this criterion holds, one dissipative orbit ray suffices if $y>0$,
whereas two rays are necessary and a half-turn pair suffices if $y=0$.
\end{theorem}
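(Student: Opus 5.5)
The plan is to reduce pure-state GAS to a pure fixed point plus a uniform contraction, and then to turn the fixed-point condition into a tangential force balance at the target. By positive channel scaling I take $s=1$, $t=q$. Every $X\in W(G)$ has the form $X=hJ_n+A_\mu$ with $\mu\ge0$. If $\mu=0$, then $X$ is a rotation and has no attractor. So $\lambda(\mu)>0$. The symmetric part of the matrix $\bar Q$ of $X$ is $\int Q_\theta\,d\mu\preceq-2q^2\lambda\,\mathbb I$ by Lemma~\ref{lem:qubit_pure_candidates}, because $hJ_n$ is antisymmetric. Hence, for any fixed point $r_\ast$,
\[
\tfrac{d}{d\tau}|r-r_\ast|^2\le-4q^2\lambda|r-r_\ast|^2 .
\]
Bloch convergence is equivalent to trace-norm convergence for a qubit. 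Therefore pure-state GAS in $W(G)$ is equivalent to the existence of $h$ and $\mu\neq0$ such that $X$ fixes a pure $r_\ast$. Lemma~\ref{lem:qubit_pure_candidates} then gives $r_\ast=R_\theta r_{\varepsilon(\theta)}$ for every $\theta\in\operatorname{supp}\mu$.

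Next I compute the force of the unrotated channel at its own candidates. Substituting into the principal form gives
\[
f_\varepsilon:=Qr_\varepsilon+b=2q\sqrt{1-q^2}\,(-\varepsilon q,0,\sqrt{1-q^2}) .
\]
This is tangent to the sphere at $r_\varepsilon$, and it equals $-\varepsilon c_0\,\hat\beta\times r_\varepsilon$ with $c_0=2q\sqrt{1-q^2}>0$. Rotating, the force of the orbit point $\theta$ at $r_\ast$ is $-\varepsilon c_0\,u_\theta\times r_\ast$, where $u_\theta=R_\theta\hat\beta\perp r_\ast$. Stationarity therefore reads
\[
\Bigl(hn-c_0\!\int\varepsilon(\theta)\,u_\theta\,d\mu\Bigr)\times r_\ast=0,
\]
that is, the bracketed vector is parallel to $r_\ast$.

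For \emph{necessity}, I dot the balance with $n\times r_\ast$ when this is nonzero. The Hamiltonian term drops out. The invariance $u_\theta\cdot(n\times R_\theta r_\varepsilon)=\hat\beta\cdot(n\times r_\varepsilon)$ under rotations about $n$ gives
\[
\int\bigl(\sqrt{1-q^2}\,n_\nu-\varepsilon q\,n_\alpha\bigr)\,d\mu=0 .
\]
If a single sign $\varepsilon$ occurs on the support, this forces $qn_\alpha=\varepsilon\sqrt{1-q^2}\,n_\nu$, hence \eqref{eq:pure_gas_discriminant}. If both signs occur, the common point $r_\ast$ forces $n\cdot r_+=n\cdot r_-$, hence $n_\alpha=0$. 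The integral then forces $n_\nu=0$ as well, so $x=\zeta=0$ and \eqref{eq:pure_gas_discriminant} holds trivially. In the degenerate case $n\parallel r_\ast$, I get $n=\pm r_\varepsilon$, and \eqref{eq:pure_gas_discriminant} follows by direct substitution.

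For the one-ray question, I take $\mu=\delta_\theta$ and rotate so that $\theta=0$. The balance becomes $hn+\varepsilon c_0\hat\beta\parallel r_\varepsilon$. The $\hat\beta$-component gives $hn_\beta=-\varepsilon c_0$, which is solvable iff $y>0$. The $\hat\alpha\hat\nu$-components then require $qn_\alpha=\varepsilon\sqrt{1-q^2}\,n_\nu$, and for $y>0$ a sign $\varepsilon$ can be chosen once \eqref{eq:pure_gas_discriminant} holds. This gives sufficiency with one ray when $y>0$, and shows that one ray is impossible when $y=0$, since $c_0\neq0$. If $y=0$ and \eqref{eq:pure_gas_discriminant} holds, then $(n_\alpha,n_\nu)\parallel(\varepsilon\sqrt{1-q^2},q)$, so $n=\pm r_\varepsilon$. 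The half-turn $R_\pi$ about $n$ fixes $r_\varepsilon$ and maps $f_\varepsilon$ to $-f_\varepsilon$, exactly as in Example~\ref{ex:q6_half_turn}. So $\tfrac12(\Dcal_L+\Ad_{R_\pi}\Dcal_L)$ fixes the pure state, and the contraction step gives GAS with two rays.

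I expect the main obstacle to be the necessity step for general measures $\mu$. It requires care with the mixed-sign supports, the degenerate alignment $n\parallel r_\ast$, and the justification that the candidate constraint holds on all of $\operatorname{supp}\mu$ rather than only $\mu$-almost everywhere. The last point I would settle by the continuity remark in the proof of Lemma~\ref{lem:qubit_pure_candidates}.
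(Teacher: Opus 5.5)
Your proposal is correct and follows essentially the paper's route. The shared steps are the restriction to rotated jump eigenstates via Lemma~\ref{lem:qubit_pure_candidates}, the tangential force $f_\varepsilon$, the single-branch/two-branch/on-axis case split, the half-turn pair for $y=0$, and the uniform contraction bound $Q_\theta\preceq-2t^2\mathbb I$; your projection onto $n\times r_\ast$ is just the axial component of the total force, which the paper uses in the two-branch case. One harmless slip: your own balance equation gives $hn-\varepsilon c_0\hat\beta\parallel r_\varepsilon$, hence $hn_\beta=+\varepsilon c_0$ (matching the paper's $h$ and $h=\sqrt5/2$ in Example~\ref{ex:q9_control_contrast}), which only flips the sign of $h$.
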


The proof exhausts the orbit supports compatible with a pure stationary state before constructing the stabilizers. A target off the control axis can receive support from one or both eigenstate branches, whereas an axial target requires a separate argument. The uniform contraction bound then turns the constructed stationary states into global attractors.

\begin{proof}
Put $c=\sqrt{1-q^2}$ and use the principal basis. At the candidate
$r_\varepsilon=(\varepsilon c,0,q)$, the dissipative Bloch force is
\begin{equation}
\label{eq:pure_candidate_force}
f_\varepsilon=Qr_\varepsilon+b
=2s^2qc\,(-\varepsilon q,0,c).
\end{equation}
A GAS generator must have positive dissipative mass. Otherwise its flow is Hamiltonian and preserves distances between Bloch vectors.

Suppose first that a pure stationary vector $r$ is not parallel to $n$.
Lemma~\ref{lem:qubit_pure_candidates} restricts the support of the orbit
measure to rotations for which $r=R_\theta r_\varepsilon$. Each eigenstate
branch contributes at most one such rotation. If only one branch has positive
mass, rotate the principal frame about $n$ to that support point. Write
$n_\alpha=n\cdot\hat\alpha$ and similarly for the other components. The
$xz$-components of stationarity force $n_\beta\neq0$. Solving
the stationarity equation after normalizing this positive mass to one gives
\begin{equation}
\label{eq:pure_gas_signed_condition}
\varepsilon c\,n_\nu=q n_\alpha,
\qquad
h=\frac{2\varepsilon t\sqrt{s^2-t^2}}{n_\beta},
\end{equation}

The squared relation in \eqref{eq:pure_gas_signed_condition} is
\eqref{eq:pure_gas_discriminant}.

It remains to consider support from both eigenstate branches. Their rotation
orbits can meet only if their heights along the control axis agree. Since
\[
n\cdot r_\varepsilon=\varepsilon c\,n_\alpha+q n_\nu,
\]
this forces $n_\alpha=0$. Equation~\eqref{eq:pure_candidate_force} then shows
that the force from either branch has the same axial component,
$n\cdot f_\varepsilon=2s^2qc^2n_\nu$. Positive orbit weights cannot cancel
this component, while $n\cdot(n\times r)=0$. Stationarity therefore gives
$n_\nu=0$, and \eqref{eq:pure_gas_discriminant} follows again.

If the stationary vector is parallel to $n$, every active eigenstate must
already be parallel to the axis. Hence $|n_\alpha|=c$, $n_\beta=0$, and
$|n_\nu|=q$, which also gives \eqref{eq:pure_gas_discriminant}. These cases
prove necessity.

For sufficiency, assume \eqref{eq:pure_gas_discriminant}. When $y>0$, choose
$\varepsilon$ so that the signed equality in
\eqref{eq:pure_gas_signed_condition} holds. The same equation supplies a
one-orbit-ray generator fixing $r_\varepsilon$. When $y=0$, the criterion and
$x+\zeta=1$ give $x=c^2$ and $\zeta=q^2$, so $n$ is parallel to one of the
vectors $r_\varepsilon$. Let $R_\pi=2nn^T-\mathbb I$. The force
$f_\varepsilon$ is tangent to the sphere at $r_\varepsilon$, and the half-turn
reverses it. Consequently,
\[
\frac12\left(\Dcal_L+\Ad_{R_\pi}\Dcal_L\right)
\]
fixes that pure state. A single orbit ray cannot do so because
$r_\varepsilon$ is parallel to $n$, so the axial Hamiltonian force
$n\times r_\varepsilon$ vanishes, whereas the single-ray dissipative force
\eqref{eq:pure_candidate_force} is nonzero for $0<q<1$. The earlier single-branch argument excludes off-axis targets when $y=0$, because it requires $n_\beta\neq0$. Together, these cases prove
minimality of the two-point support.

Finally, an orbit average of mass $\lambda(\mu)>0$ has symmetric matrix part
$Q_\mu\preceq-2t^2\lambda(\mu)\mathbb I$. Adding $hJ_n$ changes only the
antisymmetric part. The resulting affine matrix is Hurwitz, meaning that all its eigenvalues have strictly negative real parts, so its fixed point
is unique and every Bloch trajectory converges to it exponentially. The pure
fixed points constructed above are therefore GAS.
\end{proof}

\paragraph{Convergence rate.}
For the interior channels $0<t<s$ and positive dissipative mass $\lambda(\mu)$ in the proof, write $\delta r=r-r_\ast$ and use $\tau$ for time. The Hamiltonian matrix is antisymmetric, so
\[
\frac{d}{d\tau}\|\delta r\|_2^2
=2\,\delta r^TQ_\mu\delta r
\leq-4t^2\lambda(\mu)\|\delta r\|_2^2.
\]
Integration gives
\begin{equation}
\label{eq:pure_gas_convergence_rate}
\|\rho(\tau)-\rho_\ast\|_1
\leq e^{-2t^2\lambda(\mu)\tau}\|\rho(0)-\rho_\ast\|_1,
\end{equation}
because the trace norm of a qubit state difference equals the Euclidean distance of its Bloch vectors. The bound refers to the effective generator at its stated dissipative strength.

At $q=0$, every element of the wedge is unital and fixes the center of the
Bloch ball, so no pure state is GAS. At $q=1$, the unrotated extremal-damping
generator already has a unique globally attractive pure fixed point.

\begin{corollary}[Pure-state stabilization atlas]
\label{cor:qubit_pure_gas_atlas}
The capability of the thirteen strata is given in
Table~\ref{tab:qubit_pure_gas}.
\end{corollary}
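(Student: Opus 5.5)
The atlas is a stratum-by-stratum verification. For each of the thirteen rows of Table~\ref{tab:qubit_strata} I will decide whether $W(G)$ contains a pure-state GAS generator and, if so, the minimal number of dissipative orbit rays. The cases are organized by the control algebra $\mf e_0$ and the principal parameter $q=t/s$ of Corollary~\ref{cor:single_jump_principal_form}.

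\emph{Trivial rows.} In the jump-free strata $\mathsf{Q2}$ and $\mathsf{Q11}$ every generator is Hamiltonian and preserves Bloch distances, so no pure-state GAS generator exists. In every stratum with $q=0$ the channel is unital, and every element of $W(G)$ fixes the maximally mixed state. Lemma~\ref{lem:qubit_orbit_average} gives this, since all orbit elements have $b=0$. These are $\mathsf{Q7}$, $\mathsf{Q8}$, $\mathsf{Q12}$, the $q=0$ parts of $\mathsf{Q1}$ and $\mathsf{Q4}$, and axial dephasing in $\mathsf{Q3}$; all are negative. For $q=1$ (extremal damping), the unrotated dissipator already has a unique globally attractive pure fixed point, as noted after Theorem~\ref{thm:axial_pure_gas}. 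This makes the $q=1$ parts of $\mathsf{Q1}$, $\mathsf{Q3}$, $\mathsf{Q5}$, $\mathsf{Q9}$, $\mathsf{Q10}$, $\mathsf{Q13}$ positive with one ray. In $\mathsf{Q1}$ with $0<q<1$, the wedge is the single ray $\R_{\geq0}\Dcal_L$. Its fixed point is $Q^{-1}b$ with $|Q^{-1}b|=2q/(1+q^2)<1$, so the answer there is negative.

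\emph{Axial rows with $0<q<1$.} This covers $\mathsf{Q4}$, $\mathsf{Q5}$, $\mathsf{Q6}$, $\mathsf{Q9}$, $\mathsf{Q10}$. I apply Theorem~\ref{thm:axial_pure_gas} directly, using the residual data.
\begin{itemize}
\item $\mathsf{Q4}$: here $n=\hat\nu$, so $x=0$ and $\zeta=1$. The criterion fails, and the row is negative.
\item $\mathsf{Q5}$: here $n=\hat\alpha$ or $n=\hat\beta$, so $\zeta=0$. The branch $n=\hat\beta$ ($x=0$, $y=1$) satisfies the criterion with one ray. The branch $n=\hat\alpha$ gives $q^2x=q^2\neq0$ and fails. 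So $\mathsf{Q5}$ is mixed.
\item $\mathsf{Q6}$: here $(x,y,\zeta)=(1/2,0,1/2)$ and $q^2=1/2$. The criterion holds with $y=0$, so two rays are needed and suffice, matching Example~\ref{ex:q6_half_turn}.
\item $\mathsf{Q9}$ and $\mathsf{Q10}$: I will exhibit both signs of $\Delta$ inside each stratum. For $\mathsf{Q9}$, Example~\ref{ex:q9_control_contrast} supplies both. For $\mathsf{Q10}$, I intersect the magic equation $x+q^2y=(1+q^2)/3$ with $\Delta=0$ and with $\Delta\neq0$. I then check that $m_1,m_2\neq0$ and that the point is not the $\mathsf{Q6}$ orbit. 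Solving the two linear equations in $(x,\zeta)$ with $y=1-x-\zeta$ gives a one-parameter family in $q$. A generic $q$ should give $y>0$, hence a one-ray stabilizer.
\end{itemize}
The $q=1$ points contribute positives in $\mathsf{Q9}$ and $\mathsf{Q10}$ as well.

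\emph{Full control.} The remaining row is $\mathsf{Q13}$ with $0<q<1$. The circle orbit is contained in the $\mathrm{SO}(3)$ orbit, so I can choose any axis $n$ inside $\so(3)$ satisfying $q^2x=(1-q^2)\zeta$ with $y>0$, for example $\zeta=0$, $x=0$, $n=\hat\beta$. The generator built in Theorem~\ref{thm:axial_pure_gas} then lies in $W(G)$ and gives a one-ray stabilizer. Together with the $q=1$ case, $\mathsf{Q13}$ is uniformly positive with one ray.

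\emph{Main obstacle.} The hardest step is the $\mathsf{Q10}$ sign analysis: I must show that both $\Delta=0$ and $\Delta\neq0$ occur on the magic locus while staying off the $\mathsf{Q6}$ degeneration and keeping $m_1$, $m_2$ nonzero. I would first try $q$ near $1$ and solve explicitly for $(x,y,\zeta)$, then perturb. The other delicate point is matching the atlas's five mixed strata. I expect these to be $\mathsf{Q1}$, $\mathsf{Q3}$, $\mathsf{Q5}$, $\mathsf{Q9}$, $\mathsf{Q10}$, and I will confirm this against the table entries.
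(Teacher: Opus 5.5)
Your stratum-by-stratum route matches the paper's proof, and most of it is correct:
\begin{itemize}
\item the Hamiltonian-only rows $\mathsf{Q2}$ and $\mathsf{Q11}$;
\item the $q=0$ unitality argument;
\item the bare damping ray at $q=1$;
\item the $\mathsf{Q1}$ fixed-point check;
\item Theorem~\ref{thm:axial_pure_gas} applied to the residual data of $\mathsf{Q4}$--$\mathsf{Q6}$ and $\mathsf{Q9}$;
\item the $\hat\beta$ one-ray construction embedded in full control for $\mathsf{Q13}$.
\end{itemize}
The paper's $\mathsf{Q1}$ argument uses the nonzero candidate forces, but your fixed-point check is equivalent. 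One small slip there: the fixed point is $-Q^{-1}b=\tfrac{2q}{1+q^2}\hat\nu$, not $Q^{-1}b$, though only the norm matters.

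The gap is in the $\mathsf{Q10}$ row. The table's last column asserts that every positive interior specification in $\mathsf{Q10}$ needs only one orbit ray, which is a universal claim. By Theorem~\ref{thm:axial_pure_gas}, any point of the $\mathsf{Q10}$ locus with $\Delta=0$ and $y=0$ would require two rays. You must therefore rule such points out. Your plan only says that ``a generic $q$ should give $y>0$''. That produces one-ray examples but does not exclude two-ray cases at special parameters.

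Special parameters do occur. Your linear system in $(x,\zeta)$ degenerates exactly at $q^2=1/2$. On the magic locus, the criterion forces $(x,y,\zeta)=\bigl((1-q^2)/3,\,2/3,\,q^2/3\bigr)$ when $q^2\neq1/2$. When $q^2=1/2$ it forces the whole segment $\zeta=x$, $y=1-2x$, and $y=0$ is reached at its endpoint.

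The missing step is the paper's short computation. Set $y=0$. The criterion together with $x+\zeta=1$ gives $x=1-q^2$ and $\zeta=q^2$. The magic equation then forces $1-q^2=(1+q^2)/3$, that is, $q^2=x=\zeta=1/2$. This is precisely the $\mathsf{Q6}$ orbit, where $m_2=0$, and it is excluded from $\mathsf{Q10}$. Hence every interior positive case in $\mathsf{Q10}$ has $y>0$, and one ray suffices.

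Conversely, the step you flagged as the main obstacle is not what the table asserts. The entries ``exactly when \eqref{eq:pure_gas_discriminant} holds'' for $\mathsf{Q9}$ and $\mathsf{Q10}$ follow directly from the theorem. Exhibiting both signs of $\Delta$ in those strata, and matching the five mixed strata, goes beyond what the corollary requires.
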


\begin{table}[H]
\centering
\small
\begin{tabular}{c p{0.55\textwidth} p{0.18\textwidth}}
\hline
stratum & pure-state GAS in $W(G)$ & dissipative orbit rays \\
\hline
$\mathsf{Q1}$ & exactly at $q=1$ & one \\
$\mathsf{Q2},\mathsf{Q11}$ & no & --- \\
$\mathsf{Q3}$ & yes for extremal damping, no for dephasing & one when yes \\
$\mathsf{Q4}$ & no & --- \\
$\mathsf{Q5}$ & the $\hat\beta$ branch and the $q=1$ endpoint & one \\
$\mathsf{Q6}$ & yes & two \\
$\mathsf{Q7},\mathsf{Q8},\mathsf{Q12}$ & no & --- \\
$\mathsf{Q9}$ & all $q=1$ points. For $0<q<1$, exactly when \eqref{eq:pure_gas_discriminant} holds & one at $q=1$. For $0<q<1$, one if $y>0$ and two if $y=0$ \\
$\mathsf{Q10}$ & all $q=1$ points. For $0<q<1$, exactly when \eqref{eq:pure_gas_discriminant} holds & one \\
$\mathsf{Q13}$ & yes for every $q>0$ & one \\
\hline
\end{tabular}
\caption{Existence of a pure-state GAS generator in each classified wedge.
The last column gives the smallest support size in the control-conjugacy orbit
of the dissipator.}
\label{tab:qubit_pure_gas}
\end{table}

\begin{proof}
Rows $\mathsf{Q2}$ and $\mathsf{Q11}$ are Hamiltonian-only. The unital rows
$\mathsf{Q7}$, $\mathsf{Q8}$, and $\mathsf{Q12}$ are excluded by the $q=0$
endpoint argument. For $0<q<1$, the residual data displayed after
Table~\ref{tab:qubit_strata} reduce the remaining axial cases to
Theorem~\ref{thm:axial_pure_gas}. Stratum $\mathsf{Q4}$ has $n\parallel\hat\nu$.
Stratum $\mathsf{Q5}$ has the $\hat\alpha$ and $\hat\beta$ branches, and
stratum $\mathsf{Q6}$ has $q^2=x=\zeta=1/2$ with $y=0$. The criterion remains a
within-stratum condition in $\mathsf{Q9}$ and $\mathsf{Q10}$. On the magic
locus of $\mathsf{Q10}$, imposing $y=0$ together with
\eqref{eq:pure_gas_discriminant} gives
$q^2=x=\zeta=1/2$, which is the excluded $\mathsf{Q6}$ degeneration. Hence
every interior positive case in $\mathsf{Q10}$ has one-ray support. In
$\mathsf{Q1}$ there is no coherent control and
$W=\R_{\geq0}\Dcal_L$. For $0<q<1$ the two pure candidates have the nonzero
force \eqref{eq:pure_candidate_force}, so neither is stationary. The $q=0$ and $q=1$ arguments above settle all channel endpoints,
including the two cases in $\mathsf{Q3}$. For $\mathsf{Q13}$,
full control contains the $\hat\beta$ one-ray construction for every
$0<q<1$ and the bare damping generator at $q=1$.
\end{proof}

\begin{remark}[Full-control targets]
For $\mathsf{Q13}$, jointly rotating the dissipator, the Hamiltonian, and its
pure fixed point moves the one-ray construction to any point of the Bloch
sphere. If the original laboratory-frame dissipator is held fixed, a constant
generator using a Hamiltonian direction in the coherent-control edge stabilizes
one of its right eigenstates. That direction may be directly actuated or
synthesized in the closed-control sense. Arbitrary targets require the
orbit-conjugated wedge construction.
\end{remark}

\paragraph{Using the classification.}
Separate the jump-free specifications using Table~\ref{tab:qubit_pure_gas}. For a nonzero channel, reduce the jump to principal form and compute the channel discriminants of Section~\ref{sec:channel_normal_form}. Together with the declared coherent controls, these data assign the Levi-data stratum through Table~\ref{tab:qubit_strata}. The stabilization atlas in Table~\ref{tab:qubit_pure_gas} covers the uncontrolled and full-control cases as well. On an interior axial branch, $0<q<1$, evaluate \eqref{eq:control_task_discriminant}. A nonzero value excludes pure-state GAS. If $\Delta=0$ and $y>0$, use one orbit ray with the compensating Hamiltonian in \eqref{eq:pure_gas_signed_condition}. If $\Delta=0$ and $y=0$, use the half-turn pair. At the channel endpoints, $q=0$ admits no pure-state GAS, while $q=1$ requires only the unrotated damping ray.

\section{Structural Interpretation of the Thirteen Strata}
\label{sec:physical_consequences}

Rows \(\mathsf{Q1}\)--\(\mathsf{Q3}\) have solvable type, \(\mathsf{Q11}\) has semisimple type, \(\mathsf{Q4}\), \(\mathsf{Q7}\), \(\mathsf{Q8}\), and \(\mathsf{Q12}\) have mixed reductive type, and \(\mathsf{Q5}\), \(\mathsf{Q6}\), \(\mathsf{Q9}\), \(\mathsf{Q10}\), and \(\mathsf{Q13}\) have mixed non-reductive type. Thus this minimal nontrivial family already realizes all four Levi types.

\paragraph{Solvable and semisimple types.}
For \(\mathsf{Q1}\)--\(\mathsf{Q3}\), Lie's theorem supplies an invariant flag after complexification. This is the representation-theoretic content of solvable type. Exact solvability, conserved observables, and mixing laws require additional dynamical data. At the opposite extreme, Corollary~\ref{cor:hamiltonian_rigidity} implies that a physical wedge with semisimple system algebra is Hamiltonian-only. Accordingly, the sole semisimple stratum \(\mathsf{Q11}\) contains no jump channel and no dissipative pure-state stabilizer.

\paragraph{Mixed types.}
In mixed reductive type the Levi action on the radical is trivial, but the wedge need not split into its semisimple and radical slices. Stratum \(\mathsf{Q4}\) is an explicit example: its four-dimensional wedge is not recovered from the two-dimensional span of its edge line and channel-average ray. In mixed non-reductive type, coupling can arise from a nontrivial Levi action, a nonabelian radical, or both. The Fiber Compatibility Condition in Theorem~\ref{thm:levi_type_main} records how edge motion constrains the corresponding fibers.

\paragraph{Relation to symmetry and spectral classifications.}

The Levi classification and existing symmetry or spectral classifications answer different questions. Symmetry and complex-spacing analyses attach invariants to a selected Lindbladian or ensemble \cite{sa2020complex,sa2023symmetry,kawabata2023symmetry}. Here the input is a generator-labelled control family, and the output describes its generated algebra together with its minimal local wedge. A single Lindbladian viewed as a singleton family generates a one-dimensional solvable algebra, whereas adjoining coherent controls can place the same dissipative direction in a mixed stratum.

\paragraph{What the structure predicts.}
A stratum label can exclude pure-state stabilization or reduce it to a residual condition. The $\mathsf{Q9}$ pair in Example~\ref{ex:q9_control_contrast} lies in the same generated algebra and the same stratum, yet its control-axis modulus changes the answer. Once existence is settled, orbit support distinguishes a single rotated dissipator from an average of several orientations, as Example~\ref{ex:q6_half_turn} demonstrates.

\paragraph{Protected substructures require representation data.}
Determining whether a protected subsystem exists requires the represented noise operators and full generator blocks in addition to the Levi tuple. Pure-state stabilization asks whether some generator in the wedge has a target property. A common decoherence-free subspace or noiseless subsystem instead requires compatibility with every specified generator. A noiseless subsystem is a protected tensor factor within a subspace. A one-dimensional cofactor gives a decoherence-free subspace. A reducing subspace is invariant under both an operator algebra and its adjoints, a dagger-closed algebra is closed under Hermitian adjoints, and its commutant consists of all operators commuting with it. In this setting, the noise-algebra decomposition determines noiseless multiplicities from the represented algebra and its commutant \cite{knill2000theory}. An initialized noiseless subsystem requires unitary evolution of the reduced protected state for every initial state supported in the tensor block, independently of the cofactor state. The necessary-and-sufficient Markovian block conditions of Ticozzi--Viola also allow off-diagonal noise and Hamiltonian terms that compensate each other \cite[Defs.~7--8 and Cor.~3]{ticozzi2008quantum}.

\section{Conclusion}
\label{sec:conclusion}

The drift-free, single-channel qubit family has thirteen Levi-data strata, assigned by the control algebra and channel discriminants. Every classified wedge is global, its reversible edge is the coherent-control algebra, and four strata are Lie semialgebras.

The stabilization atlas turns this structure into a control test. Five strata contain both positive and negative cases, separated by channel moduli. For interior channels with axial control, the scalar criterion resolves this dependence. The $\mathsf{Q9}$ contrast shows that even the generated Lie algebra and stratum together can leave stabilizability undecided. In $\mathsf{Q6}$, stabilization requires two control-conjugate dissipative orientations. Multiple channels introduce relative-orientation data, while one-sided drift changes the algebra or wedge. Those extensions and protected-subsystem questions require further invariants.

\section*{Acknowledgments}
OpenAI Codex (GPT-6) was used primarily for manuscript review. It also assisted with the preparation and checking of auxiliary verification scripts and with language and formatting revisions. The author directed this use, verified the adopted outputs, and takes responsibility for the content of the article.

\section*{Data Availability}
The results are analytical and do not rely on experimental or observational data. The auxiliary scripts used for symbolic and numerical checks are available from the author upon reasonable request.

\appendix

\section{Wedge Construction and Dissipation Identities}
\label{app:wedge_construction}

This appendix supplies the finite construction used in Definition~\ref{def:DLWP} and the dissipation identities used to interpret the classification. The construction applies to a general finite specification, while the qubit classification exploits the more explicit orbit-average formula of Lemma~\ref{lem:qubit_orbit_average}.

\subsection{Construction of the minimal wedge}
\begin{proposition}[The iterative closed-wedge construction yields $W(G)$]
\label{prop:construction_W(G)_saturation}
Set $W_0=\overline{\cone}(G)$. Given $W_k$, let
$E_k=W_k\cap(-W_k)$, let $H_k=\langle\exp(E_k)\rangle$, and define
\[
W_{k+1}=\overline{\cone}\!\left(\Ad_{H_k}W_k\right).
\]
This sequence reaches a fixed point after finitely many edge enlargements, and
that fixed point is the smallest closed Lie wedge in $\g(G)$ containing $G$,
namely $W(G)$. In particular, $W(G)\subseteq L(\Sigma(G))$.
\end{proposition}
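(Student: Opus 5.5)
The plan has three parts. First, show that each $W_k$ is a closed convex cone contained in $\g(G)$ and in every closed Lie wedge containing $G$. Second, show that the edges $E_k$ form an increasing chain of subspaces, so the chain must stabilize. Third, check that the stabilized cone satisfies the Lie-wedge condition \eqref{eq:lie_wedge_condition}.

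\emph{Containment.} We argue by induction. $W_0=\overline{\cone}(G)\subseteq\g(G)$, since $\g(G)$ is a closed subspace. Suppose $W_k\subseteq\g(G)$. Then $E_k$ is a subspace of $\g(G)$, and $\Ad_{e^{X}}=e^{\ad X}$ preserves the subalgebra $\g(G)$ for $X\in E_k$. Hence $\Ad_{H_k}W_k\subseteq\g(G)$, and taking the closed conic hull keeps us inside.

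For minimality, let $W'$ be any closed Lie wedge in $\g(G)$ containing $G$, and suppose $W_k\subseteq W'$. Then $E_k\subseteq E(W')$. The Lie-wedge property gives $e^{\ad X}W'\subseteq W'$ for $X\in E(W')$, and hence the same for products of such exponentials. So $\Ad_{H_k}W_k\subseteq W'$, and since $W'$ is a closed convex cone, $W_{k+1}\subseteq W'$.

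\emph{Monotonicity.} Since the identity lies in $H_k$, we have $W_k\subseteq W_{k+1}$, and therefore $E_k\subseteq E_{k+1}$. These are subspaces of the finite-dimensional space $\g(G)$, so their dimensions can increase at most $\dim\g(G)$ times. Hence there is an index $N$ with $E_{N}=E_{N+1}=\cdots$.

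\emph{Stabilization of the cone.} This is the step I expect to be the main obstacle: once the edge freezes, the cone itself must stop growing. With $E:=E_N$ fixed, the group $H=H_N$ is the same for all later steps. The cone $W_{N+1}=\overline{\cone}(\Ad_H W_N)$ is already $\Ad_H$-invariant, because $\Ad_H$ permutes the generating set $\Ad_H W_N$ and acts linearly and continuously. Therefore $W_{N+2}=\overline{\cone}(\Ad_H W_{N+1})=W_{N+1}$, and the sequence is constant from that point on. The only subtlety is whether the edge could grow again at step $N+1$. By construction it cannot: $E_{N+1}=E_N$ is exactly the stabilization hypothesis. So the process terminates at the first index where two consecutive edges agree, and I would phrase the stopping rule that way.

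\emph{The limit is $W(G)$.} Let $W_\infty$ be the terminal cone. It is closed, convex, contains $G$, and is invariant under $e^{\ad X}$ for every $X$ in its own edge $E(W_\infty)=E$. This holds because $H$ contains $\exp(E)$ and $W_\infty=\overline{\cone}(\Ad_H W_\infty)$. So $W_\infty$ is a closed Lie wedge in $\g(G)$ containing $G$. By the containment step it lies in every such wedge, so $W_\infty=W(G)$; in particular, this gives existence of the minimal wedge.

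Finally, $L(\Sigma(G))\cap\g(G)$ is a closed Lie wedge in $\g(G)$ containing $G$. This is the argument already given before \eqref{eq:saturate_vs_tangent}, using that the tangent wedge is a closed Lie wedge \cite{hilgert1989lie}. Minimality then yields $W(G)\subseteq L(\Sigma(G))$.
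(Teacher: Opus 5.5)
Your proposal is correct and follows essentially the same route as the paper's proof. Induction places every $W_k$ inside each closed Lie wedge containing $G$, the nondecreasing edges $E_k$ stabilize by finite dimensionality, the stabilized cone is invariant under the group generated by its own edge and is therefore the minimal closed Lie wedge, and the tangent-wedge inclusion is \eqref{eq:saturate_vs_tangent}. The only difference is cosmetic: you obtain existence of $W(G)$ directly from the terminal cone, whereas the paper first notes that the closed Lie wedges containing $G$ are stable under intersections.
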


\begin{proof}
Closed Lie wedges containing $G$ form a nonempty family, since $\g(G)$ is one,
and are stable under arbitrary intersections. Hence the smallest such wedge
exists. If $V$ is any closed Lie wedge containing $G$, induction gives
$W_k\subseteq V$: from $E_k\subseteq E(V)$, edge invariance of $V$ absorbs the
$H_k$-orbit and its closed conical hull. The subspaces $E_k$ are nondecreasing,
so finite dimensionality gives $E_{k+1}=E_k$ at some stage. Then $W_{k+1}$ is
invariant under the group generated by its own edge and is therefore a closed
Lie wedge. It lies in every $V$, so $W_{k+1}=W(G)$, and the next step is
unchanged. This orbit-and-edge iteration adapts the template of
\cite[Sec.~II-D]{OMeara2011IllustratingTG}. The inclusion in the tangent wedge
is \eqref{eq:saturate_vs_tangent}.
\end{proof}
\Needspace{8\baselineskip}

\paragraph{Dissipation Functional and Hamiltonian Rigidity.}
The two results below are recalled from Ref.~\cite[Sec.~2.4.3, Lemma~2.16 and Cor.~2.17]{wangthakkar2026geometry}. The Kossakowski trace defines a linear functional whose vanishing on the physical wedge identifies the Hamiltonian edge and constrains the commutator ideal.
\begin{lemma}[Dissipation functional]
\label{lem:dissipation_functional}
Let $\{F_j\}_{j=1}^{d^2-1}$ be a basis of traceless Hermitian operators on $\Hcal$ with Gram matrix
$G_{jk} = \Tr(F_kF_j)$, and let $\Lcal \in \g_{LK}$ have GKLS data $(H, K)$ in that
basis. Then, as a superoperator on $\Bcal(\Hcal)$,
\begin{equation}
\label{eq:dissipation_functional}
\Tr \Lcal \;=\; -\,d\,\Tr(K G),
\end{equation}
independently of $H$. In the orthonormal normalization $G = \mathbb{I}$ this reads
$\Tr\Lcal = -d\,\Tr K$. In the normalization $\Tr(F_iF_j) = 2\delta_{ij}$ used in
the unnormalized Gell--Mann convention, one has
$G = 2\cdot\mathbb{I}$ and $\Tr\Lcal = -2d\,\Tr K$.
\end{lemma}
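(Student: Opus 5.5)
The identity $\Tr\Lcal=-d\,\Tr(KG)$ will follow by computing the superoperator trace of each GKLS piece in the matrix-unit basis of $\Bcal(\Hcal)$. For a superoperator $\Phi$ I will use $\Tr\Phi=\sum_{a,b}\langle a|\Phi(|a\rangle\langle b|)|b\rangle$. The two elementary identities I need are
\[
\Tr\bigl(X\,\cdot\,Y\bigr)=\Tr X\,\Tr Y,\qquad \Tr\bigl(X\,\cdot\bigr)=\Tr\bigl(\cdot\,X\bigr)=d\,\Tr X,
\]
where $X\,\cdot\,Y$ denotes $\rho\mapsto X\rho Y$. The first is the standard vectorization fact $\operatorname{vec}(X\rho Y)=(Y^T\otimes X)\operatorname{vec}\rho$, so the trace is $\Tr Y^T\,\Tr X$. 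The second is the special case where one factor is the identity.

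First I dispose of the Hamiltonian part. The map $-i[H,\cdot]$ has trace $-i(d\Tr H-d\Tr H)=0$, which gives independence of $H$.

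Next, for the dissipative part I expand $\Lcal_D(\rho)=\sum_{j,k}K_{jk}\bigl(F_j\rho F_k^\dagger-\tfrac12\{F_k^\dagger F_j,\rho\}\bigr)$.

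1. The sandwich term contributes $\Tr F_j\,\Tr F_k^\dagger=0$, because the $F_j$ are traceless.
2. The anticommutator term contributes $-\tfrac12\cdot 2d\,\Tr(F_k^\dagger F_j)=-d\,\Tr(F_kF_j)$, using that the $F$ are Hermitian.
3. Summing, I get $\Tr\Lcal=-d\sum_{jk}K_{jk}G_{jk}$. With the stated Gram convention $G_{jk}=\Tr(F_kF_j)$ this is $-d\sum_{jk}K_{jk}G_{kj}^{T}$, which I identify with $-d\,\Tr(KG)$ up to transpose. Since $G$ is symmetric for Hermitian $F$ ($\Tr(F_kF_j)=\Tr(F_jF_k)$), this gives $\Tr(KG^T)=\Tr(KG)$.

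The normalization remarks then follow by substitution: $G=\mathbb I$ gives $-d\Tr K$, and $G=2\mathbb I$ gives $-2d\Tr K$.

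The only point needing care is index bookkeeping. I must ensure that $\sum_{jk}K_{jk}\Tr(F_kF_j)$ matches $\Tr(KG)=\sum_{jk}K_{jk}G_{kj}$ with the paper's convention for $G$; this is resolved by the symmetry of $G$ noted in step 3. A second check is that the GKLS form in a non-orthonormal basis still uses $F_k^\dagger F_j$ ordering as in \eqref{eq:gkls_kossakowski}.
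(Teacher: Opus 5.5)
Your proof is correct and complete. The identities $\Tr(X\,\cdot\,Y)=\Tr X\,\Tr Y$ and $\Tr(X\,\cdot)=\Tr(\cdot\,X)=d\,\Tr X$ remove the Hamiltonian and sandwich terms and leave $-d\sum_{jk}K_{jk}\Tr(F_kF_j)=-d\,\Tr(KG)$. The paper gives no proof of this lemma; it recalls it from \cite{wangthakkar2026geometry}. Your direct computation supplies the missing argument.

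One small correction: the symmetry $G_{jk}=G_{kj}$ follows from cyclicity of the trace alone. Hermiticity is what you actually use to replace $F_k^\dagger$ by $F_k$ in the anticommutator term, and to make $G$ real.
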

Write $\varphi(\Lcal) := \Tr K$ for the orthonormal normalization, so that
$\varphi = -\tfrac{1}{d}\Tr(\cdot)$ is linear on $\g_{LK}$. On the GKLS wedge $K \succeq 0$,
so $\varphi \ge 0$ on $\WGKLS$, and $\varphi(\Lcal) = 0$ forces $K = 0$: the functional
vanishes on $\WGKLS$ exactly at the Hamiltonian generators, which by
Lemma~\ref{lem:gkls_edge} are the edge $E(\WGKLS) \cong \su(d)$. It measures the total
dissipation strength.

\begin{corollary}[Hamiltonian rigidity of the commutator ideal]
\label{cor:hamiltonian_rigidity}
Let $\g \subseteq \g_{LK}$ be a subalgebra and $W \subseteq \WGKLS \cap \g$ a wedge. Then
\begin{equation}
\label{eq:hamiltonian_rigidity}
W \cap [\g,\g] \;\subseteq\; E(\WGKLS) \cap \g ,
\end{equation}
that is, every element of $W$ lying in the commutator ideal is purely Hamiltonian.
Consequently, with $\g = \s \ltimes \rr$ the Levi decomposition:
\begin{enumerate}
  \item $W \cap \s \subseteq E(\WGKLS)$: the wedge meets the Levi factor only in
        Hamiltonian directions.
  \item if $\g$ is semisimple then $W$ consists entirely of Hamiltonian generators, so a
        semisimple system algebra synthesizes no dissipation at all.
  \item $\varphi(X) = \varphi(\pi_{\rr}(X))$ for every $X \in \g$, where $\pi_{\rr}$ is the
        projection onto $\rr$ along the Levi section: all dissipation strength is carried by
        the radical component.
  \item for any $\Lcal_1, \Lcal_2 \in \g_{LK}$, the bracket $[\Lcal_1,\Lcal_2]$ lies in
        $\WGKLS$ if and only if it is purely Hamiltonian.
\end{enumerate}
\end{corollary}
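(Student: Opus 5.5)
The inclusion \eqref{eq:hamiltonian_rigidity} will follow from the dissipation functional of Lemma~\ref{lem:dissipation_functional}. Write $\varphi=-\tfrac1d\Tr(\cdot)$, where $\Tr$ is the superoperator trace on $\Bcal(\Hcal)$. The superoperator trace of a commutator vanishes, $\Tr[\Lcal_1,\Lcal_2]=\Tr(\Lcal_1\Lcal_2-\Lcal_2\Lcal_1)=0$. By linearity, $\varphi$ therefore vanishes on the whole commutator ideal $[\g,\g]$, which is spanned by such brackets. If $X\in W\cap[\g,\g]$, then $X\in\WGKLS$ and $\varphi(X)=0$. Since $\varphi(X)=\Tr K_X$ with $K_X\succeq0$, this forces $K_X=0$, so $X$ is Hamiltonian, i.e.\ $X\in E(\WGKLS)\cap\g$ by Lemma~\ref{lem:gkls_edge}.

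The consequences then follow in turn.

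\emph{Item 1.} Because $\s$ is semisimple, $\s=[\s,\s]\subseteq[\g,\g]$. Hence $W\cap\s\subseteq W\cap[\g,\g]\subseteq E(\WGKLS)$.

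\emph{Item 2.} If $\g$ is semisimple, then $\g=[\g,\g]$, so all of $W$ is Hamiltonian.

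\emph{Item 3.} For $X=\pi_\s(X)+\pi_\rr(X)$ with $\pi_\s(X)\in\s\subseteq[\g,\g]$, linearity gives $\varphi(\pi_\s(X))=0$, and hence $\varphi(X)=\varphi(\pi_\rr(X))$.

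\emph{Item 4.} Apply the same trace argument to the single bracket $[\Lcal_1,\Lcal_2]$ in $\g_{LK}$. Its $\varphi$-value is zero, so if it lies in $\WGKLS$ its Kossakowski matrix vanishes. Conversely, a purely Hamiltonian element lies in the edge $E(\WGKLS)\subseteq\WGKLS$.

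The only delicate point is the identification of $\varphi$ with a multiple of the superoperator trace, independently of $H$. That is exactly the content of Lemma~\ref{lem:dissipation_functional}, which supplies $\Tr\Lcal=-d\Tr K$ in the orthonormal normalization. The Hamiltonian part contributes zero because $\Tr(\ad_{-iH})=0$. I would also record that $\varphi(X)=0$ together with $K\succeq0$ implies $K=0$; this holds because a positive semidefinite matrix with zero trace is zero.
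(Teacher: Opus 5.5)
Your proof is correct and takes the same route the paper indicates: $\varphi=-\tfrac1d\Tr(\cdot)$ by Lemma~\ref{lem:dissipation_functional}, so $\varphi$ vanishes on every commutator, and a positive semidefinite Kossakowski matrix with zero trace must vanish. This gives the main inclusion, and all four consequences then follow using $\s=[\s,\s]$ and the fact that the edge $E(\WGKLS)$ lies in $\WGKLS$.
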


\section{Normalization and Structural Proofs for Qubits}
\label{app:qubit_structural_proofs}

\subsection{Normalization, gauge, and orthogonal symmetries}
\label{app:normalization_proof}

The first proof justifies the equivalences used to identify control specifications. The following propositions explain the role of the traceless-jump gauge and delimit the allowed linear changes of Bloch coordinates.

\paragraph{Proof of Lemma~\ref{lem:qubit_normalization}.}
\begin{proof}
(i) Conjugation by $u$ is an automorphism of $\gLK$ that maps $\WGKLS$ onto itself, carries $\cone(G)$ to $\cone(u \cdot G)$, and commutes with each step of the construction in Proposition~\ref{prop:construction_W(G)_saturation}, because edges, closed conic hulls, and adjoint orbits are all defined from the cone and bracket alone. It therefore maps $(\g(G), W(G))$ isomorphically onto $(\g(u\cdot G), W(u \cdot G))$, matching radicals, Levi factors, edges, radical wedges, and the translation ideal. Every entry of $T$ is a dimension or an isomorphism class of one of these objects.

(ii) A direct calculation gives the jump transformation. For $\Theta \Dcal_L \Theta$ one finds $(L\rho^T L^\dagger)^T = \bar L \rho \bar L^\dagger$ and likewise for the anticommutator, so the conjugated dissipator is $\Dcal_{\bar L}$, which is again traceless. For the coherent part, $(-i[H, \rho^T])^T = -i[-\bar H, \rho]$. Complete positivity is preserved because the Kossakowski matrix transforms as $K \mapsto S\bar{K}S$ with $S = \mathrm{diag}(1,-1,1)$, an entrywise conjugation composed with a real congruence, both of which preserve positive semidefiniteness. Since $\sigma_y$ is the only antisymmetric Pauli matrix, the induced Bloch map is $R_y$. Conjugation by $R_y$ is again an automorphism of the ambient pair, and the argument of (i) applies verbatim. The extension is genuine, as $\det R_y = -1$.

(iii) Let two specifications have bidirectional coherent lists generating the same algebra $\mf e_0$ and dissipative generators on the same positive ray. Any closed Lie wedge containing either coherent list contains that list in its edge. Since the edge is a Lie algebra, it contains $\mf e_0$. Conversely, a closed Lie wedge whose edge contains $\mf e_0$ contains either coherent list. Thus the closed Lie wedges containing either specification are exactly those containing $\mf e_0$ and the shared channel ray. Minimality gives the same $W(G)$, and Lie generation gives the same $\g(G)$. Hence $T(G)$ is unchanged.
\end{proof}

\begin{proposition}[Gauge sensitivity of separated specifications]
\label{prop:qubit_gauge_sensitivity}
Let $L$ be a nonzero traceless jump and $c\in\C$, and put $H_c=\tfrac{i}{2}(\bar cL-cL^\dagger)$. Then
\[
\Dcal_{L+c\mathbb I}=\Dcal_L+\Lcal_{H_c},\qquad
\Lcal_{H-H_c}+\Dcal_{L+c\mathbb I}=\Lcal_H+\Dcal_L
\]
for every Hermitian $H$. If $H_c\neq0$, the positive rays of $\Dcal_{L+c\mathbb I}$ and $\Dcal_L$ differ.
\end{proposition}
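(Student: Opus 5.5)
The first identity I would check directly from the definition of the dissipator, and the second follows from it. The last claim I would get by applying a linear functional that separates positive rays.

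Write $L' = L + c\mathbb I$ and expand
\[
\Dcal_{L'}(\rho)=L'\rho L'^\dagger-\tfrac12\{L'^\dagger L',\rho\}.
\]
The first term is $L\rho L^\dagger+\bar c L\rho+c\rho L^\dagger+|c|^2\rho$. In the anticommutator term we have $L'^\dagger L' = L^\dagger L + cL^\dagger+\bar c L+|c|^2\mathbb I$, so it contributes
\[
-\tfrac12\{L^\dagger L,\rho\}-\tfrac12\{cL^\dagger+\bar cL,\rho\}-|c|^2\rho .
\]
The $|c|^2$ terms cancel. Collecting the linear terms gives
\[
\bar c L\rho+c\rho L^\dagger-\tfrac12\bigl(cL^\dagger\rho+c\rho L^\dagger+\bar cL\rho+\bar c\rho L\bigr)=\tfrac12\bigl(\bar c[L,\rho]-c[L^\dagger,\rho]\bigr).
\]
This must be compared with $-i[H_c,\rho]=-i\cdot\tfrac i2[\bar cL-cL^\dagger,\rho]=\tfrac12[\bar cL-cL^\dagger,\rho]$, and the two agree. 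I would also note that $H_c$ is Hermitian, since $(\tfrac i2(\bar cL-cL^\dagger))^\dagger=-\tfrac i2(cL^\dagger-\bar cL)=H_c$, and traceless because $L$ is. This gives the first identity. The second then follows by linearity, since $\Lcal_{H-H_c}=\Lcal_H-\Lcal_{H_c}$.

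For the ray statement I would argue by contradiction. Suppose $\Dcal_L+\Lcal_{H_c}=\kappa\Dcal_L$ for some $\kappa>0$. Apply the dissipation functional $\varphi$ of Lemma~\ref{lem:dissipation_functional}. It vanishes on Hamiltonian generators and satisfies $\varphi(\Dcal_L)=\Tr K=2|a|^2>0$ in the orthonormal normalization. Hence $\kappa=1$, so $\Lcal_{H_c}=0$, which means $H_c$ is a multiple of the identity. Since $H_c$ is traceless, $H_c=0$, contradicting the hypothesis $H_c\neq0$.

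I expect the only real obstacle to be sign bookkeeping in the commutator comparison. Everything else is routine.
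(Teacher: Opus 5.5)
Your proof is correct and follows essentially the paper's route: the same direct expansion gives $\Dcal_{L+c\mathbb I}-\Dcal_L=\Lcal_{H_c}$, and the ray claim is settled by a linear map that kills Hamiltonian generators but not $\Dcal_L$. The paper uses the full symmetric Bloch matrix, which is unchanged and nonzero under the shift. You use its scalar trace, the Kossakowski-trace functional $\varphi$ of Lemma~\ref{lem:dissipation_functional}; this works equally well, and it extends to any $d$ via $\varphi(\Dcal_L)=\Tr(L^\dagger L)>0$.
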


For $H_c\neq0$, the shifted dissipator has a coherent component in the traceless-jump gauge and is not a channel generator of Definition~\ref{def:qubit_family}. The following proof gives an explicit damping witness whose separated-generator closure has dimension $2$ at $c=0$ and $12$ at $c=3/5$.

\paragraph{Separated-generator gauge witness.}
\label{app:gauge_witness}

\begin{proof}[Proof of Proposition~\ref{prop:qubit_gauge_sensitivity}]
Expanding the shifted dissipator gives $\Dcal_{L+c\mathbb I}-\Dcal_L=\tfrac12[\bar cL-cL^\dagger,\cdot]=\Lcal_{H_c}$, proving the compensated identity. Its nonzero coherent component changes the channel ray when $H_c\neq0$, since the symmetric Bloch matrix remains unchanged and nonzero. Treating the compensated sum as a single autonomous generator therefore does not change the Lie closure it generates. The dimension jump concerns the different separated list $\{Z,Y_c\}$ below, in which the compensating term is not recombined with $Y_c$. In affine Bloch coordinates write $e_z=(0,0,1)^T$ and put
\[
D=\bigl(\operatorname{diag}(-\tfrac12,-\tfrac12,-1),-e_z\bigr),\qquad
Z=(J_z,0),\qquad Y_c=D+c(J_y,0).
\]
For $c=3/5$, define $g_1=Z$, $g_2=Y_c$, and
\[
\begin{aligned}
g_3&=[g_1,g_2],& g_4&=[g_3,g_1],& g_5&=[g_3,g_2],&
g_6&=[g_4,g_2],\\
g_7&=[g_4,g_5],& g_8&=[g_5,g_3],& g_9&=[g_6,g_2],&
g_{10}&=[g_6,g_7],\\
g_{11}&=[g_6,g_8],& g_{12}&=[g_9,g_{11}].
\end{aligned}
\]
Flattening the twelve affine $3\times4$ blocks row by row gives a matrix with determinant
$-3^{20}/(2^4 5^{20})$, so the separated list generates all of
$\gl(3,\R)\ltimes\R^3$. At $c=0$, $[Z,D]=0$, and the Lie algebra generated by $Z$ and $D$ remains two-dimensional. These calculations establish the $2\to12$ change for separated generators.
\end{proof}

\paragraph{Orthogonal Bloch symmetries.}
For the next witness, set $\sigma_-=(\sigma_x-i\sigma_y)/2=|1\rangle\langle0|$ in the computational basis. For a linear map $X$ on qubit operators, use the unnormalized Choi matrix
\[
C_X=\sum_{i,j=0}^{1}|i\rangle\langle j|\otimes X(|i\rangle\langle j|),
\quad |\Omega\rangle=|00\rangle+|11\rangle,
\quad P_\Omega=I-\tfrac12|\Omega\rangle\langle\Omega|.
\]
Conditional complete positivity means that $P_\Omega C_XP_\Omega$ restricted to $\Omega^\perp$ is positive semidefinite.

\begin{proposition}[Maximality of orthogonal Bloch symmetries]
\label{prop:qubit_bloch_symmetry_maximality}
A Bloch conjugation $(Q,b)\mapsto(TQT^{-1},\allowbreak Tb)$ with $T \in \mathrm{GL}(3,\R)$ maps $\WGKLS$ onto itself if and only if $T$ is orthogonal. In particular, $T=2\mathbb I$ maps the extremal damping generator $\Dcal_{\sigma_-}$ to a map that violates conditional complete positivity.
\end{proposition}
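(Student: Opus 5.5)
The plan is to prove the two implications separately. The ``if'' direction comes from the symmetries already established. The ``only if'' direction reduces $T$ first to a positive scalar, using the edge, and then to the identity, using the extremal damping witness.

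\emph{Orthogonal maps preserve the wedge.} Every $T\in\mathrm O(3)$ is either a rotation $R_u$ or $R_y R_u$. Lemma~\ref{lem:qubit_normalization}(i) shows that conjugation by $u\in\mathrm{SU}(2)$ acts on Bloch data as $R_u$ and maps $\WGKLS$ onto itself. Part (ii) of the same lemma shows the same for the transpose map $\Theta$, which acts as $R_y$. The composite therefore preserves $\WGKLS$, and so does its inverse, so the map is onto.

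\emph{Reduction to a scalar.} Suppose conjugation by $T$ maps $\WGKLS$ onto itself. A linear bijection of a cone carries its edge onto the edge of the image. By Lemma~\ref{lem:gkls_edge}, the edge is the set of Hamiltonian rotations $\{(J_n,0)\}=\so(3)$. Hence $TJT^{-1}\in\so(3)$ for every antisymmetric $J$, which means $(TJT^{-1})^T=-TJT^{-1}$. This is equivalent to $T^TT\,J=J\,T^TT$ for all $J\in\so(3)$. Since $\so(3)$ acts irreducibly on $\R^3$, the positive matrix $T^TT$ is a scalar $c^2\mathbb I$. Thus $T=cO$ with $c>0$ and $O\in\mathrm O(3)$. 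Composing with $O^{-1}$, which preserves the wedge by the first step, reduces the problem to $T=c\mathbb I$. On affine data this scalar acts by $(Q,b)\mapsto(Q,cb)$. Because the map is onto, $c^{-1}\mathbb I$ also preserves $\WGKLS$, so it suffices to exclude $c>1$.

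\emph{The damping witness.} Apply the scalar map to $\Dcal_{\sigma_-}$. In the default frame, Lemma~\ref{lem:single_jump_normal_form} gives its Bloch data as $Q=\operatorname{diag}(-\tfrac12,-\tfrac12,-1)$ with $b$ of unit length along $\hat z$. The image $(Q,cb)$ has a symmetric matrix part $Q$, so its coherent part is zero. In any GKLS representation, the symmetric part of $Q$ fixes $\operatorname{Re}K$ and $b$ fixes the antisymmetric $\operatorname{Im}K$ through the map used in the proof of Lemma~\ref{lem:single_jump_normal_form}. So the image has a unique candidate Kossakowski matrix: $\operatorname{Re}K$ is that of $\sigma_-$, and $\operatorname{Im}K$ is scaled by $c$. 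On the $x,y$ block the original matrix is proportional to $\begin{pmatrix}1&i\\-i&1\end{pmatrix}$, which has determinant $0$. After scaling, the block becomes $\begin{pmatrix}1&ic\\-ic&1\end{pmatrix}$, with determinant $1-c^2<0$ for $c>1$. Hence the image of $\Dcal_{\sigma_-}$ lies outside $\WGKLS$, and therefore $c=1$.

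For the explicit statement at $T=2\mathbb I$, I will instead compute the Choi matrix of the image generator. I will then exhibit a vector in $\Omega^\perp$, drawn from the $\{|01\rangle,|10\rangle\}$ block, on which $P_\Omega C_XP_\Omega$ is negative. This confirms the violation of conditional complete positivity directly.

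\emph{Main obstacle.} The delicate step is the uniqueness claim that a Bloch pair $(Q,b)$ with symmetric $Q$ determines $K$ with no Hamiltonian freedom. It requires inverting the map of Lemma~\ref{lem:single_jump_normal_form} for a general $K$, not only for rank one. I would first write this map for arbitrary Hermitian $K$, namely $Q_{\mathrm{sym}}=2(\operatorname{Re}K-\operatorname{tr}K\,\mathbb I)$ in the Pauli-basis normalization and $b\propto\epsilon\cdot\operatorname{Im}K$. Both are linear bijections onto their targets, which gives the uniqueness needed.
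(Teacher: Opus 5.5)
Your proof is correct, but it takes a different route from the paper's. The paper never uses the edge. It polar-decomposes $T=RP$ and uses the $\mathrm O(3)$ symmetries and the group property to reduce to a positive-definite $P\neq\mathbb I$ with an eigenvalue $\lambda>1$. It then conjugates an extremal damping generator whose axis is the corresponding eigenvector. Axial symmetry makes the damping matrix part commute with $P$, while the translation is multiplied by $\lambda$. The attracting fixed point therefore moves to radius $\lambda$, outside the Bloch ball, which no quantum dynamical semigroup allows. That single dynamical witness handles anisotropic and isotropic $P$ together, at the cost of an implicit positivity argument about fixed points.

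You divide the work differently. Preservation of $E(\WGKLS)\cong\so(3)$ alone forces $T^{T}T$ to commute with $\so(3)$. So the Hamiltonian directions exclude every anisotropic $P$ without any dissipative witness, and only dilations remain. You then exclude dilations algebraically. You correctly identify the key point: the map $(H,\operatorname{Re}K,\operatorname{Im}K)\mapsto(\text{antisymmetric part of }Q,\ \text{symmetric part of }Q,\ b)$ is a $3+6+3=12$-dimensional bijection, so the image has a unique Kossakowski matrix. The minor $1-c^2$ then settles the question.

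Your appeal to surjectivity for $c<1$ is essential, not cosmetic. For $0\le c\le1$ the dilation sends $K$ to $\tfrac{1+c}{2}K+\tfrac{1-c}{2}\bar K\succeq0$, so it preserves the wedge without being onto.

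Your planned Choi check for $T=2\mathbb I$ also works. The image is $\tfrac32\Dcal_{\sigma_-}-\tfrac12\Dcal_{\sigma_+}$ with $\sigma_+=\sigma_-^\dagger$. Its compression to $\Omega^\perp$ is $\tfrac32|01\rangle\langle01|-\tfrac12|10\rangle\langle10|$, so $|10\rangle$ gives exactly the eigenvalue $-1/2$ stated in the paper.
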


\begin{proof}
Let $T \in \mathrm{GL}(3,\R) \setminus \mathrm{O}(3)$ and polar-decompose $T = RP$ with $R \in \mathrm{O}(3)$ and $P$ positive definite, $P \neq \mathbb{I}$. Symmetries form a group containing $\mathrm{O}(3)$, so $T$ can be a symmetry only if $P$ is, and after passing to $P^{-1}$ if necessary $P$ has an eigenvalue $\lambda > 1$. Conjugate the extremal damping generator with axis along the corresponding eigenvector: axial symmetry makes its matrix part commute with $P$, the translation along the axis is scaled by $\lambda$, and the fixed point of the flow moves outside the state space, so the conjugated generator violates conditional complete positivity. For $P = 2\,\mathbb{I}$, the conditional Choi compression in the stated unnormalized convention has eigenvalue $-1/2$. The anisotropic choice $P = \mathrm{diag}(2,1,1)$ gives a second witness.
\end{proof}

Parts \textup{(i)} and \textup{(ii)} of Lemma~\ref{lem:qubit_normalization} supply the allowed orthogonal reorientations, while part \textup{(iii)} permits equivalent control lists and positive channel scaling. Corollary~\ref{cor:single_jump_principal_form} gives the principal channel form. Propositions~\ref{prop:qubit_gauge_sensitivity} and~\ref{prop:qubit_bloch_symmetry_maximality} delimit the fixed gauge and the ambient linear symmetries.

\subsection{Edge rigidity}
\label{app:edge_proof}

The nonnegative dissipation functional isolates the Hamiltonian slice at every stage of the wedge construction. This is the step that prevents dissipation from creating additional reversible directions. The final part of the proof treats one-sided Hamiltonian drifts as well as the drift-free family.

\paragraph{Proof of Lemma~\ref{lem:qubit_edge_rigidity}.}
\begin{proof}
(i) The wedge satisfies $W(G) \subseteq \g(G) \cap \WGKLS$, so by Lemma~\ref{lem:gkls_edge} its edge lies in $E(\WGKLS) \cong \su(2)$. Edges of Lie wedges are subalgebras, and $\so(3)$ has no two-dimensional subalgebra, since the bracket of two independent rotation generators points along the cross product of their axes.

(ii) Consider the linear functional $\varphi(X) := \operatorname{tr} K(X)$, the trace of the Kossakowski matrix in a fixed basis. It is nonnegative on $\WGKLS$, vanishes exactly on the Hamiltonian generators, and is invariant under unitary conjugation, which acts on $K$ by congruence with a unitary. The zero set $\{\varphi = 0\} \cap \WGKLS = E(\WGKLS)$ is thus an exposed face fixed by the only adjoint actions that the construction in Proposition~\ref{prop:construction_W(G)_saturation} invokes. Since closed conic hulls of closed conic hulls collapse, the stage-$k$ wedge is $W_k(G) = \overline{\operatorname{cone}}\, \mathrm{Ad}_{H_{k-1}}(G)$, where $H_{k-1}$ is the group generated by the edge subgroups of the earlier stages. 

We claim, by simultaneous induction, that the towers for $G$ and for $G_H$ have equal Hamiltonian slices and equal edges at every stage. Suppose they share their conjugating groups up to stage $k$. An element of $W_k(G) \cap \ker\varphi$ is a limit of combinations $Y^{(n)} + \sum_i \Lambda_i^{(n)}$, with $Y^{(n)}$ a conic combination of rotated Hamiltonian generators and $\Lambda_i^{(n)}$ one of rotated copies of the $i$th dissipative generator with total mass $\lambda_i^{(n)}$. Here additivity enters: $\varphi$ is constant on each orbit, so $\varphi$ of the combination equals $\sum_i \lambda_i^{(n)} \operatorname{tr} K_i \to 0$ with every $\operatorname{tr} K_i > 0$, forcing each $\lambda_i^{(n)} \to 0$. Since the specification is finite and the conjugating groups are compact, each $\Lambda_i^{(n)} \to 0$ and the limit lies in $W_k(G_H) \cap \ker\varphi$. The reverse inclusion is plain. 

Both edges lie in their slices, $\varphi$ vanishing on linear subspaces of the wedge, and conversely the lineality of a slice lies in the edge, so equal slices give equal edges, hence the same conjugating group at stage $k+1$, closing the induction. The final edges then agree.

(iii) By (ii) we may assume $G = G_H$. If the controls span at least a two-dimensional set of axes, or together with the drift lineality do, then $\mf{e} = \so(3)$ and the wedge contains both signs of a generating set of $\so(3)$. The minimal closed Lie wedge containing a symmetric set contains its generated subalgebra, which gives $E = \so(3)$. 

Suppose $\mf{e} = \R J_n$ and some drift $J_m$ has $m \nparallel n$. Decompose $J_m = \cos\theta\, J_n + \sin\theta\, J_{m'}$ with $m' \perp n$. Adding the edge element $-\cos\theta\,J_n \in W$ to $J_m$ shows $J_{m'} \in W$, and conjugating by the half-turn $e^{\pi J_n}$ sends $J_{m'}$ to $-J_{m'}$, so $J_{m'}$ joins the edge. Together with $J_n$ it generates $\so(3)$. There is no further growth, the ambient being three-dimensional. 

Now suppose instead that every drift axis is parallel to $n$. The set $C := \{X \in \WGKLS : Q(X) \in \R J_n\}$ is a closed convex cone, contains $G_H$, has edge $\R J_n$, and is invariant under conjugation by $e^{tJ_n}$, which fixes the condition defining it. Hence $C$ is a closed Lie wedge containing $G_H$, minimality gives $W(G_H) \subseteq C$, and the edge cannot exceed $\R J_n$. 

The case $\mf{e} = 0$ is the same argument with $C = \overline{\cone}(G_H)$, whose lineality is trivial by hypothesis. Finally, for $G \in \mathcal{F}$ there are no drifts, so $\mf{e} = \mf{e}_0$ and only the no-growth branches occur.
\end{proof}

\subsection{Possible Levi factors}
\label{app:levi_menu_proof}

\paragraph{Proof of the semisimple-type menu.}
\begin{proof}
The Levi factor $\s$ of $\g(G) \subseteq \gl(3,\R) \levi \R^3$ is a semisimple subalgebra of the ambient algebra. Its intersection with the ambient radical $\R\mathbb{I} \oplus \R^3$ is a solvable ideal of $\s$, hence zero, so the quotient map onto $(\gl(3,\R) \levi \R^3)/(\R\mathbb{I} \oplus \R^3) \cong \sll(3,\R)$ is injective on $\s$. Complexifying, $\s_\C$ is a semisimple subalgebra of $\sll(3,\C)$ acting faithfully on $\C^3$. A semisimple algebra with more than one simple ideal has no faithful module of dimension three: an irreducible module of a direct sum is an outer tensor product, so a constituent of dimension at most three admits a nontrivial action of at most one simple ideal, and faithfulness would then need two disjoint constituents of dimension at least two, exceeding three. Thus $\s_\C$ is simple with a faithful three-dimensional module, which forces $\s_\C \in \{\sll(2,\C), \sll(3,\C)\}$, because among the simple complex algebras only these admit a faithful module of dimension at most three, the next smallest cases being $\mf{sp}(4,\C)$ with four and $G_2$ with seven. The real forms of $\sll(2,\C)$ are $\sll(2,\R)$ and $\su(2) \cong \so(3)$, and both occur inside $\sll(3,\R)$. If $\s_\C = \sll(3,\C)$ then $\dim_\R \s = 8$, the injection above is onto, and $\s \cong \sll(3,\R)$.
\end{proof}

\section{Levi Reconstruction}
\label{app:levi_gauge_details}

\subsection{Properties and reconstruction of the Levi data}

This appendix recalls the Levi decomposition and reconstruction framework of Ref.~\cite[Sec.~4]{wangthakkar2026geometry}. Fix the DLWP $(\g,W)$ and Levi splitting $\g=\s\ltimes\rr$ of Section~\ref{sec:levi_type_theorem}, with the components of Definition~\ref{def:fiber_cone}.

Theorem~\ref{thm:levi_type_main} gives a fiber-coordinate re-encoding. Under the linear isomorphism $(S,R)\mapsto S+R$, the graph of the fiber map is the wedge itself, so $(\Cquot,S\mapsto W(S))$ and $W$ determine one another.

\begin{lemma}[Structure of the Radical Wedge]
\label{lem:radical_wedge}
The radical wedge $W_{\rr} = W \cap \mf{r}$ is a Lie wedge in the Lie algebra $\mf{r}$. Its edge is $E(W_{\rr}) = E(W) \cap \mf{r}$.
\end{lemma}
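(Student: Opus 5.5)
The statement has two parts, and I would prove them in order: first that $\Wrad=W\cap\rr$ is a Lie wedge in $\rr$, then the identity $E(\Wrad)=E(W)\cap\rr$.

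\emph{$\Wrad$ is a closed convex cone.} The radical $\rr$ is a linear subspace of $\g$, hence closed and convex and stable under nonnegative scaling. Therefore $\Wrad=W\cap\rr$ is again a closed convex cone, now inside the Lie algebra $\rr$.

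\emph{Edge identity.} This is a formal computation with intersections:
\[
E(\Wrad)=(W\cap\rr)\cap(-(W\cap\rr))=(W\cap -W)\cap(\rr\cap-\rr)=E(W)\cap\rr .
\]
It uses only $-\rr=\rr$.

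\emph{Lie wedge condition.} Take $X\in E(\Wrad)$. By the edge identity, $X\in E(W)$ and $X\in\rr$.
\begin{enumerate}
\item Since $X\in E(W)$, the Lie wedge property \eqref{eq:lie_wedge_condition} for $W$ gives $e^{\ad X}W\subseteq W$.
\item Since $X\in\rr$ and $\rr$ is an ideal (in particular a subalgebra), $\ad X$ maps $\rr$ into $\rr$.
\item Because $\rr$ is finite dimensional, the exponential series of $\ad X$ restricted to $\rr$ converges within $\rr$. Hence $e^{\ad X}\rr\subseteq\rr$.
\end{enumerate}
Intersecting the two inclusions gives
\[
e^{\ad X}(W\cap\rr)\subseteq W\cap\rr .
\]

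\emph{Which adjoint action is meant.} The condition should be stated with the adjoint action of $\rr$ on itself, $\ad_{\rr}X=\ad X|_{\rr}$. Its exponential is $e^{\ad X}$ restricted to $\rr$, so the inclusion above is exactly the Lie wedge condition for $\Wrad$ in $\rr$.

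No step poses a genuine obstacle. The only point needing care is the one just made: invariance is required only for the exponentials of edge elements lying in $\rr$, and those preserve $\rr$ because it is a subalgebra. Nothing here depends on the Levi section or on the GKLS structure.
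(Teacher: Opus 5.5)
Your proof is correct, and it is the standard argument for this lemma. The paper does not reproduce a proof but recalls the lemma from its cited Levi-framework reference. The edge identity is the formal intersection computation using $-\rr=\rr$, and invariance holds because $e^{\ad X}$ preserves $W$ for $X\in E(W)$ and preserves the subalgebra $\rr$ for $X\in\rr$. As you note, the argument works for any subalgebra; since $\rr$ is an ideal, one even gets $e^{\ad X}\Wrad\subseteq\Wrad$ for every $X\in E(W)$.
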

\begin{lemma}[Structure of the Quotient Cone]
\label{lem:quotient_cone}
The quotient cone $C_{\text{quot}} = \pi_{\s}(W)$ is a convex cone in $\mf{s}$ (it need not be closed). The projected edge $E_{\text{quot}} = \pi_{\s}(E(W))$ is a subalgebra of $\mf{s}$ and is automatically closed. The quotient cone is invariant under the adjoint action of $E_{\text{quot}}$:
\begin{equation}
e^{\ad(S_0)}(C_{\text{quot}}) \subseteq C_{\text{quot}} \qquad \text{for all } S_0 \in E_{\text{quot}}.
\end{equation}
\end{lemma}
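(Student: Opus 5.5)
The statement is Lemma~\ref{lem:quotient_cone}, and all three claims will come from the fact that the Levi projection $\pi_{\s}$ is a surjective Lie algebra homomorphism with kernel $\rr$. I will check convexity, then the subalgebra property of the projected edge, then invariance, and finally explain why closedness of $\Cquot$ can fail.

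\textbf{Convexity.} Since $\pi_{\s}$ is linear, $\Cquot=\pi_{\s}(W)$ is the image of a convex cone under a linear map. It is therefore closed under addition and under multiplication by nonnegative scalars, so it is a convex cone. Closedness is not inherited, because the linear image of a closed cone need not be closed. To confirm that this caveat is genuine, I would point to a wedge whose recession directions lie asymptotically along $\rr$ (for example, a circular cone tangent to $\rr$). That is the only remark needed for the parenthetical claim.

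\textbf{The projected edge is a closed subalgebra.} The edge $E(W)=W\cap(-W)$ is a linear subspace, and it is a subalgebra because edges of Lie wedges are subalgebras. This fact was already used in the proofs of Lemma~\ref{lem:qubit_orbit_average} and Lemma~\ref{lem:qubit_edge_rigidity}; if a justification is wanted, one differentiates $e^{t\ad X}Y\in W$ and $e^{t\ad X}(-Y)\in W$ at $t=0$ to get $\pm[X,Y]\in W$ by closedness. The image of a subalgebra under a homomorphism is a subalgebra, so $E_{\text{quot}}=\pi_{\s}(E(W))$ is a subalgebra of $\s$. It is automatically closed because it is a linear subspace of the finite-dimensional space $\s$.

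\textbf{Invariance.} Fix $S_0\in E_{\text{quot}}$ and choose a lift $X_0\in E(W)$ with $\pi_{\s}(X_0)=S_0$. Because $\pi_{\s}$ is a homomorphism, $\pi_{\s}\circ\ad X_0=\ad S_0\circ\pi_{\s}$. Exponentiating this relation gives $\pi_{\s}\circ e^{\ad X_0}=e^{\ad S_0}\circ\pi_{\s}$. For $C=\pi_{\s}(Y)$ with $Y\in W$, we get $e^{\ad S_0}C=\pi_{\s}(e^{\ad X_0}Y)$. The Lie wedge condition \eqref{eq:lie_wedge_condition} gives $e^{\ad X_0}Y\in W$, so $e^{\ad S_0}C\in\Cquot$.

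The only subtle point is that the invariance must be stated for projected edge elements rather than for edge elements of $\Cquot$ itself. This is exactly what the lifting argument handles, since a lift $X_0$ always exists by the definition of $E_{\text{quot}}$.
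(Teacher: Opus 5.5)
Your proposal is correct and follows the standard route. The paper does not reprove this lemma; it recalls it from its companion reference. Your argument matches the one underlying it: convexity from linearity of $\pi_{\s}$, the subalgebra property as the homomorphic image of the edge subalgebra, and invariance from the intertwining identity $\pi_{\s}\circ e^{\ad X_0}=e^{\ad S_0}\circ\pi_{\s}$ applied to a lift $X_0\in E(W)$.

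For the non-closedness caveat, your tangent circular cone should be taken pointed, so that the Lie-wedge condition holds vacuously. For example, take $\{x^2\le yr,\ y\ge0,\ r\ge0\}$ in $\sll(2,\R)\oplus\R$, with $(x,y)$ two coordinates of $\sll(2,\R)$ and $r$ the radical coordinate. Its projection onto the Levi factor is $\{y>0\}\cup\{0\}$, which is not closed.
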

\begin{lemma}[Properties of the Fiber Map]
\label{lem:fiber_map_properties}
The Fiber Map $S \mapsto W(S)$ satisfies the following properties:
\begin{enumerate}
    \item $W(S)$ is a nonempty closed convex set in $\mf{r}$ for all $S \in C_{\text{quot}}$.
    \item $W(0) = W_{\rr}$. (The fiber above the origin is the Radical Wedge).
    \item Subadditivity: $W(S_1) + W(S_2) \subseteq W(S_1+S_2)$.
    \item Homogeneity (Scaling): $\lambda W(S) \subseteq W(\lambda S)$ for $\lambda \ge 0$. Since $W$ is a cone, equality holds for $\lambda > 0$.
    \item Closed graph: the graph $\Gamma = \{(S, R) \in \mf{s} \times \mf{r} \mid S \in C_{\text{quot}},\ R \in W(S)\}$ is closed in $\mf{s} \times \mf{r}$.
\end{enumerate}
\end{lemma}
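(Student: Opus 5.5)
All five properties follow from two facts: $W$ is a closed convex cone, and the fixed Levi section gives a linear isomorphism $\Phi:\s\times\rr\to\g$, $(S,R)\mapsto S+R$, under which $\pi_{\s}$ becomes projection onto the first factor. I would first record that, by Definition~\ref{def:fiber_cone}, $W(S)=\{R\in\rr: \Phi(S,R)\in W\}$. Equivalently, $W(S)$ is the preimage of $W$ under the continuous affine map $\iota_S:\rr\to\g$, $R\mapsto S+R$. Each property then reduces to a one-line check.

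For (1), nonemptiness comes from $S\in\Cquot=\pi_{\s}(W)$. Pick $X\in W$ with $\pi_{\s}(X)=S$. Then $R:=X-S$ lies in $\Ker\pi_{\s}=\rr$, and $S+R=X\in W$. Closedness holds because $W(S)=\iota_S^{-1}(W)$ is the preimage of a closed set. For convexity, take $R_1,R_2\in W(S)$ and $t\in[0,1]$. Then $S+(tR_1+(1-t)R_2)=t(S+R_1)+(1-t)(S+R_2)\in W$.

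For (2), unwind the definition: $W(0)=\{R\in\rr:R\in W\}=W\cap\rr=\Wrad$.

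For (3), take $R_i\in W(S_i)$. The sum $(S_1+R_1)+(S_2+R_2)=(S_1+S_2)+(R_1+R_2)$ lies in $W$ because $W$ is closed under addition. This shows $R_1+R_2\in W(S_1+S_2)$, and in passing that $\Cquot$ is closed under addition.

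For (4), take $\lambda\ge0$ and $R\in W(S)$. Then $\lambda S+\lambda R=\lambda(S+R)\in W$ since $W$ is a cone, and $\lambda S\in\Cquot$. When $\lambda=0$, the statement reads $\{0\}\subseteq W(0)$, which holds because $0\in W$. When $\lambda>0$, apply the inclusion to $\lambda S$ with the factor $\lambda^{-1}$. This gives $\lambda^{-1}W(\lambda S)\subseteq W(S)$, hence $W(\lambda S)\subseteq\lambda W(S)$ and equality.

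For (5), I would show that $\Gamma=\Phi^{-1}(W)$. If $(S,R)\in\Gamma$, then $S+R\in W$ by definition. Conversely, if $S+R\in W$ with $S\in\s$ and $R\in\rr$, then $S=\pi_{\s}(S+R)\in\Cquot$ and $R\in W(S)$. Because $\Phi$ is a linear isomorphism of finite-dimensional spaces, it is a homeomorphism, so $\Gamma$ is closed.

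The only point needing care, and the expected main obstacle, is exactly this step. $\Cquot$ itself need not be closed, so closedness of $\Gamma$ cannot be obtained fibrewise over a closed base. The identification $\Gamma=\Phi^{-1}(W)$ avoids this issue: the constraint $S\in\Cquot$ is automatic once $S+R\in W$. That identification in turn relies on $\pi_{\s}$ agreeing with the projection along the fixed Levi section, which is the convention fixed in Section~\ref{sec:levi_type_theorem}.
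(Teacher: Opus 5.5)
Your proof is correct and follows the route the paper indicates. The paper recalls this lemma from its companion reference without giving a separate proof; its only stated justification is the remark that under the linear isomorphism $(S,R)\mapsto S+R$ the graph of the fiber map is the wedge itself, which is exactly your identification $\Gamma=\Phi^{-1}(W)$, and your remaining items are the same elementary consequences of $W$ being a closed convex cone and of $\pi_{\s}$ being the projection along $\rr$ for the fixed Levi section.
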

\begin{proposition}[Necessity of the Fiber Compatibility Condition]
\label{lem:fiber_compatibility_main}
Let $W$ be a Lie wedge in $\g = \mf{s} \ltimes \mf{r}$. Then for every $X_0 = S_0+R_0 \in E(W)$ and every $S \in C_{\text{quot}}$,
\begin{equation}
\label{eq:FCC}
e^{\ad(X_0)}(S+W(S)) \subseteq e^{\ad(S_0)}(S) + W(e^{\ad(S_0)}(S)).
\end{equation}
\end{proposition}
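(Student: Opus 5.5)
Given $X_0=S_0+R_0\in E(W)$, $S\in\Cquot$ and $R\in W(S)$, I will show that $e^{\ad(X_0)}(S+R)$ has $\s$-component $e^{\ad(S_0)}(S)$ and lies in $W$. Its radical component then belongs to the fiber over $e^{\ad(S_0)}(S)$ by the definition of $W(\cdot)$ in Definition~\ref{def:fiber_cone}, which is the inclusion \eqref{eq:FCC}.

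\emph{Membership.} Since $S+R\in W$ and $X_0\in E(W)$, the Lie wedge condition \eqref{eq:lie_wedge_condition} gives $e^{\ad(X_0)}(S+R)\in W$.

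\emph{Projection.} The canonical projection $\pi_\s:\g\to\s\cong\g/\rr$ is a Lie algebra homomorphism, so $\pi_\s\circ\ad(X)=\ad(\pi_\s X)\circ\pi_\s$ on $\g$, where the right-hand $\ad$ is taken in $\s$. Exponentiating this intertwining relation as a power series in linear maps gives
\[
\pi_\s\bigl(e^{\ad(X_0)}Y\bigr)=e^{\ad(\pi_\s X_0)}\pi_\s(Y)
\qquad\text{for all }Y\in\g .
\]
With the fixed Levi section we have $\pi_\s X_0=S_0$, and $\pi_\s(S+R)=S$ because $R\in\rr=\Ker\pi_\s$. Hence the $\s$-component of $e^{\ad(X_0)}(S+R)$ is $S':=e^{\ad(S_0)}(S)$. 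Here $\ad(S_0)$ acts on $\s$ through its identification with $\g/\rr$; since $\s$ is a subalgebra, this agrees with the bracket computed inside $\g$.

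\emph{Conclusion.} Write $e^{\ad(X_0)}(S+R)=S'+R'$ with $R'\in\rr$. Then $S'=\pi_\s$ of an element of $W$, so $S'\in\Cquot$. Since $S'+R'\in W$, we get $R'\in W(S')$. Taking the union over $R\in W(S)$ gives \eqref{eq:FCC}. This also shows that $E_{\mathrm{quot}}$ preserves $\Cquot$, consistent with Lemma~\ref{lem:quotient_cone}.

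The only delicate step is the intertwining identity. The Levi section is a subalgebra but not an ideal, so $e^{\ad(X_0)}$ does not preserve the splitting $\s\oplus\rr$. In general $e^{\ad(X_0)}S$ has a nonzero radical part, produced both by $\ad(R_0)$ and by the non-reductive action of $\s$ on $\rr$. The argument avoids this by computing only the $\s$-coordinate through the quotient homomorphism, and it places all radical contributions, including those coming from $R_0$, in the fiber.
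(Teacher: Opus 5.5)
Your proof is correct and complete. Edge invariance places $e^{\ad(X_0)}(S+R)$ in $W$. Because $\pi_{\s}$ is a Lie algebra homomorphism, it intertwines $e^{\ad(X_0)}$ with $e^{\ad(S_0)}$, so the base point is $e^{\ad(S_0)}(S)\in C_{\text{quot}}$ and the radical component lies in its fiber.

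The paper does not reproduce a proof of this proposition; it recalls the result from Ref.~\cite{wangthakkar2026geometry}. Your argument is the natural one the statement calls for, and it also recovers the $E_{\text{quot}}$-invariance of $C_{\text{quot}}$ in Lemma~\ref{lem:quotient_cone}.
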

We call \eqref{eq:FCC} the Fiber Compatibility Condition (FCC). The fiber on the right is well defined because $e^{\ad(S_0)}(S)\in C_{\text{quot}}$ by Lemma~\ref{lem:quotient_cone}.
\begin{definition}[Admissible fiber data]
\label{def:admissible_fiber_data}
Let $\g = \mf{s} \ltimes \mf{r}$. Suppose we are given a convex cone $C \subseteq \mf{s}$ containing $0$, not assumed closed, together with an assignment of a set $W(S) \subseteq \mf{r}$ to each $S \in C$, subject to the following conditions:
\begin{enumerate}
    \item[(D1)] each $W(S)$ is a nonempty closed convex subset of $\mf{r}$.
    \item[(D2)] (subadditivity) $W(S_1)+W(S_2) \subseteq W(S_1+S_2)$ for all $S_1, S_2 \in C$.
    \item[(D3)] (homogeneity) $\lambda W(S) \subseteq W(\lambda S)$ for all $\lambda \ge 0$ and $S \in C$.
    \item[(D4)] (closed graph) the set $\Gamma = \{(S,R) \mid S \in C,\ R \in W(S)\}$ is closed in $\mf{s} \times \mf{r}$.
\end{enumerate}
Define from these data the candidate edge
\begin{equation}
\label{eq:reconstructed_edge}
\Ecal \;=\; \bigcup_{S \in C \cap (-C)} \bigl(S + (W(S) \cap -W(-S))\bigr),
\end{equation}
and assume in addition:
\begin{enumerate}
    \item[(D5)] (base invariance) $e^{\ad(\pi_{\s}(X_0))}(C) \subseteq C$ for every $X_0 \in \Ecal$.
    \item[(D6)] (compatibility) every $X_0 \in \Ecal$ and every $S \in C$ satisfy the FCC \eqref{eq:FCC}, with $W(\cdot)$ the given assignment.
\end{enumerate}
We call $(C,S\mapsto W(S))$ satisfying \textup{(D1)}--\textup{(D6)} \emph{admissible fiber data}.
\end{definition}
\begin{proposition}[Reconstruction of the Wedge]
\label{lem:sufficiency}
Admissible fiber data in the sense of Definition~\ref{def:admissible_fiber_data} determine the Lie wedge
\begin{equation}
W \;=\; \bigcup_{S \in C} (S + W(S))
\end{equation}
in $\g$, with
\begin{equation}
W \cap \mf{r} = W(0), \qquad \pi_{\s}(W) = C, \qquad E(W) = \Ecal,
\end{equation}
and the fiber map of $W$ in the sense of Definition~\ref{def:fiber_cone} is the given assignment $S \mapsto W(S)$. No other subset of $\g$ has these components.
\end{proposition}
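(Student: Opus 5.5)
The plan is to verify directly that the set $W:=\bigcup_{S\in C}(S+W(S))$ has every required property. I will work throughout with the linear isomorphism $\Phi:\s\times\rr\to\g$, $(S,R)\mapsto S+R$, given by the fixed Levi section. Under $\Phi$, $W$ is exactly the image of the graph $\Gamma$ of (D4). Because the decomposition $X=\pi_{\s}(X)+\pi_{\rr}(X)$ is unique, a point $S+R$ lies in $W$ if and only if $S\in C$ and $R\in W(S)$. This single observation will be used repeatedly.

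\emph{Step 1 (closed convex cone).} Closedness is immediate: $W=\Phi(\Gamma)$, $\Gamma$ is closed by (D4), and $\Phi$ is a linear homeomorphism.

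For the cone property, take $\lambda\geq0$ and $S+R\in W$. Since $C$ is a cone, $\lambda S\in C$, and (D3) gives $\lambda R\in W(\lambda S)$. The case $\lambda=0$ also shows that $0\in W(0)$.

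For convexity, take $S_i+R_i\in W$ and $t\in[0,1]$. By (D3), $tR_1\in W(tS_1)$ and $(1-t)R_2\in W((1-t)S_2)$. Then (D2) places the sum in $W(tS_1+(1-t)S_2)$, and this base point lies in $C$ because $C$ is a convex cone.

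\emph{Step 2 (components).} By the uniqueness of the decomposition, $W\cap\rr$ consists of the points with $S=0$, so $W\cap\rr=W(0)$. Since every fiber is nonempty by (D1), $\pi_{\s}(W)=C$. For $S\in C$, the set $\{R: S+R\in W\}$ is exactly $W(S)$, so the fiber map of $W$ in the sense of Definition~\ref{def:fiber_cone} recovers the given assignment.

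For the edge, $X=S+R$ lies in $W\cap(-W)$ if and only if three conditions hold: $S\in C\cap(-C)$, $R\in W(S)$, and $-R\in W(-S)$. The last two together say $R\in W(S)\cap(-W(-S))$. Hence $E(W)=\Ecal$, matching \eqref{eq:reconstructed_edge}.

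\emph{Step 3 (Lie wedge invariance).} Let $X_0=S_0+R_0\in E(W)=\Ecal$, and let $Y=S+R\in W$. By (D5), $e^{\ad(S_0)}S\in C$, so the fiber $W(e^{\ad(S_0)}S)$ is defined. The compatibility condition (D6), that is, the FCC \eqref{eq:FCC}, gives
\[
e^{\ad(X_0)}Y\in e^{\ad(S_0)}S+W(e^{\ad(S_0)}S)\subseteq W .
\]
Therefore \eqref{eq:lie_wedge_condition} holds, and $W$ is a Lie wedge.

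This is the step that needs the most care. One must check that the FCC is applied with the base point $e^{\ad(S_0)}S$, which is exactly what (D5) supplies. Compatibility with projections is then automatic: $\pi_{\s}$ is a homomorphism, so $\pi_{\s}\circ e^{\ad X_0}=e^{\ad S_0}\circ\pi_{\s}$.

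\emph{Step 4 (uniqueness).} Suppose $W'$ is any subset of $\g$ with $\pi_{\s}(W')=C$ and fiber map $S\mapsto W(S)$. The slice identity \eqref{eq:fiber_union} then forces $W'=\bigcup_{S\in C}(S+W(S))=W$.

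I expect no serious obstacle anywhere in the argument. The only delicate points are bookkeeping: Step 1 needs $C$ to be a convex cone (which is assumed) together with the $\lambda=0$ case of (D3), and Step 3 must place the image base point correctly.
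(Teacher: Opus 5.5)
Your proof is correct and follows the route the paper indicates: the paper defers the details to its cited reference, but it frames the result exactly as you do, with the graph $\Gamma$ of the fiber map equal to the wedge under $(S,R)\mapsto S+R$. As in your argument, (D1)--(D4) then make it a closed convex cone, the edge is computed fiberwise, and (D5)--(D6) give edge invariance. One point is worth stating explicitly: the slice identity you invoke for uniqueness is purely set-theoretic, so it applies to an arbitrary subset $W'$ and not only to wedges.
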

\begin{theorem}[Levi-Type Decomposition Theorem]
\label{thm:levi_type_main}
Let $(\g, W)$ be a Dynamical Lie Wedge Pair and let $\g = \mf{s} \ltimes \mf{r}$ be the Levi decomposition.
\begin{enumerate}
    \item \textbf{(Decomposition)} The components of Definition~\ref{def:fiber_cone} have the following properties. The radical wedge $W_{\rr}$ is a Lie wedge in $\mf{r}$ with edge $E(W_{\rr}) = E(W) \cap \mf{r}$ (Lemma~\ref{lem:radical_wedge}). The quotient cone $C_{\text{quot}}$ is a convex cone in $\mf{s}$, in general not closed, invariant under the adjoint action of the closed subalgebra $E_{\text{quot}} = \pi_{\s}(E(W))$ (Lemma~\ref{lem:quotient_cone}). The fiber map has nonempty closed convex fibers, satisfies $W(0) = W_{\rr}$, is subadditive and positively homogeneous, and has closed graph in $\mf{s} \times \mf{r}$ (Lemma~\ref{lem:fiber_map_properties}). The Fiber Compatibility Condition \eqref{eq:FCC} holds (Proposition~\ref{lem:fiber_compatibility_main}).
    \item \textbf{(Reconstruction)} Conversely, every admissible datum of Definition~\ref{def:admissible_fiber_data} defines, by Proposition~\ref{lem:sufficiency}, the Lie wedge $W = \bigcup_{S \in C}(S+W(S))$, the unique subset of $\g$ with the given components.
    \item \textbf{(Correspondence)} The two constructions are mutually inverse: decomposing a Lie wedge and reconstructing from the resulting data returns the same wedge, and reconstructing from admissible data and decomposing returns the same data.
\end{enumerate}
\end{theorem}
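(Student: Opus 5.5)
The plan is to prove the three parts of Theorem~\ref{thm:levi_type_main} directly from the definitions. Everything rests on two facts. First, under the fixed Levi section the map $(S,R)\mapsto S+R$ is a linear isomorphism $\mf s\times\rr\to\g$. Second, $\pi_{\s}$ is a Lie homomorphism with kernel the ideal $\rr$, so that $\pi_{\s}\circ e^{\ad X}=e^{\ad \pi_{\s}(X)}\circ\pi_{\s}$.

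\textbf{Decomposition.} I would take the four component statements in turn.
\begin{enumerate}
\item For Lemma~\ref{lem:radical_wedge}: $W\cap\rr$ is a closed convex cone, and its edge is $(W\cap\rr)\cap(-W\cap\rr)=E(W)\cap\rr$. If $X\in E(W)\cap\rr$, then $e^{\ad X}$ preserves $W$ by the Lie wedge property and preserves $\rr$ because $\rr$ is an ideal. Hence it preserves $W_{\rr}$, which is therefore a Lie wedge in $\rr$.
\item For Lemma~\ref{lem:quotient_cone}: a linear image of a convex cone is a convex cone. The set $\pi_{\s}(E(W))$ is the homomorphic image of a subalgebra, hence a subalgebra, and it is closed as a linear subspace. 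For $X_0\in E(W)$ with $S_0=\pi_{\s}(X_0)$, the intertwining identity gives $e^{\ad S_0}\pi_{\s}(W)=\pi_{\s}(e^{\ad X_0}W)\subseteq C_{\text{quot}}$.
\item For Lemma~\ref{lem:fiber_map_properties}: each fiber $W(S)$ is the preimage of the closed convex set $W\cap(S+\rr)$ under a translation. Subadditivity and homogeneity follow from closure of $W$ under addition and nonnegative scaling. The graph $\Gamma$ is the preimage of the closed set $W$ under the linear isomorphism, so it is closed; this does not require $C_{\text{quot}}$ to be closed.
\item For the FCC (Proposition~\ref{lem:fiber_compatibility_main}): take $R\in W(S)$. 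Then $e^{\ad X_0}(S+R)\in W$, and its $\s$-component is $e^{\ad S_0}S$. Its $\rr$-component therefore lies in the fiber $W(e^{\ad S_0}S)$.
\end{enumerate}

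\textbf{Reconstruction.} Starting from admissible data, define $W=\bigcup_{S\in C}(S+W(S))$. I would verify the following in order.
\begin{enumerate}
\item $W$ is the image of $\Gamma$ under the isomorphism, so it is closed by (D4).
\item $W$ is a convex cone by (D2) and (D3).
\item Its components are the given ones: $W\cap\rr=W(0)$, $\pi_{\s}(W)=C$ using nonemptiness from (D1), and the fibers are recovered tautologically.
\item The edge identification $E(W)=\Ecal$. An element $X=S+R$ has $\pm X\in W$ exactly when $S\in C\cap(-C)$, $R\in W(S)$ and $-R\in W(-S)$, which is the defining condition \eqref{eq:reconstructed_edge}.
\item The Lie wedge property. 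For $X_0\in E(W)=\Ecal$, the FCC (D6) places $e^{\ad X_0}(S+W(S))$ inside the slice over $e^{\ad S_0}S$. By (D5), $e^{\ad S_0}S$ lies in $C$, so that slice is contained in $W$. Hence $e^{\ad X_0}W\subseteq W$.
\end{enumerate}
Uniqueness follows from \eqref{eq:fiber_union}: any subset with the same base cone and fiber map is the same union of slices.

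\textbf{Correspondence.} Both directions are now bookkeeping. Decomposing a Lie wedge produces data satisfying (D1)--(D4) by Part~1. These data also satisfy (D5)--(D6), because their candidate edge $\Ecal$ equals $E(W)$ by the same computation as in step~4 of the reconstruction, and Part~1 gives the FCC and base invariance for $E(W)$. Reconstruction then returns $W$ by \eqref{eq:fiber_union}. In the other direction, step~3 of the reconstruction already shows that decomposing the reconstructed wedge returns the given data.

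\textbf{Main obstacle.} I expect the subtle step to be the edge identification $E(W)=\Ecal$, together with the observation that (D5)--(D6) are imposed only on $\Ecal$. The Lie wedge property needs invariance under exactly $E(W)$, so the argument hinges on the two sets coinciding. I would first isolate this as a standalone claim about arbitrary closed convex cones written in fiber coordinates. Only after that would I invoke it in both the reconstruction and the correspondence parts.
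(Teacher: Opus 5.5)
Your proposal is correct and takes the same route as the paper. The paper obtains the theorem by assembling Lemmas~\ref{lem:radical_wedge}--\ref{lem:fiber_map_properties} and Propositions~\ref{lem:fiber_compatibility_main} and~\ref{lem:sufficiency}, which it recalls from the companion reference without proof; your direct verifications via the isomorphism $(S,R)\mapsto S+R$, the intertwining $\pi_{\s}\circ e^{\ad X}=e^{\ad\pi_{\s}(X)}\circ\pi_{\s}$, and the identification $E(W)=\Ecal$ supply the arguments those components require.
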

\begin{remark}[Typing and dependence on the Levi section]
Parts 2 and 3 give a canonical correspondence for arbitrary Lie wedges relative to a fixed Levi splitting $\g = \mf{s} \ltimes \mf{r}$. For the DLWP of a physical system, the wedge is $W(G)$, with data $(C_{\text{quot}}, S \mapsto W(S))$. The section is unique up to conjugation by $e^{\ad z}$ with $z$ in the nilradical, the largest nilpotent ideal (Malcev), and such a conjugation transports decomposition data to decomposition data.
\end{remark}

\Needspace{12\baselineskip}
\section{Algebraic Proof of the Qubit Classification}
\label{app:qubit_classification_proofs}

The distinction between the axial branches depends on whether brackets create rotations transverse to the control axis. The first two results settle this question for the matrix parts.

\begin{lemma}[Cross-weight regeneration]\label{lem:cross_weight_regeneration}
Let $\mathcal R_1=\Span\{J_x,J_y\}$ be the weight-one rotation sector, and use the symmetric sectors $\mathcal S_1$, $\mathcal S_2$, and $\mathcal S_0$ of Section~\ref{sec:channel_normal_form}. Then
\[
[\mathcal S_1,\mathcal S_2\oplus\mathcal S_0]\subseteq\mathcal R_1.
\]
For every nonzero $Z\in\mathcal S_2\oplus\mathcal S_0$, some $X\in\mathcal S_1$ has $[X,Z]\neq0$. This rotation generates all of $\mathcal R_1$ under the action of $J_z$.
\end{lemma}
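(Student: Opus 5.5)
The plan is a direct computation in the $3\times3$ matrix picture, organized by weights. The basic fact is that the commutator of two symmetric matrices is antisymmetric, so $[\mathcal S_1,\mathcal S_2\oplus\mathcal S_0]$ lies in $\so(3)=\Span\{J_x,J_y,J_z\}$. Weight considerations then pin down which rotation sector it lands in. The adjoint action of $J_z$ is a derivation, so brackets add or subtract weights (frequencies under $e^{\theta J_z}$). Weight one against weight two produces weights one and three, and weight one against weight zero produces weight one. The antisymmetric matrices carry only weights zero ($J_z$) and one ($\mathcal R_1$). Hence the weight-three component vanishes, and the weight-zero component is also absent because $1\pm2\neq0$ and $1\pm0\neq0$. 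This gives $[\mathcal S_1,\mathcal S_2\oplus\mathcal S_0]\subseteq\mathcal R_1$. If the weight bookkeeping feels too slick, I would fall back on explicit checks with matrix units. For example, $[E_{13}+E_{31},\operatorname{diag}(1,1,-2)]$ and $[E_{13}+E_{31},E_{11}-E_{22}]$ are both multiples of $E_{13}-E_{31}$, which is proportional to $J_y$. The other basis pairs are similar, and each has no $E_{12}-E_{21}$ component.

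For the nondegeneracy claim, write $Z=Z_2+d\,\operatorname{diag}(1,1,-2)$ with $Z_2\in\mathcal S_2$ the traceless symmetric upper-left block. Fix $u\in\R^2$ and take $X=u e_3^T+e_3u^T\in\mathcal S_1$. The commutator $[X,Z]$ has upper-right column equal to $-(Z_2+3d\,\mathbb I_2)u$ (up to a sign convention), and the bottom row is its negative transpose. So $[X,Z]=0$ for all $u$ exactly when $Z_2+3d\,\mathbb I_2=0$. Since $Z_2$ is traceless, taking the trace forces $d=0$, and then $Z_2=0$. Therefore every nonzero $Z$ admits some $X$ with $[X,Z]\neq0$.

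For the generation statement, take a nonzero element $J_w\in\mathcal R_1$ with $w\perp\hat z$. Then $[J_z,J_w]=J_{\hat z\times w}$ is nonzero and orthogonal to $w$ in the plane. Together these span $\mathcal R_1$, so $J_w$ generates $\mathcal R_1$ under $\ad J_z$; equivalently, its $e^{\theta J_z}$-orbit spans $\mathcal R_1$.

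The only delicate step is the nondegeneracy argument. There one has to check that the combined map $Z\mapsto Z_2+3d\,\mathbb I_2$ is injective on $\mathcal S_2\oplus\mathcal S_0$, rather than treating the $\mathcal S_2$ and $\mathcal S_0$ pieces separately. The remaining steps are bookkeeping.
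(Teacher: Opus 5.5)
Your proposal is correct and follows essentially the paper's route. The paper also computes the linear map $X\mapsto[X,Z]$ from $\mathcal S_1$ to $\mathcal R_1$ on the basis $E_{13}+E_{31}$, $E_{23}+E_{32}$. Its determinant $9r^2-p^2-\eta^2$ is exactly $\det(Z_2+3r\,\mathbb I_2)$ in your block notation, with $Z_2=p(E_{11}-E_{22})+\eta(E_{12}+E_{21})$ and $d=r$. The paper then uses $[J_z,J_x]=J_y$ and $[J_z,J_y]=-J_x$ to sweep the resulting rotation through $\mathcal R_1$, as you do.

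The differences are cosmetic. You get the containment from weight bookkeeping where the paper lists the basis brackets; your block formula also shows it directly, since the upper-left block and the $(3,3)$ entry of $[X,Z]$ vanish. You also show the map vanishes only at $Z=0$ by a trace argument rather than by inspecting its explicit entries.
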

\begin{proof}
The brackets are explicit on the sector bases:
\begin{equation}
[E_{13}+E_{31},\, \operatorname{diag}(1,1,-2)] = -3 J_y, \qquad [E_{23}+E_{32},\, \operatorname{diag}(1,1,-2)] = 3 J_x,
\end{equation}
together with $[E_{13}+E_{31},\, E_{11}-E_{22}] = -J_y$, $[E_{13}+E_{31},\, E_{12}+E_{21}] = J_x$, $[E_{23}+E_{32},\, E_{11}-E_{22}] = -J_x$, and $[E_{23}+E_{32},\, E_{12}+E_{21}] = -J_y$. For $Z = p(E_{11}-E_{22}) + \eta(E_{12}+E_{21}) + r\operatorname{diag}(1,1,-2)$ the pairing $X \mapsto [X, Z]$ sends $E_{13}+E_{31} \mapsto \eta J_x - (p+3r) J_y$ and $E_{23}+E_{32} \mapsto (-p+3r) J_x - \eta J_y$, a linear map of $\mathcal S_1$ into $\mathcal R_1$ with determinant $9r^2 - p^2 - \eta^2$. The map vanishes identically only at $Z = 0$, so every nonzero $Z$ produces a nonzero rotation, and it is onto except on the cone $9r^2 = p^2 + \eta^2$, where it has rank one. One nonzero rotation already suffices, since $[J_z, J_x] = J_y$ and $[J_z, J_y] = -J_x$ let $J_z$ rotate it through all of $\mathcal R_1$.
\end{proof}

\begin{corollary}[Axial closure dichotomy]
\label{cor:axial_closure_dichotomy}
Let a traceless symmetric matrix have a nonzero weight-one component. If its weight-two or deviatoric component is nonzero, its Lie closure with $J_z$ contains $\so(3)$ and then $\sll(3,\R)$. If both components vanish, the closure is the three-dimensional algebra $\sll(2,\R)$ and contains no additional rotation, weight-two, or deviatoric component.
\end{corollary}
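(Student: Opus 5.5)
The plan is to work in the weight decomposition relative to $J_z$. Write the traceless symmetric matrix as $S=S_1+S_2+S_0$ with $S_1\in\mathcal S_1$ nonzero, $S_2\in\mathcal S_2$ and $S_0\in\mathcal S_0$. Let $\mathfrak h$ be the Lie closure of $\{J_z,S\}$.

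The first step separates the sectors inside $\mathfrak h$. Conjugation by $e^{\theta J_z}$ preserves $\mathfrak h$, and it rotates the $\mathcal S_1$ and $\mathcal S_2$ components through $\theta$ and $2\theta$ while fixing $\mathcal S_0$. Averaging against the characters $e^{-ik\theta}$, or equivalently applying polynomials in $\ad J_z$, whose eigenvalues on the sectors are $0$, $\pm i$ and $\pm 2i$, projects $S$ onto each sector. Hence $S_1,S_2,S_0\in\mathfrak h$, and $\ad J_z$ gives all of $\mathcal S_1$, together with $\mathcal S_2$ whenever $S_2\neq0$.

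For the first case, suppose $Z:=S_2+S_0\neq0$. Lemma~\ref{lem:cross_weight_regeneration} gives some $X\in\mathcal S_1\subseteq\mathfrak h$ with $[X,Z]$ a nonzero element of $\mathcal R_1$, and $J_z$ then sweeps out all of $\mathcal R_1$. Together with $J_z$ this yields $\so(3)\subseteq\mathfrak h$. Now $\mathfrak h$ contains $\so(3)$ and a nonzero traceless symmetric matrix. The traceless symmetric matrices form the irreducible five-dimensional $\so(3)$-module, so the adjoint orbit spans all of them. Since $\sll(3,\R)=\so(3)\oplus\operatorname{Sym}_0(3,\R)$, we get $\mathfrak h=\sll(3,\R)$. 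The irreducibility of $\operatorname{Sym}_0$ under $\so(3)$ is standard: it is the spin-two representation.

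For the second case, $S_2=S_0=0$, so $\mathfrak h$ is generated by $J_z$ and $\mathcal S_1$. I will show directly that $V:=\R J_z\oplus\mathcal S_1$ is a subalgebra, and I expect this explicit check to be the main computation. By the first step $\mathcal S_1\subseteq\mathfrak h$, so $V\subseteq\mathfrak h$. We already know $[J_z,\mathcal S_1]\subseteq\mathcal S_1$, so it remains to compute $[E_{13}+E_{31},E_{23}+E_{32}]$. This is $E_{12}-E_{21}$, which is $\pm J_z$ up to the sign convention. It is a rotation about $z$, with no $\mathcal R_1$, $\mathcal S_2$ or $\mathcal S_0$ component. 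Brackets of an element of $\mathcal S_1$ with itself vanish. Hence $V$ is closed under the bracket, so $\mathfrak h=V$ and $\dim\mathfrak h=3$.

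To identify $V$, observe that it preserves the indefinite form $\operatorname{diag}(1,1,-1)$, since $\mathcal S_1$ gives the boosts mixing $x,y$ with $z$. So $V=\so(2,1)\cong\sll(2,\R)$. Equivalently, its Killing form has signature $(2,1)$ with the compact direction $J_z$, so $V$ is not $\so(3)$. Because $V=\mathfrak h$, the closure contains no $\mathcal R_1$, $\mathcal S_2$ or $\mathcal S_0$ component, which is the final claim of the statement.
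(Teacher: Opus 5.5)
Your proposal is correct and follows the paper's own argument. Both proofs separate the weight sectors under the axial adjoint action, use the cross-weight regeneration lemma with $J_z$-sweeping to obtain $\so(3)$, invoke irreducibility of the traceless symmetric module to reach $\sll(3,\R)$, and check that $\R J_z\oplus\mathcal S_1$ is closed in the degenerate case. Your explicit bracket $[E_{13}+E_{31},E_{23}+E_{32}]=E_{12}-E_{21}=-J_z$ and the invariant form $\operatorname{diag}(1,1,-1)$, which identifies this subalgebra as $\so(2,1)\cong\sll(2,\R)$, spell out details that the paper only asserts.
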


\begin{proof}
The axial adjoint action separates the weight sectors and spans each nonzero rotating plane. The within-weight brackets $[\mathcal S_2,\mathcal S_2]$ and $[\mathcal S_1,\mathcal S_1]$ lie in $\R J_z$ and produce no transverse rotation. Lemma~\ref{lem:cross_weight_regeneration} therefore supplies $J_x$ and $J_y$ exactly when a nonzero weight-two or deviatoric input is present. They close with $J_z$ to $\so(3)$, and the rotation orbit of a nonzero traceless symmetric matrix spans the five-dimensional traceless symmetric space, giving $\sll(3,\R)$. Without that input, $\R J_z\oplus\mathcal S_1$ is closed under brackets and is a copy of $\sll(2,\R)$.
\end{proof}

The matrix projection alone does not determine the translation part of an affine algebra. When the projected algebra contains the identity and acts irreducibly, the next lemma relates the translation kernel to the existence of a common fixed point.

\begin{lemma}[Affine-kernel dichotomy]
\label{lem:affine_kernel_dichotomy}
Let $V$ be a finite-dimensional real vector space and let $\mathfrak h\subseteq\gl(V)\ltimes V$ be a Lie algebra with matrix projection $\mathfrak l$. Its pure-translation kernel
\[
T:=\mathfrak h\cap V=\ker\bigl(\mathfrak h\longrightarrow\mathfrak l\bigr)
\]
is an $\mathfrak l$-submodule of $V$. Suppose that $\mathbb I\in\mathfrak l$ and that $V$ is irreducible as an $\mathfrak l$-module. Then exactly one of the following holds:
\begin{enumerate}
    \item $T=V$.
    \item $T=0$, and there is a vector $v\in V$ such that
    \[
    \mathfrak h=\{(A,Av):A\in\mathfrak l\}.
    \]
    In this case all affine vector fields in $\mathfrak h$ have the common fixed point $-v$.
\end{enumerate}
If there is no common fixed point, then $T=V$.
\end{lemma}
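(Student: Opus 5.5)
First show that $T$ is an $\mathfrak l$-submodule. For $(A,b)\in\mathfrak h$ and $(0,w)\in T$, the affine bracket formula of Section~\ref{sec:dlwp_definition} gives
\[
[(A,b),(0,w)]=(0,Aw)\in\mathfrak h .
\]
Hence $Aw\in T$ for every $A\in\mathfrak l$. Irreducibility of $V$ then leaves only $T=V$ or $T=0$, so it remains to analyse the case $T=0$.

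Assume $T=0$. The projection $\mathfrak h\to\mathfrak l$ is then injective, so there is a linear map $\beta:\mathfrak l\to V$ with
\[
\mathfrak h=\{(A,\beta(A)):A\in\mathfrak l\}.
\]
Closure under brackets, together with injectivity, forces the translation part of $[(A,\beta(A)),(B,\beta(B))]$ to equal $\beta([A,B])$. This gives the cocycle identity
\[
\beta([A,B])=A\beta(B)-B\beta(A).
\]
Now use $\mathbb I\in\mathfrak l$ and put $v:=\beta(\mathbb I)$. Taking $B=\mathbb I$, and using $[A,\mathbb I]=0$, yields $0=Av-\beta(A)$. Thus $\beta(A)=Av$ for every $A$, which is the graph form $\mathfrak h=\{(A,Av)\}$. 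Each field $x\mapsto Ax+Av$ vanishes at $x=-v$, so $-v$ is a common fixed point.

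Exclusivity is immediate. In case 2 we have $T=0\neq V$, unless $V=0$, in which case the two alternatives coincide trivially. The final sentence is the contrapositive: if there is no common fixed point, case 2 is impossible, so $T=V$.

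The only step needing care is the cocycle extraction. It is legitimate because $T=0$ means the translation part of each element of $\mathfrak h$ is determined by its matrix part. The key point is that the central element $\mathbb I$ makes the cocycle a coboundary, essentially the Whitehead-type vanishing made explicit. Irreducibility is used only for the submodule dichotomy.
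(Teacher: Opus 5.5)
Your proof is correct and follows the paper's argument essentially step for step. The shared steps are: the bracket $[(A,b),(0,w)]=(0,Aw)$ makes $T$ a submodule, irreducibility gives $T\in\{0,V\}$, the case $T=0$ yields a graph with the cocycle identity, and evaluating that identity at the central element $\mathbb I$ gives $\beta(A)=Av$. The only difference is cosmetic: you prove exclusivity by comparing $T=0$ with $T=V$ directly, whereas the paper notes that a nonzero translation has no fixed point.
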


\begin{proof}
The kernel $T$ is an ideal of $\mathfrak h$, and $[(A,b),(0,t)]=(0,At)$, so it is an $\mathfrak l$-submodule. Irreducibility gives $T=0$ or $T=V$. If $T=0$, the projection is an isomorphism and $\mathfrak h$ is the graph of a linear map $c:\mathfrak l\to V$. Closure under brackets gives the cocycle identity
\[
c([A,B])=Ac(B)-Bc(A).
\]
Set $v=c(\mathbb I)$. Since $[\mathbb I,A]=0$, the identity with $(\mathbb I,A)$ reads $0=c(A)-Av$, hence $c(A)=Av$. The affine field $x\mapsto Ax+Av$ vanishes at $x=-v$ for every $A$. A nonzero translation has no fixed point, so the two alternatives are exclusive.
\end{proof}

\begin{lemma}[Parametric affine closures for the qubit strata]
\label{lem:qubit_parametric_affine_closure}
Let $G\in\mathcal F$ contain one channel, write $V=\R^3$, and let $\pi_{\mathrm{lin}}:\gl(V)\ltimes V\to\gl(V)$ be the matrix projection. For $\so(2)_{\hat z}$ control, decompose the channel matrix as
\[
Q=\tau\mathbb I+uS_{13}+vS_{23}+pD+\eta S_{12}+rH,
\]
where $S_{ij}=E_{ij}+E_{ji}$, $D=E_{11}-E_{22}$,
$H=\operatorname{diag}(1,1,-2)$, and
$\tau=\operatorname{tr}(Q)/3=-4|a|^2/3\neq0$. Put
$T=\g(G)\cap V$. The matrix projection and translation kernel in the nontrivial channel rows are as follows.
\begin{enumerate}
    \item In $\mathsf{Q4}$ and $\mathsf{Q5}$, the projection is $\sll(2,\R)\oplus\R$. The kernels are $T=0$ and $T=\hat z^\perp$, respectively.
    \item In $\mathsf{Q6}$, the projection is $\sll(2,\R)\oplus\R\mathbb I$ on its irreducible three-dimensional module, and $T=V$.
    \item In $\mathsf{Q7}$, $\mathsf{Q8}$, and $\mathsf{Q12}$, one has $\g=\gl(3,\R)$ and $T=0$.
    \item In $\mathsf{Q9}$, $\mathsf{Q10}$, and $\mathsf{Q13}$, one has $\g=\gl(3,\R)\ltimes V$ and $T=V$.
\end{enumerate}
\end{lemma}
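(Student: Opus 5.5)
The plan is to compute, row by row, first the matrix projection $\mathfrak l=\pi_{\mathrm{lin}}(\g(G))$ and then the translation kernel $T$. The projection is the Lie algebra generated by $J_z$ (or all of $\so(3)$) together with $Q$. Since $\tau\mathbb I$ is central, $\mathfrak l$ is $\R\mathbb I$ plus the closure of the control rotations with the traceless part $Q_0=Q-\tau\mathbb I$. The identity lies in $\mathfrak l$ because $Q$ and its $J_z$-rotates span the weight-zero component, which has nonzero trace.

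\textbf{Matrix projections.}
\begin{itemize}
\item In $\mathsf{Q4}$ and $\mathsf{Q5}$ ($u=v=0$, $\eta,p$ not both zero), I would show directly that $\R J_z\oplus\mathcal S_2$ closes to $\sll(2,\R)$ acting on the $xy$-plane. The check is $[\mathcal S_2,\mathcal S_2]\subseteq\R J_z$, $[J_z,\mathcal S_2]\subseteq\mathcal S_2$, and $H$ commutes with all of these. This gives $\sll(2,\R)\oplus\R C$ with $C=\tau\mathbb I+rH$ once the $\mathbb I$ and $H$ parts are separated. If the statement's $\R$ is the single weight-zero element, one must check that $\mathbb I$ and $H$ do not separate; they don't, since only $Q$ itself carries them.
\item In $\mathsf{Q6}$, Corollary~\ref{cor:axial_closure_dichotomy} gives $\sll(2,\R)=\R J_z\oplus\mathcal S_1$. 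Here $Q=\tau\mathbb I+Q_{\mathcal S_1}$, so $\mathfrak l=\sll(2,\R)\oplus\R\mathbb I$. This $\sll(2,\R)\cong\so(2,1)$ acts irreducibly on $\R^3$, which can be seen by checking that there is no invariant line or plane, e.g.\ via the $\so(2,1)$ form.
\item In $\mathsf{Q7}$--$\mathsf{Q10}$, the same corollary gives $\sll(3,\R)$, hence $\gl(3,\R)$.
\item In $\mathsf{Q12}$ and $\mathsf{Q13}$, $\so(3)$ together with a nonzero traceless symmetric $Q_0$ yields $\sll(3,\R)$ by the rotation-orbit spanning argument in the same corollary.
\end{itemize}

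\textbf{Translation kernels in the irreducible rows.} For $\mathsf{Q6}$ to $\mathsf{Q13}$, $\mathfrak l$ contains $\mathbb I$ and acts irreducibly on $V$. Lemma~\ref{lem:affine_kernel_dichotomy} therefore gives $T=0$ or $T=V$, decided by whether a common fixed point exists.
\begin{itemize}
\item In the unital rows $\mathsf{Q7}$, $\mathsf{Q8}$, $\mathsf{Q12}$, every generator fixes $0$. Hence $T=0$ and $\g\cong\gl(3,\R)$.
\item In the non-unital rows, a common fixed point $v_\ast$ would be fixed by every rotated copy of the channel. It would also be fixed by the rotations, so it lies on the control axis, or equals $0$ under full control. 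Then $Qv_\ast+b=0$ forces $b=-Qv_\ast$.
\begin{itemize}
\item Under full control, $v_\ast=0$ contradicts $b\neq0$.
\item Under axial control, we would need $v_\ast=c\hat z$ with $Q\hat z=-b/c$. Since $m_1\neq0$, $Q\hat z$ has a nonzero planar component. The fixed point must also survive rotation: $R_\theta(Q\hat z)$ versus $R_\theta b$ gives the same equation, so this is consistent. The real obstruction must instead be that $J_z$ fixes $c\hat z$ only, while brackets produce $J_x,J_y$ (in the $\sll(3)$ rows) that do not fix $c\hat z$ unless $c=0$, and $c=0$ forces $b=0$.
\item For $\mathsf{Q6}$, where no transverse rotation appears, I would instead use the principal form ($q=1/\sqrt2$, $n=(\hat\alpha+\hat\nu)/\sqrt2$). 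The check is that $-Q^{-1}b$ is not parallel to $n$. Indeed $Q^{-1}b\propto\hat\nu$, which is not parallel to $n$.
\end{itemize}
In all non-unital rows this gives $T=V$.
\end{itemize}

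\textbf{Translation kernels in the reducible rows.} $\mathsf{Q4}$ and $\mathsf{Q5}$ are reducible ($V=\hat z^\perp\oplus\R\hat z$), so the dichotomy lemma does not apply and I would argue directly. $T$ is an $\mathfrak l$-submodule, so $T\in\{0,\hat z^\perp,\R\hat z,V\}$.
\begin{itemize}
\item In $\mathsf{Q5}$, bracketing $J_z$ with the channel gives $(\ast,J_zb_\perp)$. Combining with $\mathcal S_2$ brackets and subtracting matrix parts yields pure planar translations, so $\hat z^\perp\subseteq T$. A $\hat z$-translation would require an element with trivial matrix part and $z$-component. Since $\R\hat z$ is invariant under $\mathfrak l$, the $z$-components of translations obey $\dot b_z$ relations that only reproduce $b_z$ times the weight-zero scalar. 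From this, $T\cap\R\hat z=0$.
\item In $\mathsf{Q4}$, with $b\in\R\hat z$, I would find the common fixed point explicitly: $v_\ast=c\hat z$ with $Q_{zz}c+b_z=0$. Since $Q\hat z\parallel\hat z$ when $m_1=0$, this point is fixed by $J_z$ and by $\mathcal S_2$. Hence $\g$ is a graph and $T=0$.
\end{itemize}

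The main obstacle I expect is the no-common-fixed-point argument in the non-unital axial rows, especially $\mathsf{Q6}$ and the $\hat z$-component bookkeeping in $\mathsf{Q5}$. I would first try to settle these by the explicit principal-frame computation rather than abstractly.
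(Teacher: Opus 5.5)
Your outline follows the paper's route: sector brackets for the matrix projection, the affine-kernel dichotomy in the irreducible rows, a common fixed point in $\mathsf{Q4}$, and $J_z$-weight extraction of planar translations in $\mathsf{Q5}$. However, the step you flag as the main obstacle is not resolved, and the obstruction you propose for $\mathsf{Q9}$ and $\mathsf{Q10}$ is invalid. Suppose $v_\ast$ is a common zero of the generators $(J_z,0)$ and $(Q,b)$. Then every element of $\g(G)$ vanishes at $v_\ast$, because the bracket of $(A,a)$ and $(B,a')$ evaluated at $v_\ast$ equals $A(Bv_\ast+a')-B(Av_\ast+a)$. Hence the elements of $\g(G)$ with matrix parts $J_x$ and $J_y$ are $(J_x,-J_xv_\ast)$ and $(J_y,-J_yv_\ast)$, and they do fix $v_\ast$. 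Reading them as pure rotations $(J_x,0)$ and $(J_y,0)$ presupposes $T=V$, which is the conclusion being proved. As written, you therefore have no proof that $T=V$ in $\mathsf{Q9}$ and $\mathsf{Q10}$.

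The missing fact is that $b$ is an eigenvector of $Q$. Since $b=4\alpha\times\beta$ is orthogonal to $\alpha$ and $\beta$, one has $Mb=0$ and $Qb=-2|a|^2b$. For a non-unital channel, $Q$ is invertible, and the channel field has the unique zero $-Q^{-1}b=\frac{2st}{s^2+t^2}\hat\nu$. Your first attempt then closes at once: $cQ\hat z=-b$ gives $\hat z\parallel Q^{-1}b\parallel b$, so $\hat z$ would lie in $\ker M$, contradicting $m_1\neq0$. This is the paper's argument. It also covers $\mathsf{Q6}$ uniformly, without your appeal to the $\mathsf{Q6}$ normal form $q=1/\sqrt2$, $n=(\hat\alpha+\hat\nu)/\sqrt2$, which is only established later in the residual-moduli part of the classification proof.

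Two smaller points also need filling. In $\mathsf{Q8}$ one has $\operatorname{dev}=0$, so invoking the closure dichotomy requires $m_2\neq0$. This holds because a unital $M=\alpha\alpha^T$ has rank one, and $m_2=0$ would make its planar block a scalar of rank at most one, forcing $\alpha_\perp=0$ and hence $m_1=0$. Similarly, $Q-\tau\mathbb I\neq0$ in $\mathsf{Q12}$ and $\mathsf{Q13}$ needs the bound $\operatorname{rank}M\leq2$. Finally, your opening claim that $\mathbb I\in\pi_{\mathrm{lin}}(\g(G))$ is false in $\mathsf{Q4}$ and $\mathsf{Q5}$ when $r\neq0$, where only $C=\tau\mathbb I+rH$ is available. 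Your row-specific treatment of those two strata is nevertheless correct.
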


\begin{proof}
We first determine the matrix projections. The operator $\operatorname{ad}(J_z)^2$ has eigenvalues $0$, $-1$, and $-4$ on the weight-zero, weight-one, and weight-two sectors. The corresponding Lagrange polynomials are
\[
\frac{(\operatorname{ad}(J_z)^2+1)(\operatorname{ad}(J_z)^2+4)}{4},\qquad
-\frac{\operatorname{ad}(J_z)^2(\operatorname{ad}(J_z)^2+4)}{3},\qquad
\frac{\operatorname{ad}(J_z)^2(\operatorname{ad}(J_z)^2+1)}{12}
\]
Applying them to $Q$ yields elements of the bracket-generated linear span and extracts the three isotypic components separately. Thus the following sector calculations do not assume that the nonzero weights can be actuated independently. If $m_1=0$ and $m_2\neq0$, then $u=v=0$ and $p^2+\eta^2>0$. With $S_2=pD+\eta S_{12}$ and $C=\tau\mathbb I+rH$, the four matrices
\[
J_z,\qquad S_2,\qquad [J_z,S_2],\qquad C
\]
belong to the projected closure. In normalized coordinates adapted to the antisymmetric, weight-two, and scalar sectors, their determinant is
$6\tau(p^2+\eta^2)\neq0$. The first three close on the standard
$\sll(2,\R)$ acting in the $xy$-plane, while $C$ commutes with them. No bracket leaves these four sectors, so this is the full projection in both $\mathsf{Q4}$ and $\mathsf{Q5}$.

In $\mathsf{Q6}$, the conditions $m_2=\operatorname{dev}=0$ give
$Q=\tau\mathbb I+S_1$ with $S_1=uS_{13}+vS_{23}$ and
$u^2+v^2>0$. The corresponding four-coordinate minor for
\[
J_z,\qquad S_1,\qquad [J_z,S_1],\qquad\tau\mathbb I
\]
is $3\tau(u^2+v^2)\neq0$. The first three matrices form
$\sll(2,\R)$ on its irreducible three-dimensional module. Equivalently, the equations for a matrix commuting with all three have rank eight and leave only scalar matrices. Thus the projected algebra is
$\sll(2,\R)\oplus\R\mathbb I$ throughout the degeneration.

Now assume $m_1\neq0$ and put $Z=pD+\eta S_{12}+rH$. For $Z\neq0$, Lemma~\ref{lem:cross_weight_regeneration} supplies a transverse rotation, including at the nonzero rank-one pairing points. The scalar part drops out of every bracket, so the regeneration argument of Corollary~\ref{cor:axial_closure_dichotomy} gives $\so(3)$ for $Q$ as well. Its adjoint action generates the five-dimensional traceless symmetric sector, hence $\sll(3,\R)$ throughout $\mathsf{Q7}$--$\mathsf{Q10}$. Subtracting the traceless part from $Q$ supplies $\tau\mathbb I$, so the projection is $\gl(3,\R)$. Under full $\so(3)$ control the same conclusion follows directly: rank at most two prevents the nonzero matrix $M$ from being scalar, so the traceless symmetric part of $Q$ is nonzero. Its rotation orbit and the controls span $\sll(3,\R)$, and $\tau\mathbb I$ completes $\gl(3,\R)$. This proves the projected-algebra claim in $\mathsf{Q12}$ and $\mathsf{Q13}$ as well.

It remains to determine $T$. We first record the physical rank--kernel implications used below. Since $M=\alpha\alpha^T+\beta\beta^T$, one always has $\operatorname{rank}M\leq2$. In the non-unital case $\alpha$ and $\beta$ are independent, so $\operatorname{rank}M=2$ and $b=4\alpha\times\beta$ spans $\ker M$. If $m_1=0$, then $M$ is block diagonal on $\hat z^\perp\oplus\R\hat z$. Its one-dimensional kernel then lies wholly in one of these two invariant blocks. Hence $b_\perp\neq0$ forces $b_z=0$. In the unital case $M=\alpha\alpha^T$ has rank one. If $m_2=0$, its planar block $\alpha_\perp\alpha_\perp^T$ is a scalar multiple of the planar identity. The rank bound of at most one forces that scalar to vanish, whence $\alpha_\perp=0$ and $m_1=0$. Thus $m_1\neq0$ and $m_2=0$ is necessarily non-unital, as required in $\mathsf{Q6}$, while every unital branch with $m_1\neq0$, including $\mathsf{Q8}$, has $m_2\neq0$.

A non-unital channel in principal form has the unique Bloch fixed point
\[
v_\ast=\frac{2st}{s^2+t^2}\,\hat\nu,
\qquad Qv_\ast+b=0.
\]
In $\mathsf{Q4}$ the unital branch fixes the origin. In the non-unital branch, $m_1=0$ and $b\parallel\hat z$ force $\hat\nu\parallel\hat z$, so $v_\ast$ is fixed by the control and by every rotated channel. The affine closure therefore has a common fixed point and is the graph over its four-dimensional matrix projection. In particular, $T=0$.

In $\mathsf{Q5}$, the same condition $m_1=0$ makes $Q$ block diagonal relative to $\hat z^\perp\oplus\R\hat z$, while $b_\perp\neq0$ forces $b_z=0$. The half-turn $R_\pi$ about $\hat z$ fixes $Q$ and sends $b$ to $-b$, so two orbit elements satisfy
\[
(Q,b)-\operatorname{Ad}_{R_\pi}(Q,b)=(0,2b).
\]
Bracketing with $(J_z,0)$ gives $(0,2J_zb)$. The planar minor $b_x^2+b_y^2$ is nonzero. Hence $\hat z^\perp\subseteq T$. Conversely, all projected matrices preserve the displayed block decomposition and all channel translations are planar, so no bracket creates a $\hat z$-translation. Thus $T=\hat z^\perp$.

For $\mathsf{Q6}$, $\mathsf{Q9}$, and $\mathsf{Q10}$, non-unitality and $m_1\neq0$ imply $\hat\nu\not\parallel\hat z$. The channel's unique fixed point is therefore not fixed by $J_z$, so the affine family has no common fixed point. Its matrix projection contains $\mathbb I$ and acts irreducibly on $V$: this was proved directly for $\mathsf{Q6}$, and it is immediate for $\gl(3,\R)$ in $\mathsf{Q9}$ and $\mathsf{Q10}$. Lemma~\ref{lem:affine_kernel_dichotomy} gives $T=V$. For $\mathsf{Q13}$ the full rotation algebra has common fixed space $\{0\}$, while a non-unital channel does not fix the origin, so the same lemma again gives $T=V$. The unital rows $\mathsf{Q7}$, $\mathsf{Q8}$, and $\mathsf{Q12}$ are translation-free and have $T=0$.
\end{proof}

These results determine the affine closure. The following corollaries identify its radical and radical wedge, using the unit-mass axial average $w_0$ from \eqref{eq:axial_unit_mass_average}.

\begin{corollary}[Radicals of the qubit strata]
\label{cor:qubit_parametric_radicals}
The radicals in the nontrivial channel rows are
\begin{enumerate}
    \item $\rr=\R w_0$ in $\mathsf{Q4}$ and $\rr=\R w_0\ltimes\hat z^\perp$ in $\mathsf{Q5}$.
    \item $\rr=\R\mathbb I\ltimes V$ in $\mathsf{Q6}$, $\mathsf{Q9}$, $\mathsf{Q10}$, and $\mathsf{Q13}$.
    \item $\rr=\R\mathbb I$ in $\mathsf{Q7}$, $\mathsf{Q8}$, and $\mathsf{Q12}$.
\end{enumerate}
Thus $\dim\rr=1$ in $\mathsf{Q4}$, $\mathsf{Q7}$, $\mathsf{Q8}$, and $\mathsf{Q12}$. One has $\dim\rr=3$ in $\mathsf{Q5}$ and $\dim\rr=4$ in $\mathsf{Q6}$, $\mathsf{Q9}$, $\mathsf{Q10}$, and $\mathsf{Q13}$.
\end{corollary}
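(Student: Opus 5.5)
The plan is to read the radical off the data of Lemma~\ref{lem:qubit_parametric_affine_closure}. Write $\g=\g(G)$, $\mathfrak l=\pi_{\mathrm{lin}}(\g)$ and $T=\g\cap V$. The kernel $T$ is an abelian ideal of $\g$. The preimage under $\pi_{\mathrm{lin}}$ of $\rad(\mathfrak l)$ is then an extension of a solvable algebra by an abelian ideal, hence solvable. It is also an ideal. Conversely, the projection of the radical of $\g$ is a solvable ideal of $\mathfrak l$, because $\pi_{\mathrm{lin}}$ is a surjective homomorphism. This gives
\[
\rr=\pi_{\mathrm{lin}}^{-1}\bigl(\rad\mathfrak l\bigr)\cap\g .
\]
So in each row it suffices to compute $\rad\mathfrak l$ and to lift it through the known kernel $T$.

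The radicals of the projections are standard. For $\sll(2,\R)\oplus\R$ and $\sll(2,\R)\oplus\R\mathbb I$ the radical is the central summand. For $\gl(3,\R)$ it is $\R\mathbb I$. In $\mathsf{Q6}$, $\mathsf{Q9}$, $\mathsf{Q10}$ and $\mathsf{Q13}$ we have $T=V$, so the lift of $\R\mathbb I$ is all of $\R\mathbb I\ltimes V$, and both summands lie in $\g$. The dimension is $4$. In $\mathsf{Q7}$, $\mathsf{Q8}$ and $\mathsf{Q12}$ we have $T=0$ and $\g=\gl(3,\R)$, so $\rr=\R\mathbb I$.

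The only real work is in $\mathsf{Q4}$ and $\mathsf{Q5}$, where we must identify the element of $\g$ lying over the central matrix $C=\tau\mathbb I+rH$. My candidate is $w_0$. Its matrix part is the weight-zero component of $Q$, which is exactly $C$, by the first Lagrange projector in the proof of Lemma~\ref{lem:qubit_parametric_affine_closure}. Averaging kills the weight-one part of $b$, so its translation part is $b_z\hat z$. The projector is a polynomial in $\ad(J_z,0)$, so applying it to the channel generator shows $w_0\in\g$, and $\pi_{\mathrm{lin}}(w_0)=C\neq0$.

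In $\mathsf{Q4}$, $T=0$ makes $\pi_{\mathrm{lin}}$ injective, so $\rr=\R w_0$. In $\mathsf{Q5}$, $T=\hat z^\perp$, so the lift is $\R w_0\oplus\hat z^\perp$, of dimension $3$. Here $b_z=0$, so $w_0=(C,0)$. The bracket is $[w_0,(0,t)]=(0,Ct)$, which stays in $\hat z^\perp$, so the lift is the semidirect product $\R w_0\ltimes\hat z^\perp$. In each case I will check that the listed subspace lies in $\g$ and is an ideal. This already follows from the preimage description, and the dimension counts then follow.

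The step I expect to be most delicate is justifying the preimage formula: $\rad\g$ equals the full preimage of $\rad\mathfrak l$, and not something smaller. The argument is that the preimage is solvable, being abelian-by-solvable, and is an ideal, so maximality of the radical gives equality.
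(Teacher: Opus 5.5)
Your proposal is correct and follows essentially the same route as the paper. Both identify the radical of $\g(G)$ as the inverse image, under the matrix projection, of the radical of the projected algebra, using the projections and translation kernels $T$ supplied by Lemma~\ref{lem:qubit_parametric_affine_closure}. The paper treats the graph case $T=0$ and the case $T\neq0$ separately, while you handle both at once through the two inclusions.

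Your explicit check that the lift of the central matrix $C=\tau\mathbb I+rH$ in $\mathsf{Q4}$ and $\mathsf{Q5}$ is the orbit average $w_0$ also fills a gap the paper leaves implicit. That lift comes from applying the weight-zero projector, a polynomial in $\ad(J_z,0)$, to the channel generator.
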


\begin{proof}
A graph algebra is isomorphic to its matrix projection, so its radical is the lifted radical of that projection. When $T\neq0$, the inverse image of the projected radical is a solvable ideal and the corresponding quotient is the displayed semisimple algebra. The inverse image is therefore the radical. Lemma~\ref{lem:qubit_parametric_affine_closure} gives the stated formulas and dimensions.
\end{proof}

\begin{corollary}[Radical wedges of the qubit strata]
\label{cor:qubit_parametric_radical_wedges}
The radical wedges in the same rows are as follows.
\begin{enumerate}
    \item In $\mathsf{Q4}$, $\Wrad=\R_{\geq0}w_0$. In $\mathsf{Q5}$, $\Wrad$ is the circular cone \eqref{eq:q5_cone}, which spans $\rr$.
    \item In $\mathsf{Q6}$, $\mathsf{Q8}$, $\mathsf{Q10}$, and $\mathsf{Q12}$, $\Wrad$ is the pointed orbit-average ray. In $\mathsf{Q7}$ and $\mathsf{Q9}$, it is zero.
    \item In $\mathsf{Q13}$, $\Wrad$ is the quadrupole-free average cone of Corollary~\ref{cor:qubit_full_control_radical_wedge}. It spans all four dimensions of $\rr$.
\end{enumerate}
Every listed radical wedge has trivial edge.
\end{corollary}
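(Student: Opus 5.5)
The corollary identifies $\Wrad=W(G)\cap\rr$ row by row. The plan is to combine the radicals of Corollary~\ref{cor:qubit_parametric_radicals} with the orbit-average description of Lemma~\ref{lem:qubit_orbit_average}. In every nonabelian row the radical contains no nonzero rotation component: the radicals listed there are spanned by $w_0$ or $\mathbb I$ together with translations, and none of these carries an antisymmetric part. Hence Lemma~\ref{lem:qubit_orbit_average} applies and gives $\Wrad=\{A_\mu:\mu\geq0,\ A_\mu\in\rr\}$. It also gives pointedness and $E(\Wrad)=0$ uniformly, through the trace identity $\operatorname{tr}=-4|a|^2\lambda(\mu)$. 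The final sentence of the corollary therefore follows once membership is characterized, and what remains is to decide, stratum by stratum, which averages land in $\rr$.

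For $\mathsf{Q4}$ we have $\rr=\R w_0$ and no translations in $T$. The weight-two component of $A_\mu$ is the second Fourier moment of $\mu$ applied to $m_2\neq0$. Membership forces this moment to vanish, and the weight-zero part is then exactly $\lambda w_0$, so $\Wrad=\R_{\geq0}w_0$. For $\mathsf{Q5}$ we simply cite Corollary~\ref{cor:qubit_axial_radical_wedge}, whose hypothesis $\rr=\R w_0\oplus\hat z^\perp$ is supplied by Corollary~\ref{cor:qubit_parametric_radicals}. That corollary gives the circular cone \eqref{eq:q5_cone} and its three-dimensional span.

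For the axial rows with $m_1\neq0$ the radical has the form $\R\mathbb I\oplus(\g\cap\R^3)$, as Lemma~\ref{lem:magic_locus} requires. Since $\mathsf{Q7}$ and $\mathsf{Q9}$ have $\operatorname{dev}\neq0$, that lemma gives $\Wrad=0$. Since $\mathsf{Q6}$, $\mathsf{Q8}$ and $\mathsf{Q10}$ lie on the magic locus, it gives the single pointed orbit-average ray. For $\mathsf{Q12}$, full control and a unital channel, Corollary~\ref{cor:qubit_full_control_radical_wedge} with $b=0$ leaves only the Haar ray. For $\mathsf{Q13}$ the same corollary gives the quadrupole-free cone with $\operatorname{span}\Wrad=\R\mathbb I\oplus\R^3$, which is four-dimensional and hence equal to $\rr$.

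The main obstacle I expect is making sure the hypotheses of the cited lemmas hold exactly on each row, rather than the membership computations themselves. In particular, Lemma~\ref{lem:magic_locus} asserts a single ray even in $\mathsf{Q6}$, where only the first Fourier moment must vanish. There I would check directly that the weight-zero part of $A_\mu$ is $\lambda(\tau\mathbb I+b_z\hat z)$, with $b_z$ fixed by the channel. This confirms that the intersection is one ray rather than a larger cone, even though $T=V$ would permit a wider range of translations.
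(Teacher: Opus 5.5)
Your proposal is correct and follows the paper's proof essentially step for step. You characterize membership through the orbit-average description of Lemma~\ref{lem:qubit_orbit_average}, kill the second Fourier moment in $\mathsf{Q4}$, cite Corollaries~\ref{cor:qubit_axial_radical_wedge} and~\ref{cor:qubit_full_control_radical_wedge} for $\mathsf{Q5}$, $\mathsf{Q12}$ and $\mathsf{Q13}$, apply Lemma~\ref{lem:magic_locus} to $\mathsf{Q6}$--$\mathsf{Q10}$, and use the trace argument for the trivial edges. Your extra $\mathsf{Q6}$ check is also sound, since the planar translation $b_\perp$ co-rotates with $m_1$ and is removed by the same vanishing first moment, leaving $\lambda(\tau\mathbb I,b_z\hat z)$.
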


\begin{proof}
Corollaries~\ref{cor:qubit_axial_radical_wedge} and~\ref{cor:qubit_full_control_radical_wedge} give the $\mathsf{Q5}$, $\mathsf{Q12}$, and $\mathsf{Q13}$ cases. In $\mathsf{Q4}$ the radical is the weight-zero line and the only transverse orbit sector is the nonzero weight-two component. Membership therefore forces its second Fourier moment to vanish and leaves only $\lambda w_0$ with $\lambda\geq0$. Lemma~\ref{lem:magic_locus} gives the zero-versus-ray alternatives in $\mathsf{Q6}$--$\mathsf{Q10}$. Triviality of the edges follows from the trace argument in Lemma~\ref{lem:qubit_orbit_average}.
\end{proof}

\paragraph{Proof of Theorem~\ref{thm:qubit_classification}: exhaustion, realization, and residual moduli.}
The preceding calculations determine the tuple entries on each branch. We now complete Theorem~\ref{thm:qubit_classification} by checking that these branches exhaust the family and that every row is realized, then identify the residual channel moduli.

\begin{proof}
The edge statement is Lemma~\ref{lem:qubit_edge_rigidity}. Lemma~\ref{lem:qubit_normalization} places $\mf{e}_0$ in normal position, and Corollary~\ref{cor:single_jump_principal_form} places the channel in principal phase. Lemma~\ref{lem:qubit_parametric_affine_closure} determines the projected matrix algebra and pure-translation kernel. Corollaries~\ref{cor:qubit_parametric_radicals} and~\ref{cor:qubit_parametric_radical_wedges} determine the radical and radical wedge. These statements hold over the full parameter domain of each nontrivial branch. We organize the remaining discriminant assignment and realization proof by the control algebra, using the normal form \eqref{eq:jump_normal_form}.

\emph{No controls.} A single generator spans a one-dimensional abelian algebra, whatever the channel. The wedge is the ray it generates, the edge is trivial, and $\Wrad = W$. This is $\mathsf{Q1}$. Rows $\mathsf{Q2}$ and $\mathsf{Q11}$ are the jump-free cases. In these cases, $W=\mf e_0$ is a subspace, and the data are immediate.

\emph{Full control $\so(3)$.} Lemma~\ref{lem:qubit_parametric_affine_closure} and Corollaries~\ref{cor:qubit_parametric_radicals}--\ref{cor:qubit_parametric_radical_wedges} give $\mathsf{Q12}$ in the unital branch and $\mathsf{Q13}$ in the non-unital branch, including all affine and radical-wedge entries.

\emph{Control $\so(2)_{\hat z}$ with one channel.} Decompose the pair $(Q, b)$ into weight sectors as in Section~\ref{sec:channel_normal_form}. The linear span of the orbit $\{e^{\theta J_z}\Dcal_L e^{-\theta J_z}\}$ consists of the fixed weight-zero component together with one two-dimensional rotating plane for each nonzero weight-one or weight-two component, with components of equal weight rotating jointly. Brackets act on sectors by
\begin{equation}
\label{eq:sector_brackets}
[\mathcal S_2,\mathcal S_2]\subseteq\R J_z,\qquad
[\mathcal S_1,\mathcal S_1]\subseteq\R J_z,\qquad
[\mathcal S_1,\mathcal S_2\oplus\mathcal S_0]\subseteq\mathcal R_1,\qquad
[\mathcal R_1,\mathcal S_1]\subseteq\mathcal S_2\oplus\mathcal S_0,
\end{equation}
where $\mathcal R_1=\Span\{J_x,J_y\}$ and the affine bracket is given before Definition~\ref{def:W(G)_semigroup}.

If $m_1 = m_2 = 0$ then $M = \mathrm{diag}(c, c, e)$, and the rank bound $\operatorname{rank} M \leq 2$ forces $c = 0$ or $e = 0$, configurations whose translation part lies along $\hat z$. The channel is then axisymmetric about $\hat z$, the two generators commute, and $\g$ is two-dimensional abelian, row $\mathsf{Q3}$. Here $\rr = \g$, the wedge spans it, and the edge line $\R J_z$ lies inside, which gives $\dim\operatorname{span}\Wrad=2$ and $\dim E(\Wrad)=1$.

If $m_1=0\neq m_2$, the same parametric lemma and radical corollaries give $\mathsf{Q4}$ when $b_\perp=0$ and $\mathsf{Q5}$ when $b_\perp\neq0$. In the non-unital branches, $b\parallel\hat\nu$ and the block condition makes $\hat z$ a principal eigenaxis: $b\parallel\hat z$ exactly when $\hat z\parallel\hat\nu$, while $b\perp\hat z$ selects $\hat\alpha$ or $\hat\beta$ (with the planar degeneracy at $q=1$). In the unital branch, $m_1=0\neq m_2$ makes the dephasing axis transverse to $\hat z$.

If $m_1\neq0$, let $Z\in\mathcal S_2\oplus\mathcal S_0$ be the sum of the weight-two and deviatoric projections of $Q$, using the embeddings in Section~\ref{sec:channel_normal_form}. For $Z\neq0$, Lemma~\ref{lem:qubit_parametric_affine_closure} gives the nine-dimensional unital rows $\mathsf{Q7}$ and $\mathsf{Q8}$ or the twelve-dimensional non-unital rows $\mathsf{Q9}$ and $\mathsf{Q10}$. Lemma~\ref{lem:magic_locus} distinguishes the zero and ray radical wedges within each pair. For $Z=0$, the same closure lemma and radical corollaries give the seven-dimensional $\mathsf{Q6}$ row with four-dimensional radical and pointed uniform-average ray.

The thirteen rows exhaust the discriminant combinations. For $m_1=0$, the pair $(m_2,b)$ gives the three cases $\mathsf{Q3}$--$\mathsf{Q5}$. For $m_1\neq0$, $Z=0$ is the necessarily non-unital degeneration $\mathsf{Q6}$. When $Z\neq0$, unitality separates the nine- from the twelve-dimensional closures, and the magic condition separates $\mathsf{Q7}$ from $\mathsf{Q8}$ and $\mathsf{Q9}$ from $\mathsf{Q10}$. Whenever $Z\neq0$ and $\operatorname{dev}=0$, one has $m_2\neq0$. The parametric closure and radical results determine every tuple entry on these branches, so no further tuple can occur. The row witnesses prove realization.

It remains to identify the residual channel-orbit moduli. With no controls, the principal ratio $q=t/s$ is the only channel modulus, giving $\mathsf{Q1}$. Full control removes the principal frame and leaves $q=0$ in $\mathsf{Q12}$ and $q\in(0,1]$ in $\mathsf{Q13}$. The jump-free rows $\mathsf{Q2}$ and $\mathsf{Q11}$ are single control-algebra orbits. For $\mathsf{Q3}$, $m_1=m_2=0$ gives $M=\operatorname{diag}(r,r,u)$, and $\det M=r^2u=0$ leaves, after positive scaling, only axial dephasing and axial extremal damping. For $\mathsf{Q4}$ and $\mathsf{Q5}$ the same eigenvector reduction gives the branches listed in the residual-data display of Section~\ref{sec:qubit_strata}: in $\mathsf{Q4}$, $\hat\nu\parallel\hat z$ for $q>0$, while in $\mathsf{Q5}$ the control axis is $\hat\alpha$ or $\hat\beta$, with the two branches merging when $q=1$.

The exceptional row $\mathsf{Q6}$ is also discrete. After an orthogonal reorientation of the $xy$-plane fixing $\hat z$, its normal form, with $r>0$ and $c\neq0$, is
\[
M=\begin{pmatrix}r&0&c\\0&r&0\\c&0&r\end{pmatrix}.
\]
The conditions $m_2=\operatorname{dev}(M)=0$ and $m_1\neq0$ are then explicit. Positivity and $\operatorname{rank}M\le2$ force $|c|=r$. A reflection removes the sign, and the eigenvalues are $(2r,r,0)$. Thus $q=1/\sqrt2$ and $(x,y,\zeta)=(1/2,0,1/2)$ up to the stated equivalences, so the relative principal frame is fixed. In the unital rows $q=0$, and the dephasing-channel stabilizer leaves only $x=|\hat z\cdot\hat\alpha|^2$: $x=1/3$ is the single orbit $\mathsf{Q8}$, while its complement in $(0,1)$ is $\mathsf{Q7}$. In the remaining non-unital rows, $(q,[\hat z]_{\mathrm{pr}})$ gives the stated continuous moduli. Substitution of the general principal-frame coordinates defined in Section~\ref{sec:qubit_strata} into $\operatorname{dev}(M)=0$ gives the magic equation stated there. Requiring $m_1\neq0$ and $m_2\neq0$ removes the lower-dimensional degenerations, including $\mathsf{Q6}$. This completes the residual-moduli classification.
\end{proof}

\phantomsection
\printbibliography[heading=bibliography]

\end{document}